\newcommand{\Cov}{\mathrm{Cov}}
\DeclareMathOperator*{\argmin}{arg\,min}
\DeclareMathOperator{\tr}{tr}
\theoremstyle{plain}
\newtheorem{observation}{Observation}
\title{P-splines with an $\ell_1$ penalty for repeated measures}
\author{Brian D. Segal$^*$, Michael R. Elliott, Thomas Braun, and Hui Jiang \\
\normalsize Department of Biostatistics \\
\normalsize University of Michigan \\
\normalsize Ann Arbor, MI \\
\normalsize $^*$\href{mailto:bdsegal@umich.edu}{bdsegal@umich.edu}}
\begin{document}

\begin{singlespace}
\maketitle
\end{singlespace}

\begin{quotation}
P-splines are penalized B-splines, in which finite order differences in coefficients are typically penalized with an $\ell_2$ norm.  P-splines can be used for semiparametric regression and can include random effects to account for within-subject variability. In addition to $\ell_2$ penalties, $\ell_1$-type penalties have been used in nonparametric and semiparametric regression to achieve greater flexibility, such as in locally adaptive regression splines, $\ell_1$ trend filtering, and the fused lasso additive model. However, there has been less focus on using $\ell_1$ penalties in P-splines, particularly for estimating conditional means.

In this paper, we demonstrate the potential benefits of using an $\ell_1$ penalty in P-splines with an emphasis on fitting non-smooth functions. We propose an estimation procedure using the alternating direction method of multipliers and cross validation, and provide degrees of freedom and approximate confidence bands based on a ridge approximation to the $\ell_1$ penalized fit. We also demonstrate potential uses through simulations and an application to electrodermal activity data collected as part of a stress study.
\end{quotation}



\tableofcontents

\section{Introduction}

Many nonparametric regression methods, including smoothing splines and regression splines, obtain point estimates by minimizing a penalized negative log-likelihood function of the form $l_{\text{pen}} = -l(\bm{\beta}) + \lambda P(\bm{\beta})$, where $l$ is a log-likelihood, $P$ is a penalty term, $\lambda > 0$ is a smoothing parameter, and $\bm{\beta}$ are the coefficients to be estimated. Typically, quadratic ($\ell_2$ norm) penalties are used, which lead to straightforward computation and inference. In particular, $\ell_2$ penalties typically lead to ridge estimators, which have both closed form solutions and are linear smoothers. The $\ell_2$ penalty also has connections to mixed models, which allows the smoothing parameters to be estimated as variance components \citep{green1987, speed1991blup, wang1998mixed, zhang1998}.

However, nonparametric regression methods that use an $\ell_1$-type penalty, such as $\ell_1$ trend filtering 
\citep{kim2009ell1} and locally adaptive regression splines \citep{mammen1997locally}, are better able to adapt to local differences in smoothness and achieve the minimax rate of convergence for weakly differentiable functions of bounded variation \citep{tibshirani2014adaptive}, whereas $\ell_2$ penalized methods do not \citep{donoho1998}. The trade-off is that $\ell_1$ penalties generally lead to more difficult computation and inference because the objective function is convex but non-differentiable, and the fit is no longer a linear smoother.

In this article, we propose P-splines with an $\ell_1$ penalty as a framework for generalizing $\ell_1$ trend filtering within the context of repeated measures data and semiparametric (additive) models \citep{hastie1986generalized}. In Section \ref{sec_pspline_l1}, we discuss the connection between P-splines and $\ell_1$ trend filtering which motivates the methodological development. In Section \ref{sec_model}, we present our proposed model, and in Section \ref{sec_related}, we discuss related work. In Section \ref{sec_est}, we propose an estimation procedure using the alternating direction method of multipliers (ADMM) \citep[see][]{boyd2011distributed} and cross validation (CV). In Section \ref{sec_edf}, we derive the degrees of freedom and propose computationally stable and fast approximations, and in Section \ref{sec_inf}, we develop approximate confidence bands based on a ridge approximation to the $\ell_1$ fit. In Section \ref{sec_sim}, we study our method through simulations and evaluate its performance in fitting non-smooth functions. In section \ref{sec_app}, we demonstrate our method in an application to electrodermal activity data collected as part of a stress study. We close with a discussion in Section \ref{discussSec}.

\section{P-splines and $\ell_1$ trend filtering \label{sec_pspline_l1}}

In this section, we give brief background on P-splines and $\ell_1$ trend filtering, and show the relation between them when the data are independent and identically distributed (i.i.d.) normal.

P-splines \citep{eilers1996flexible} are penalized B-splines \citep[see][]{de1978practical}. B-splines are flexible bases that are notable in part because they have compact support, which leads to banded design matrices and faster computation. This compact support can be seen in Figure \ref{BsplineWide}, which shows eight evenly spaced first degree and third degree B-spline bases on $[0,1]$. We can define an order $M$ (degree $M-1$) B-spline basis with $j=1,\ldots,p$ basis functions recursively as \citep{de1978practical}
\begin{align*}
\phi_j^m(x) &= \frac{x - t_j}{t_{j+m-1} - t_j} \phi_j^{m-1}(x) + \frac{t_{j+m} - x}{t_{j+m} - t_{j+1}} \phi_{j+1}^{m-1}(x), && j = 1,\ldots, 2M + c -m, \\ & && 1 < m \le M \\
\phi_j^1(x) &= 
\begin{cases}
1 & t_j \le x < t_{j+1} \\
0 & \text{otherwise}
\end{cases}, && j = 1,\ldots,2M +c -1
\end{align*}
where $t_j$ are the knots, division by zero is taken to be zero, and $c$ is the number of internal knots. For order $M$ B-splines defined on the interval $[x_{\min}, x_{\max}]$, in order to obtain $j=1,\ldots,p$ basis functions, we set $2M$ boundary knots ($M$ knots on each side) and $c=p-M$ interior knots. In general, one can set $t_1 \le t_2 \le \cdots \le t_M = x_{\min} < t_{M+1} < \cdots < t_{M+c} < x_{\max} = t_{M+c+1} \le t_{M+c+2} \le \cdots \le t_{2M + c}$. In order to ensure continuity at the boundaries, we set $t_1 < t_2 < \cdots < t_{M-1} < t_M = x_{\min}$ and $x_{\max} = t_{M+c+1} < t_{M+c+2} < \cdots < t_{2M + c}$. We also use equally spaced interior knots, which is important for the P-spline penalty, and drop the superscript on $\phi$ designating order when the order does not matter or is stated in the text.

\begin{figure}[H]
\centering
\includegraphics[scale=0.55]{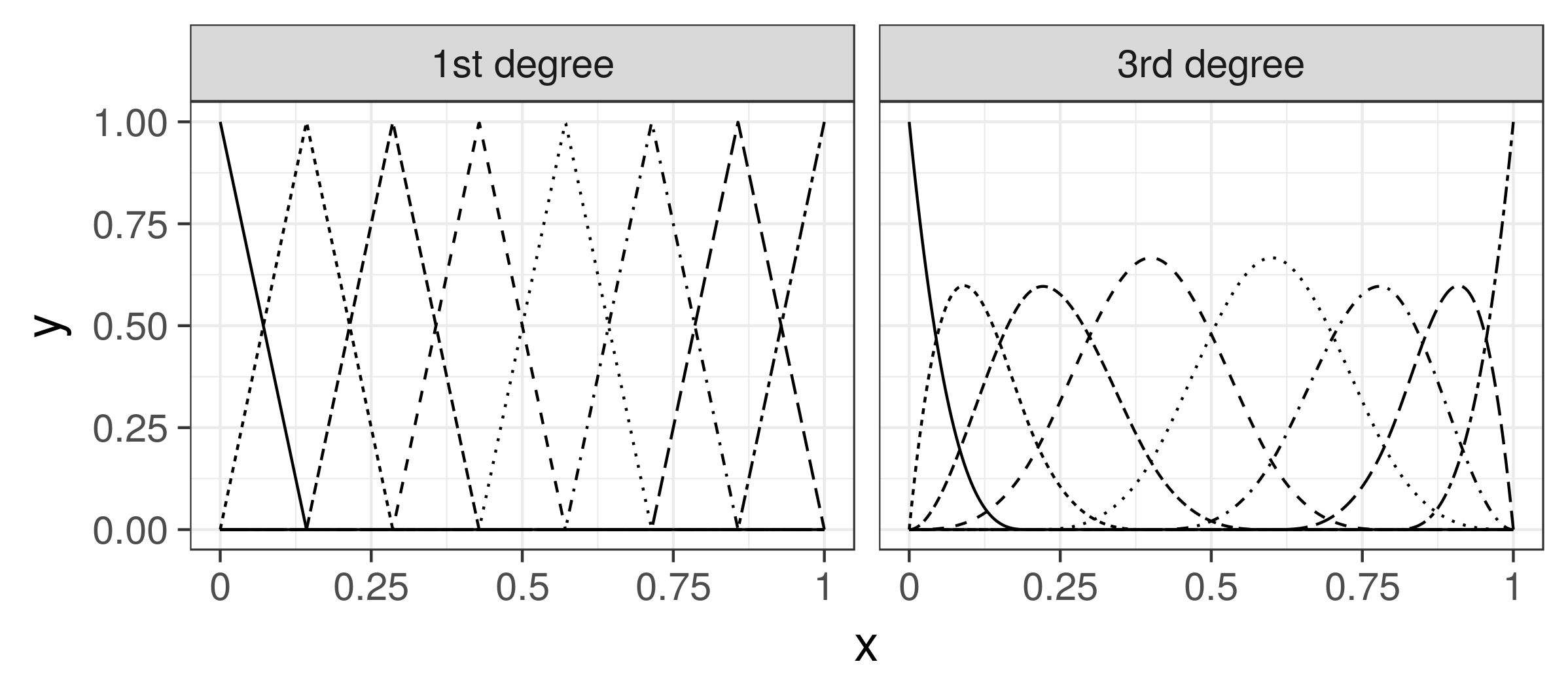}
\caption{Eight evenly spaced B-spline bases on $[0,1]$}
\label{BsplineWide}
\end{figure}

B-spline bases can be used to fit nonparametric models of the form $y(x) = f(x) + \epsilon(x)$, where $y(x)$ is the outcome $y$ at point $x$, $f(x)$ is the mean response function at $x$, and $\epsilon(x)$ is the error at $x$. To that end, let $\bm{y}=(y_1,\ldots,y_n)^T$ be an $n \times 1$ vector of outcomes and $\bm{x} = (x_1 \ldots, x_n)^T$ be a corresponding $n \times 1$ vector of covariates. Also, let $\phi_1,\ldots,\phi_p$ be B-spline basis functions and let $F$ be an $n \times p$ design matrix such that $F_{ij} = \phi_j(x_i)$, i.e., the $j^{th}$ column of $F$ is the $j^{th}$ basis function evaluated at $x_1,\ldots,x_n$. Equivalently, the $i^{th}$ row of $F$ is the $i^{th}$ data point evaluated by $\phi_1,\ldots,\phi_p$. For i.i.d. normal $\bm{y}$, a simple linear P-spline model with the standard $\ell_2$ penalty can be written as
\begin{equation}
\hat{\beta}_0, \bm{\hat{\beta}} = \argmin_{\beta_0 \in \mathbb{R}, \bm{\beta} \in \mathbb{R}^p} \frac{1}{2}\| \bm{y} - \beta_0 \bm{1} - F\bm{\beta} \|_2^2 + \frac{\lambda}{2} \|D^{(k+1)} \bm{\beta} \|_2^2,
\label{pspline}
\end{equation}
where $\beta_0$ is the intercept, $\bm{\beta}$ is a $p \times 1$ vector of parameter estimates, $\bm{1}$ is an $n \times 1$ vector with each element equal to 1, $\lambda>0$ is a smoothing parameter, and $D^{(k + 1)} \in \mathbb{R}^{(p-k-1) \times p}$ is the $k+1$ order finite difference matrix. For example, for $k=1$
\begin{equation}
D^{(2)}  =
\begin{bmatrix}
1 & -2 & 1 &  \\
 & 1 & -2 & 1 \\
  & &   \ddots & \ddots & \ddots \\
  & &  & 1 & -2 & 1 \\
&   & &  & 1 & -2 & 1
\end{bmatrix} \in \mathbb{R}^{(p-2) \times p}
\label{Dk}
\end{equation}
In general, as described by \citet{tibshirani2014adaptive}, $D^{(k+1)} = D^{(1)} D^{(k)}$ where $D^{(1)}$ is the $(p-k-1) \times (p-k)$ upper left matrix of:
\begin{equation}
D^{(1)} = 
\begin{bmatrix}
-1 & 1 &  \\
 & -1  & 1 &  \\
 & & \ddots & \ddots \\
 &  &  & -1 & 1
\end{bmatrix} \in \mathbb{R}^{(p-1) \times p}.
\label{D}
\end{equation}
Our proposed model builds on one in which the $\ell_2$ penalty in (\ref{pspline}) is replaced with an $\ell_1$ penalty:
\begin{equation}
\hat{\beta}_0, \bm{\hat{\beta}} = \argmin_{\beta_0 \in \mathbb{R}, \bm{\beta} \in \mathbb{R}^p} \frac{1}{2}\| \bm{y} - \beta_0 \bm{1} - F\bm{\beta} \|_2^2 + \lambda \|D^{(k+1)} \bm{\beta} \|_1.
\label{psplinel1}
\end{equation}

Letting $f(x) = \sum_{j=1}^p \beta_j \phi_j^M(x)$, for order $M=4$ B-splines, \citet{eilers1996flexible} show that 
\begin{equation*}
\int_{x_{\min}}^{x_{\max}} \left( \frac{d^{2}}{dx^{2}} f(x) \right)^2 dx = c_1 \|D^{(2)} \bm{\beta} \|_2^2 + c_2 \sum_{j=4}^p \nabla^2 \beta_j \nabla^2 \beta_{j-1}
\end{equation*}
where $\nabla^2$ is the second-order backwards difference and $c_1$ and $c_2$ are constants. As shown in Appendix \ref{controlSmooth}, a similar result holds for P-splines with an $\ell_1$ penalty. In particular, for $0 \le k < M-1$,
\begin{equation*}
\int_{x_{\min}}^{x_{\max}} \left| \frac{d^{k+1}}{dx^{k+1}} f(x) \right| dx \le C_{M,k+1} \|D^{(k+1)} \bm{\beta} \|_1
\end{equation*}
where $C_{M, k+1}$ is a constant given in Appendix \ref{controlSmooth} that depends on the order $M$ of the B-splines and order $k+1$ of the finite difference. In other words, controlling the $\ell_1$ norm of the $(k+1)^{th}$ order finite differences in coefficients also controls the total variation of the $k^{th}$ derivative of the function.

$\ell_1$ trend filtering is similar to (\ref{psplinel1}). In the case where $x_1 < x_2 < \cdots <x_n$ are unique and equally spaced, $
\ell_1$ trend filtering solves the following problem (the intercept is handled implicitly):
\begin{equation}
\bm{\hat{\beta}} = \argmin_{\bm{\beta} \in \mathbb{R}^n} \frac{1}{2}\| \bm{y} - \bm{\beta} \|_2^2 + \lambda \|D^{(k+1)} \bm{\beta} \|_1.
\label{ell1t}
\end{equation}
Problem (\ref{ell1t}) differs from (\ref{psplinel1}) in that (\ref{ell1t}) has one parameter per data point, and the design matrix is the identity matrix. $D^{(k+1)}$ is also resized appropriately by replacing $p$ with $n$ in the dimensions of (\ref{Dk}) and (\ref{D}). However, under certain conditions noted in Observation \ref{equiv}, (\ref{psplinel1}) and (\ref{ell1t}) are identical.

\begin{observation}[Continuous representation]
For second order (first degree) B-splines with $n$ basis functions, equally spaced data $x_1 < x_2 < \cdots < x_n$ with knots at $t_1 < x_1, t_2 = x_1, t_3=x_2,\ldots,t_{n} = x_{n-1}, t_{n+1} = x_n, t_{n+2} > x_n$, and centered outcomes such that $y(0) = 0$, P-splines with an $\ell_1$ penalty are a continuous analogue to $\ell_1$ trend filtering.
  \label{equiv}
\end{observation}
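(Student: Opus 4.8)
The plan is to show that, under the stated knot configuration, the B-spline design matrix collapses to the $n \times n$ identity, so that problem (\ref{psplinel1}) becomes problem (\ref{ell1t}) once the intercept is absorbed, and then to observe that the B-spline expansion of the solution supplies the promised continuous interpolant.

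\textbf{Step 1 (compute $F$).} First I would evaluate the basis. For order $M = 2$ (first-degree) B-splines the recursion terminates in a single step, so $\phi_j^2$ is supported on $[t_j, t_{j+2}]$, rises linearly from $0$ at $t_j$ to $1$ at $t_{j+1}$, and falls linearly back to $0$ at $t_{j+2}$; with the half-open indicator convention and strictly increasing knots this gives $\phi_j^2(t_m) = \delta_{m,\,j+1}$ at every knot $t_m$. With $p = n$ basis functions the sequence has $2M + c = n+2$ knots, and the hypothesis places them so that $t_{j+1} = x_j$ for $j = 1,\dots,n$, the two extra boundary knots $t_1 < x_1$ and $t_{n+2} > x_n$ contributing nothing at the design points. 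Hence $F_{ij} = \phi_j^2(x_i) = \phi_j^2(t_{i+1}) = \delta_{ij}$, i.e. $F = I_n$, and $x \mapsto \sum_{j=1}^n \beta_j \phi_j^2(x)$ equals $\beta_i$ at $x_i$ and is piecewise linear in between. I expect the index bookkeeping around the non-coincident outer knots $t_1$ and $t_{n+2}$ to be the fussiest point here, though it is not deep — this is the only place anything can really go wrong.

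\textbf{Step 2 (remove the intercept).} Next I would dispose of $\beta_0$. Since $D^{(1)} \bm{1} = \bm{0}$, we have $D^{(k+1)} \bm{1} = \bm{0}$ for every $k \ge 0$, so the penalty in (\ref{psplinel1}) is invariant under $(\beta_0, \bm{\beta}) \mapsto (\beta_0 - c,\, \bm{\beta} + c\bm{1})$, while (once $F = I_n$) the loss depends on $(\beta_0, \bm{\beta})$ only through $\beta_0 \bm{1} + F\bm{\beta} = \beta_0 \bm{1} + \bm{\beta}$. Setting $\bm{b} = \beta_0 \bm{1} + \bm{\beta}$ then rewrites (\ref{psplinel1}) as $\frac{1}{2}\|\bm{y} - \bm{b}\|_2^2 + \lambda \|D^{(k+1)} \bm{b}\|_1$, which — since for $p = n$ the difference matrices in the two problems coincide verbatim — is exactly the $\ell_1$ trend filtering objective (\ref{ell1t}). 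I would use the centering $y(0) = 0$ precisely to match the two parametrizations, eliminating the one-dimensional redundancy between $\beta_0$ and the constant part of $\bm{\beta}$ that (\ref{ell1t}) absorbs implicitly.

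\textbf{Step 3 (conclude).} Combining the two steps, the minimizing fitted vector of (\ref{psplinel1}) equals the $\ell_1$ trend filtering estimate of (\ref{ell1t}), and $\sum_{j=1}^n \hat{b}_j \phi_j^2(x)$ is the unique continuous, piecewise-linear function on $[x_{\min}, x_{\max}]$ interpolating that estimate at $x_1,\dots,x_n$ — the asserted continuous analogue. The genuine content is all in Step 1; Steps 2 and 3 are the elementary kernel-of-$D^{(k+1)}$ observation and a relabeling.
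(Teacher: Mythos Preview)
Your proposal is correct and follows essentially the same route as the paper: compute $F = I_n$ from the first-degree B-spline recursion (the paper does this by explicitly evaluating $\phi_j^2(x_i)$ via the two-term formula, you via the hat-function shape, which is the same calculation), then identify the two objectives. The only cosmetic difference is in Step 2: the paper simply sets $\beta_0 = y(0) = 0$ and reads off the equality of (\ref{psplinel1}) and (\ref{ell1t}), whereas you absorb $\beta_0$ into $\bm{\beta}$ using $D^{(k+1)}\bm{1} = \bm{0}$ --- a slightly more informative argument, but not a different idea.
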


\begin{proof}[Proof of Observation \ref{equiv}]
Under these conditions, for $i=1,\ldots,n$
$$
\phi^2_j(x_i) = 
\begin{cases}
1 & i = j \\
0 & \text{otherwise}
\end{cases}.
$$
To see this, note that
\begin{align}
\phi_j^2(x_i) &= \frac{x_i - t_j}{t_{j+1} - t_j} \phi_j^{1}(x_i) + \frac{t_{j+2} - x_i}{t_{j+2} - t_{j+1}} \phi_{j+1}^{1}(x_i) \nonumber \\
&= \frac{t_{i+1} - t_j}{t_{j+1} - t_j} \phi_j^{1}(t_{i+1}) + \frac{t_{j+2} - t_{i+1}}{t_{j+2} - t_{j+1}} \phi_{j+1}^{1}(t_{i+1}). \label{eval}
\end{align}
Now, 
\begin{align*}
\phi_j^1(t_{i+1}) = 
\begin{cases}
1 & t_j \le t_{i+1} < t_{j+1}\\
0 & \text{otherwise}
\end{cases} \quad \text{and}
\quad
\phi_{j+1}^1(t_{i+1}) = 
\begin{cases}
1 & t_{j+1} \le t_{i+1} < t_{j+2}\\
0 & \text{otherwise}
\end{cases}.
\end{align*}
We have $\phi_j^1(t_{i+1}) = 1$ for $i=j-1$ and $0$ otherwise, but for $i=j-1$, we have $t_{i+1} - t_j = t_j - t_j = 0$. We also have $\phi_{j+1}^1(t_{i+1})=1$ for $i=j$ and 0 otherwise, and for $i=j$, we have $t_{j+2} - t_{i+1} = t_{j+2} - t_{j+1} > 0$. It follows that for $i=1\ldots,n$, (\ref{eval}) evaluates to 1 if $i=j$ and 0 otherwise.

Let $F$ be the design matrix in (\ref{psplinel1}), where $F_{ij} = \phi^2_j(x_i)$. Then from the previous result, we have $F=I_{n}$, where $I_{n}$ is the $n \times n$ identity matrix. This, together with the assumption that $\beta_0 = y(0) = 0$, implies that the objective functions (\ref{psplinel1}) and (\ref{ell1t}) are identical, which proves Observation \ref{equiv}.
\end{proof}

We note that \citet{tibshirani2014adaptive} shows that $\ell_1$ trend filtering has a continuous representation when expressed in the standard lasso form, and Observation \ref{equiv} gives a continuous representation of $\ell_1$ trend filtering when expressed in generalized lasso form.

$\ell_1$ trend filtering can be applied to irregularly spaced data, such as with the algorithm developed by \citet{ramdas2016fast}. It might also be possible to extend $\ell_1$ trend filtering to repeated measures data to account for within-subject correlations. However, due to Observation \ref{equiv}, we think it is beneficial to view $\ell_1$ trend filtering as a special case of P-splines with an $\ell_1$ penalty. We think this approach has the potential to be a general framework, because higher order B-splines could be used in combination with different order difference matrices just as can be done with P-splines that use the standard $\ell_2$ penalty. Furthermore, expressing $\ell_1$ trend filtering as P-splines with an $\ell_1$ penalty may facilitate the development of confidence bands (see Section \ref{sec_inf}), which could help to fill a gap in the $\ell_1$ penalized regression literature.

In addition, there are connections between P-splines with an $\ell_1$ penalty and locally adaptive regression splines. In particular, as \citet{tibshirani2014adaptive} shows, the continuous analogue of $\ell_1$ trend filtering is identical to locally adaptive regression splines \citep{mammen1997locally} for $k=0,1$, and asymptotically equivalent for $k \ge 2$.

\section{Proposed model: additive mixed model using P-splines with an $\ell_1$ penalty \label{sec_model}}

To introduce our model, let $\bm{y}_i = (y_{i1}, \ldots, y_{i n_i})^T$ be an $n_i \times 1$ vector of responses for subject $i=1,\ldots, N$, and let $\bm{y} = (\bm{y}_1^T, \ldots, \bm{y}_N^T)^T$ be the stacked $n \times 1$ vector of responses for all $N$ subjects, where $n= \sum_{i=1}^N n_i$. Let $\bm{x}_i = (x_{i1}, \ldots, x_{i n_i})^T$ be a corresponding $n_i \times 1$ vector of covariates for subject $i$, and $\bm{x} = (\bm{x}_1^T,\ldots,\bm{x}_N^T)^T$ be the $n \times 1$ stacked vector of all covariate values. In many contexts, $x$ is time. To account for the within-subject correlations of $\bm{y}_i$, we can incorporate random effects into the P-spline model. To that end, let $Z_i$ be an $n_i \times q_i$ design matrix for the random effects for subject $i$ (possibly including a B-spline basis), and let $\bm{b}_i = (b_{i1}, \ldots, b_{iq_i})^T$ be the corresponding $q_i \times 1$ vector of random effect coefficients for subject $i$. Also, let
$$
Z = \begin{bmatrix}
Z_1 & & \\
& \ddots \\
& & Z_N
\end{bmatrix}
$$ be the $n \times q$ block diagonal random effects design matrix for all subjects, where $q=\sum_{i=1}^N q_i$, and let $\bm{b} = (\bm{b}_1^T,\ldots,\bm{b}_N^T)^T$ be the $q \times 1$ stacked vector of random effects for all subjects. We propose an additive mixed model with $j=1,\ldots,J$ smooths (tildes denote quantities that will be subject to additional constraints, as described below):
\begin{align}
\underset{\beta_0 \in \mathbb{R}, \bm{b} \in \mathbb{R}^{q}, \bm{\tilde{\beta}}_j \in \mathbb{R}^{p_j}, j = 1,\ldots,J}{\text{minimize}} \; \; &\frac{1}{2}\| \bm{y} - \beta_0 \bm{1} - \sum_{j=1}^J \tilde{F}_j \bm{\tilde{\beta}}_j  - Z \bm{b} \|_2^2 + \sum_{j=1}^J \lambda_j \|\tilde{D}_j^{(k_j+1)} \bm{\tilde{\beta}}_j \|_1 \nonumber \\
&+ \tau \frac{1}{2}\bm{b}^T S \bm{b}
\label{l1PsplineAMM0}
\end{align}
where $\tilde{F}_j$ is a $n \times p_j$ design matrix of B-spline bases for smooth $j$, $\tilde{D}_j^{(k_j+1)}$ is the $k_j + 1$ finite difference matrix, and $\sigma^2_b S$ is the covariance matrix of the random effects $\bm{b}$. For example, if $\bm{b}$ are random intercepts, then $S=I_{N}$ and $Z$ would be an $n \times N$ matrix such that $Z_{il} = 1$ if observation $i$ belonged to subject $l$ and zero otherwise. Alternatively, to obtain random curves using smoothing splines and a B-spline basis, we could set
$$
S = \begin{bmatrix}
S_1 & & \\
& \ddots \\
& & S_N
\end{bmatrix}
$$
where $S_{j, il} = \int \phi''_{ji}(t) \phi''_{jl}(t) dt$, and $\phi''_{j1},\ldots,\phi''_{jp_j}$ are the second derivatives of the B-spline basis functions for the $j^{th}$ smooth. We would then set $Z$ to be the corresponding B-splines evaluated at the input points.

We note that (\ref{l1PsplineAMM0}) includes varying-coefficient models \citep{hastie1993varying}. For example, as pointed out by \citet[][p. 169]{wood2006generalized}, if $\tilde{F}_1$ are B-splines evaluated at $\bm{x}$, we could have $\tilde{F}_2 = \text{diag}(\bm{x'}) \tilde{F}_1$, where $\bm{x'} \ne \bm{x}$ is another covariate vector and $\text{diag}(\bm{x'})$ is a diagonal matrix with $x'_{i}$ at the $i^{th}$ leading diagonal position.

As written, (\ref{l1PsplineAMM0}) is not generally identifiable. To see this, suppose $\hat{y}(x) = \hat{\beta}_0 + \hat{f}_1(x) + \hat{f}_2(x)$, where neither $f_1$ nor $f_2$ are varying-coefficient terms. Then letting $\hat{f}'_1(x) = \hat{f}_1(x) + \delta$ and $\hat{f}'_2(x) = \hat{f}_2(x) - \delta$ for $\delta \in \mathbbm{R}$, we also have $\hat{y}(x) = \hat{\beta}_0 + \hat{f}'_1(x) + \hat{f}'_2(x)$. To make (\ref{l1PsplineAMM0}) identifiable, we follow \citet[][Section 4.2]{wood2006generalized} and introduce a centering constraint on each non-varying coefficient smooth, i.e. $\int \hat{f}_j(x) dx = 0$ for all smooths $j=1,\ldots, J$ such that $\tilde{F}_j \ne \text{diag}({\bm{x}'})\tilde{F}_l$ for some $\bm{x}'$ and $l \ne j$. To this end, let $\mathcal{E} = \{j \in \{1,\ldots, J\} : \tilde{F}_j \ne \text{diag}({\bm{x}'})\tilde{F}_l \text{ for some } \bm{x}', l \ne j\}$ be the indices of the non-varying coefficient smooths, and let $\bar{\mathcal{E}} = \{j \in \{1,\ldots, J\} : j \not \in \mathcal{E} \}$ be its complement. We constrain $\bm{1}^T \tilde{F}_j \bm{\tilde{\beta}}_j = 0$ for $j \in \mathcal{E}$. We accomplish this by defining new $p_j \times (p_j - 1)$ orthonormal matrices $Q_j$, $j=1, \ldots, J$, such that $\bm{1}^T \tilde{F}_j Q_j = \bm{0}$. If desired, one can also define a $q \times (q-1)$ matrix $Q_{J+1}$ such that $\bm{1}^T Z Q_{J+1} = \bm{0}$. However, this last centering constraint is not necessary, because the penalty on the random effect terms pulls the coefficients themselves towards zero, as opposed to the finite order differences in coefficients.

As \citet[][Section 1.8.1]{wood2006generalized} shows, $Q$ can be obtained by taking the QR decomposition of $\tilde{F}^T_j \bm{1}$ and retaining the last $p_j -1$ columns of the left orthonormal matrix.\footnote{The matrices $\bm{1}^T \tilde{F}_j$, $j=1,\ldots, J$ are of rank 1, so the remaining $p_j-2$ columns are arbitrary orthonormal vectors. In \textsf{R} \citep{R}, when taking the QR decomposition of $\tilde{F}^T \bm{1}$, an appropriate matrix $Q$ can be obtained as \verb|Q <- qr.Q(qr(colSums(F_tilde)), complete = TRUE)[, -1]|.} We can then re-parameterize the $p_j$ constrained parameters $\bm{\tilde{\beta}}_j$ in terms of the $p_j-1$ unconstrained parameters $\bm{\beta}_j$, such that $\tilde{\bm{\beta}}_j = Q_j \bm{\beta}_j$. For $j \in \mathcal{E}$, let $F_j = \tilde{F}_j Q_j$ and $D_j = \tilde{D}_j^{(k_j+1)}Q_j$. For $j \in \bar{\mathcal{E}}$, let $F_j = \tilde{F}_j$ and $D_j = \tilde{D}^{(k_j+1)}_j$. If centering the random effects, then we redefine $S := Q_{J+1}^T S Q_{J+1}$ and $Z := Z Q_{J+1}$. Then we can re-write (\ref{l1PsplineAMM0}) in the identifiable form
\begin{align}
\underset{\beta_0 \in \mathbb{R}, \bm{b} \in \mathbb{R}^{q}, \bm{\beta}_j \in \mathbb{R}^{p'_j}, j = 1,\ldots,J}{\text{minimize}} \; \; &\frac{1}{2}\| \bm{y} - \beta_0 \bm{1} - \sum_{j=1}^J F_j \bm{\beta}_j  - Z \bm{b} \|_2^2 + \sum_{j=1}^J \lambda_j \|D_j \bm{\beta}_j \|_1 \nonumber \\
&+ \tau \frac{1}{2}\bm{b}^T S \bm{b}
\label{l1PsplineAMM}
\end{align}
where $p_j' = p_j - 1$ for $j \in \mathcal{E}$ and $p_j' = p_j$ for $j \in \bar{\mathcal{E}}$.

We note that the penalty matrix $S$ given above for random subject-specific splines defines non-zero correlation between nearby within-subject random effect coefficients. This is in contrast to the approach of \citet{ruppert2003semiparametric} for estimating subject-specific random curves, which focuses on the case in which nearby within-subject coefficients are not correlated. To see this, let $\hat{d}_i(x) = \sum_{j=1}^{q_i} \hat{b}_{ij} \phi_{ij}(x)$ be the estimated difference between the $i^{th}$ subject-specific curve and the marginal mean at point $x$. The smoothing spline approach above constrains $\int (\hat{d}'')^2(x)dx = \bm{b}_i^T S_i \bm{b}_i <C$ for some constant $C>0$, whereas the approach of \citet{ruppert2003semiparametric} constrains $\bm{b}_i^T I_{q_i} \bm{b}_i = \sum_{j=1}^{q_i} \hat{b}_j^2 <C$. Whereas the non-diagonal penalty matrix $S$ implies correlations between nearby coefficients, the identity matrix in the approach of \citet{ruppert2003semiparametric} implies zero correlation.

Similar to the equivalence between Bayesian models and $\ell_2$ penalized smoothing splines \citep{wahba1990spline}, there is an equivalence between Bayesian models and $\ell_1$ penalized splines. In particular, (\ref{l1PsplineAMM}) is equivalent to the following distributional assumptions, which we can use to obtain Bayesian estimates:
\begin{align}
\bm{y}|\bm{b} &= \beta_0 \bm{1} + \sum_{j=1}^J F_j \bm{\beta}_j + Z \bm{b} + \bm{\epsilon} \nonumber \\
\bm{\epsilon} &\sim N \left( \bm{0}, \sigma^2_\epsilon I_n \right) \nonumber \\
\bm{b} &\sim N(\bm{0}, \sigma^2_b S^{-1}) \text{ for } \sigma^2_b = \sigma^2_\epsilon / \tau  \label{tauEquiv} \\
\bm{\epsilon} &\perp \bm{b} \nonumber \\
\left( D_j \bm{\beta}_j \right)_l &\sim \text{Laplace}(0, a_j) \text{ for } a_j = \sigma^2_\epsilon / (2 \lambda_j), l=1,\ldots, p_j - k_j - 1, j=1,\ldots,J \nonumber
\end{align}
The last distributional assumption is an element-wise Laplace prior on the $k_j+1$ order differences in coefficients.

In some cases, the random effects penalty matrix $S$ may be positive semidefinite but not invertible. For example, the smoothing spline random curves outlined above lead to a penalty matrix $S$ that is not strictly positive definite, but that is still positive semidefinite. This does not cause problems for the ADMM algorithm, but some changes are required for other algorithms as well as for Bayesian estimation. Following \citet[][Section 6.6.1]{wood2006generalized}, let $S=U \Lambda U^T$ be the eigendecomposition of a positive semidefinite matrix $S$, where $U U^T = I_{q}$ and $\Lambda$ is a diagonal matrix with eigenvalues in descending order in the diagonal positions. Let $\bm{\breve{b}} = U^T \bm{b}$ and $\breve{Z} = Z U$, so that $\bm{b}^T S \bm{b} = \bm{\breve{b}}^T \Lambda \bm{\breve{b}}$ and $\breve{Z} \bm{\breve{b}} = Z \bm{b}$. Let $q_r$ be the number of strictly positive eigenvalues of $S$, where $0 < q_r < q$, and let $\Lambda_r$ be the $q_r \times q_r$ upper left portion of $\Lambda$. We can partition $\bm{\breve{b}}$ as $\bm{\breve{b}} = (\bm{\breve{b}}_r^T, \bm{\breve{b}}_f^T)^T$, where $\bm{\breve{b}}_r^T$ is a $q_r \times 1$ vector of penalized coefficients and $\bm{\breve{b}}_f^T$ is a $q_f \times 1$ vector of unpenalized coefficients, where $q_r + q_f = q$. Then $\bm{\breve{b}}^T \Lambda \bm{\breve{b}} = \bm{\breve{b}}^T_r \Lambda_r \bm{\breve{b}}_r$, and it follows that $\bm{\breve{b}}_r \sim N(\bm{0}, \sigma^2_{b} \Lambda_r^{-1})$ and $\bm{\breve{b}}_f \propto \bm{1}$.

However, allowing for unconstrained random effect parameters leads to identifiability issues. Therefore, in practice if $q_f > 0$, we recommend using a normal or Cauchy prior on $\bm{\breve{b}}_f$. In particular, $\breve{b}_{f,l} \sim N(0, \sigma_f)$ or $\breve{b}_{f,l} \sim \text{Cauchy}(0, \sigma_f)$, $l=1,\ldots, q_f$ with either a diffuse prior on $\sigma_f$ and constraints to ensure $\sigma_f > 0$, or a diffuse prior on $\log(\sigma_f)$ without constraints. The Cauchy prior may be a preferable first choice, as it provides a weaker penalty and is similar to the recommendations of \citet{gelman2008weakly} for logistic regression. However, in some cases, such as in Section \ref{sec_app}, it is necessary to use a normal prior.

To further improve the computational efficiency of Monte Carlo sampling methods, we can partition $\breve{Z}$ into $\breve{Z} = [\breve{Z}_r, \breve{Z}_f]$ where $\breve{Z}_r$ contains the first $q_r$ columns of $\breve{Z}$ and $\breve{Z}_f$ contains the remaining $q_f$ columns. We then set $\bm{\check{b}}_r = \Lambda_r^{-1/2} \bm{\breve{b}}_r$ and $\check{Z}_r = \breve{Z}_r \Lambda_r^{1/2}$, so that $\check{Z}_r \bm{\check{b}}_r = \breve{Z}_r \bm{\breve{b}}_r$ and $\bm{\check{b}}_r \sim N (\bm{0}, \sigma^2_{b} I)$, which allows for more efficient sampling \citep{wood2006generalized}.

\section{Related work \label{sec_related}}

There are many nonparametric and semiparametric methods for analyzing repeated measures data. For an overview, please see \citet[][Part III]{fitzmaurice2008longitudinal}. However, most existing methods use an $\ell_2$ penalty \citep[e.g.][]{rice2001nonparametric, guo2002functional, chen2011penalized, scheipl2015functional}. 

Focusing on the optimization problem, our method puts a generalized lasso penalty \citep{tibshirani1996} on the fixed effects and a quadratic penalty on the random effects. Unlike the elastic net \citep{zou2005}, we do not mix the $\ell_1$ and $\ell_2$ penalties on the same parameters, though this could be done in the future.

The additive model with trend filtering developed by \citet{sadhanala2017} is similar to our approach. \citet{sadhanala2017} optimize 
\begin{align}
&\underset{\bm{\theta}_1,\ldots, \bm{\theta}_J \in \mathbb{R}^n}{\text{minimize}} \quad \frac{1}{2} \| \bm{y} - \bar{y} \bm{1} - \sum_{j=1}^J \bm{\theta}_j \|_2^2 + \lambda \sum_{j=1}^J \|D^{(k+1)} \bm{\theta}_j \|_1 \label{addtf} \\
&\text{subject to} \quad \bm{1}^T \bm{\theta}_j = \bm{0}, j=1,\ldots,J. \nonumber
\end{align}
In contrast to (\ref{l1PsplineAMM}), (\ref{addtf}) has one smoothing parameter and constrains all smooths to be zero-centered. From Observation \ref{equiv}, we see that (\ref{addtf}) is equivalent to (\ref{l1PsplineAMM}) when there are is $J=1$ smooth and no random effects, in which case there would be only one smoothing parameter $\lambda$ and no varying-coefficient smooths.

\citet{sadhanala2017} develop the theoretical and computational aspects of additive models with trend filtering, including the extension of the falling factorial basis to additive models. Similar to the B-spline basis, the falling factorial basis allows for linear time multiplication and inversion, which leads to fast computation \citep{wang2014}.

When smooths $j=1,\ldots,J$ are expected to have similar degrees of freedom and $n$ is not large enough to require dimension reduction, then (\ref{addtf}) with the addition of random effects and the relaxation of the zero-constraints for varying-coefficient smooths may be a viable alternative to (\ref{l1PsplineAMM}) that could potentially adapt better to local differences in smoothness because it would have one knot per data point.

While not developed for analyzing repeated measures, the fused lasso additive model (FLAM) \citep{petersen2016fused} is also similar to (\ref{l1PsplineAMM}). FLAM optimizes the following problem:
\begin{equation}
\underset{\theta_0 \in \mathbb{R}, \bm{\theta}_j \in \mathbb{R}^n, 1 \le j \le J}{\text{minimize}} \quad \frac{1}{2} \| \bm{y} - \theta_0 \bm{1} - \sum_{j=1}^J \bm{\theta}_j \|_2^2 + \alpha \lambda \sum_{j=1}^J \|D^{(1)} \bm{\theta}_j \|_1 + (1-\alpha) \lambda \sum_{j=1}^J \| \bm{\theta}_j \|_2
\label{flam}
\end{equation}
where $0 \le \alpha \le 1$ specifies the balance between fitting piecewise constant functions ($\alpha = 1$) and inducing sparsity on the selected smooths ($\alpha = 0$). From Observation \ref{equiv}, we see that (\ref{flam}) is equivalent to our model (\ref{l1PsplineAMM}) when: $\alpha = 1$, there is $J=1$ smooth, our design matrix has $p=n$ columns, our B-spline bases have appropriately chosen knots, and our model has no random effects. As \citet{petersen2016fused} show, FLAM can be a very useful method for modeling additive phenomenon, and as with the fused lasso \citep{tibshirani2005sparsity}, jumps in the piecewise linear fits have the advantage of being interpretable.

We also mention the sparse additive model (SpAM) \citep{ravikumar2009sparseadd} and sparse partially linear additive model (SPLAM) \citep{lou2016}. SpAM fits an additive model and uses a group lasso penalty \citep{yuan2006} to induce sparsity on the number of active smooths. SPLAM fits a partially linear additive model and uses a hierarchical group lasso penalty \citep{zhao2009grouplasso} to induce sparsity in the selected predictors and to control the number of nonlinear features.

One notable difference between our model and that of 
\citet{sadhanala2017}, as well as FLAM, SpAM, and SPLAM, is that we allow for multiple smoothing parameters. In our applied experience with additive models and standard $\ell_2$ penalties, we have found that in practice it can be important to allow for multiple smoothing parameters, particularly when the quantities of interest are the individual smooths as opposed to the overall prediction. This is equivalent to allowing each smooth to have different variance. However, this flexibility comes at a cost: estimating multiple smoothing parameters is currently the greatest challenge in fitting our proposed model. Perhaps due in part to these computational difficulties, several other authors also assume a single smoothing parameter in high-dimensional additive models \citep[e.g.][]{lin2006, meier2009}.

There are fast and stable methods for fitting multiple smoothing parameters for $\ell_2$ penalties paired with exponential family and quasilikelihood loss functions, notably the work of \citet{wood2004stable} using generalized cross validation (GCV) and \citet{wood2011} using restricted maximum likelihood. Furthermore, \citet{wood2015generalized} extend these methods to larger datasets, and \citet{wood2016smooth} extend these methods to likelihoods outside the exponential family and quasilikelihood form. However, similarly computationally efficient methods do not yet exist for fitting multiple smoothing parameters for $\ell_1$ penalties.

In addition to allowing for multiple smoothing parameters, we also propose approximate inferential methods, which is not typically provided for $\ell_1$ penalized models. \citet{yuan2006}, \citet{ravikumar2009sparseadd}, \citet{lou2016}, and \citet{petersen2016fused} focus on prediction and provide bounds on the prediction risk and related quantities. These are important results, and we think that distributional results for individual parameters and smooths will also be useful to practitioners.

We also note that \citet{eilers2000robust} and \citet{bollaerts2006quantile} discuss a variant of P-splines for quantile regression, in which the $\ell_1$ norm is used in both the loss and penalty function. However, we are not aware of existing P-spline methods that combine an $\ell_1$ penalty with an $\ell_2$ loss function.

\section{Point estimation \label{sec_est}}

\subsection{Regression parameters and random effects \label{admm}}

To fit (\ref{l1PsplineAMM}), we use the alternating direction method of multipliers (ADMM) \citep[see][]{boyd2011distributed}. ADMM has the advantage of being scalable to large datasets. To formulate (\ref{l1PsplineAMM}) for ADMM, we introduce constraint terms $\bm{w}_j$ and re-write the optimization problem as
\begin{align}
\text{minimize } \quad & \frac{1}{2} \| \bm{y} - \beta_0 \bm{1} - \sum_{j=1}^J F_j \bm{\beta}_j -Z\bm{b} \|_2^2 + \sum_{j=1}^J \lambda_j \|\bm{w}_j \|_1 + \frac{\tau}{2} \bm{b}^T S \bm{b} \label{ADMMobj} \\
\text{subject to } \quad & D_j \bm{\beta}_j - \bm{w}_j = \bm{0}, \; j = 1,\ldots, J \nonumber
\end{align}

The augmented Lagrangian in scaled form (using $\bm{u}$ to denote the scaled dual variable) is
\begin{align*}
L_{\rho}(\bm{\beta}, \bm{b}, \bm{w}, \bm{u}) &\propto \frac{1}{2} \| \bm{y} - \beta_0 \bm{1} - \sum_j F_j \bm{\beta}_j - Z \bm{b} \|_2^2 + \sum_j \lambda_j \left\|\bm{w}_j \right\|_1 \\
& \quad + \frac{\rho}{2} \sum_j \left\| D_j \bm{\beta}_j - \bm{w}_j + \bm{u}_j \right\|_2^2 + \frac{\tau}{2} \bm{b}^T S \bm{b}
\end{align*}
where $\rho > 0$ is the penalty parameter. The dimensions are 
$\bm{y} \in \mathbb{R}^{n \times 1}$,
$\beta_0 \in \mathbb{R}$,
$F_j \in \mathbb{R}^{n \times p'_j}$,
$\bm{\beta}_j \in \mathbb{R}^{p_j' \times 1}$,
$Z \in \mathbb{R}^{n \times q}$,
$\bm{b} \in \mathbb{R}^{q \times 1}$,
$D_j \in \mathbb{R}^{(p_j-k_j-1) \times p'_j}$,
$\bm{w}_j \in \mathbb{R}^{(p_j-k_j-1) \times 1}$,
$\bm{u}_j \in \mathbb{R}^{(p_j-k_j-1) \times 1}$, and 
$S \in \mathbb{R}^{q \times q}$,
where $p'_j = p_j - 1$ if $j \in \mathcal{E}$ (non-varying coefficient smooths) and $p'_j = p_j$ if $j \in \bar{\mathcal{E}}$ (varying coefficient smooths).

ADMM is an iterative algorithm, and we re-estimate the parameters for updates $m=1,2,\ldots$ until convergence.\footnote{We use $m$ to denote the iteration of the ADMM algorithm. This is unrelated to our use of $m$ in Section \ref{sec_pspline_l1} to denote the order of the B-spline basis.} It is straightforward to derive the $m+1$ updates \citep[see][Section 6.4.1]{boyd2011distributed}:
\begin{align}
\beta_0^{m+1} &= \frac{1}{n} \bm{1}^T \left( \bm{y} - \sum_j F_j \bm{\beta}_j^m - Z \bm{b}^m \right) \nonumber \\
\bm{\beta}^{m+1}_j &:= \argmin_{\bm{\beta}_j} L_{\rho}(\beta_0^{m+1}, \bm{\beta}_j, \bm{\beta}^{m+1}_{l<j}, \bm{\beta}^{m}_{l>j}, \bm{b}^m, \bm{w}^m, \bm{u}^m) \nonumber \\
  &= \left(F_j^T F_j + \rho D_j^T D_j \right)^{-1} \left(F_j^T \bm{y}^{(j,m)} + \rho D_j^T(\bm{w}^m_j - \bm{u}^m_j) \right) \nonumber \\
\bm{b}^{m+1} &:= \argmin_{\bm{b}} L_{\rho}(\bm{\beta}_{j=1,\ldots, J}^{m+1}, \bm{b}, \bm{w}^m, \bm{u}^m) \nonumber \\
  &= (Z^T Z + \tau S)^{-1}Z^T(\bm{y} - \beta_0^{m+1} \bm{1} - \sum_j F_j \bm{\beta}^{m+1}_j) \label{bUpdate} \\
\bm{w}_j^{m+1} &:= \argmin_{\bm{w}_j} L_{\rho}(\bm{\beta}^{m+1}_{j=1,\ldots,J}, \bm{b}^{m+1}, \bm{w}_j, \bm{u}^m) \nonumber \\
  &= \psi_{\lambda_j/\rho}(D_j \bm{\beta}_j^{m+1} + \bm{u}_j^m) \nonumber \\
\bm{u}_j^{m+1} &:= \bm{u}_j^m + D_j \bm{\beta}_j^{m+1} - \bm{w}_j^{m+1} \nonumber
\end{align}
where $\bm{y}^{(j,m)} = \bm{y} - \beta_0^{m+1} \bm{1} - \sum_{l < j} F_l \bm{\beta}^{m+1}_l - \sum_{l > j} F_j \bm{\beta}^m_l - Z\bm{b}^m$ and $\psi_{\lambda/\rho}$ is the element-wise soft thresholding operator, where for a single scalar element $x$
$$
\psi_{\lambda/\rho}(x) = 
\begin{cases}
x - \lambda/\rho & x > \lambda/\rho \\
0 & |x| \le \lambda / \rho \\
x + \lambda / \rho & x < -\lambda / \rho
\end{cases}
$$
To initialize the algorithm, we set $\beta_0 := \bar{y}$, $\bm{b} := \bm{0}$, and $\bm{\beta}_j := \bm{0}$, $\bm{w}_j := \bm{0}$, and $\bm{u}_j := \bm{0}$, for $j = 1,\ldots, J.$

As an alternative to the closed-form update (\ref{bUpdate}) for the random effects, it is also possible to update the random effects via a linear mixed effects (LME) model that is embedded into the ADMM algorithm. In particular, an LME model is fit to the residuals $\bm{y} - \beta_0^{m+1} \bm{1} - \sum_j F_j \bm{\beta}^{m+1}_j$, and $\bm{b}^{m+1}$ are updated as the best linear unbiased predictors (BLUPs). This update occurs at each step of the ADMM algorithm and replaces the update given by (\ref{bUpdate}). The LME update has the additional benefit of simultaneously estimating the variance of the random effects $\sigma^2_b$. In simulations, we have found that using an LME update leads to more accurate estimates of $\sigma^2_b$, which is important for subsequent estimates of degrees of freedom and confidence intervals.

For stopping criteria, we use the primal and dual residuals ($r^m$ and $s^m$, respectively):
\begin{align*}
r^m &= \begin{bmatrix}
D_1 \bm{\beta}_1^m - \bm{w}_1^m \\
\vdots \\
D_J \bm{\beta}_J^m - \bm{w}_J^m
\end{bmatrix}
\in \mathbb{R}^{\left(p-k - J \right) \times 1}\\
s^m &= -\rho
\begin{bmatrix}
D_1^T \left(\bm{w}_1^m - \bm{w}_1^{m-1} \right) \\
\vdots \\
D_J^T \left(\bm{w}_J^m - \bm{w}_J^{m-1} \right)
\end{bmatrix}
 \in \mathbb{R}^{p \times 1}
\end{align*}
where $k=\sum_{j=1}^J k_j$, $p = \sum_{j=1}^J p_j - |\mathcal{E}|$, and $|\mathcal{E}|$ is the cardinality of $\mathcal{E}$.

Following the guidance of \cite{boyd2011distributed}, we stop when $\|r^m\|_2 \le \epsilon^{\text{pri}}$ and $\|s^m\|_2 \le \epsilon^{\text{dual}}$, where
\begin{align*}
\epsilon^{\text{pri}} &= \epsilon^{\text{abs}} \sqrt{p - k - J} + \epsilon^{\text{rel}} \max \left\{ 
\left\|
\begin{matrix}
D_1 \bm{\beta}_1^m \\
\vdots \\
D_J \bm{\beta}_J^m
\end{matrix}
\right
\|_2,
\left\|
\begin{matrix}
\bm{w}^m_1 \\ 
\vdots \\
\bm{w}^m_J
\end{matrix}
\right\|_2
\right\} \\
\epsilon^{\text{dual}} &= \epsilon^{\text{abs}} \sqrt{p} + \epsilon^{\text{rel}} \rho
\left\|
\begin{matrix}
D_1^T \bm{u}_1^m \\
\vdots \\
D_J^T \bm{u}_J^m
\end{matrix}
\right\|_2.
\end{align*}
By default, we set $\epsilon^{\text{rel}} = \epsilon^{\text{abs}} = 10^{-4}$ and the maximum number of iterations at $1,000$.

\subsection{Smoothing parameters \label{smoothParams}}

To estimate $\lambda_1,\ldots,\lambda_J$ we compute cross validation (CV) error for a path of values one smoothing parameter at a time. In the CV, we split the sample at the subject level, as opposed to individual observations, and ensure that there are at least two subjects in each fold per unique combination of factor covariates. First, we estimate a path for $\tau$ with $\lambda_1,\ldots, \lambda_J$ set to 0. Then we fix $\tau$ at the value that minimizes CV error and compute a path for $\lambda_1$, setting it to the value that minimizes CV error, and so on.

We fit a path for each $\lambda_j$ from $\lambda_{j}^{\text{max}}$ to $10^{-5} \lambda_{j}^{\text{max}}$ evenly spaced on the log scale, where $\lambda_{j}^{\text{max}}$ is the smallest value at which $D_j \bm{\beta}_j = \bm{0}$. As shown in Appendix \ref{lambdaMaxDetails},\\
$\lambda_{j}^{\text{max}} = \|(D_j D_j^T)^{-1} D_j (F_j^T F_j)^{-1} F_j^T \bm{r}_j\|_\infty$, where $\bm{r}_j = \bm{y} - \beta_0 \bm{1} - \sum_{\ell \ne j} F_{\ell} \bm{\beta}_{\ell}  - Z \bm{b}$ are the $j^{th}$ partial residuals and for a vector $\bm{a}$, $\|\bm{a}\|_\infty = \max_j |a_j|$.

We also use warm starts, passing starting values separately for each fold, though warm starts appear to be minimally beneficial with ADMM. We set $\rho = \min( \max ( \lambda_1, \ldots, \lambda_J), c)$ at each iteration for some constant $c>0$ (e.g. $c=5$). When the number of smooths $J$ is small (e.g. $J \le 2$) a grid search is also feasible.

To estimate $\tau$, we can either use CV and the close-form update given by (\ref{bUpdate}), or an LME update that is embedded in the ADMM algorithm, as described in Section \ref{admm}. In simulations, we have found that the overall computation time to estimate the smoothing parameters is greater when using the LME update, and that the estimates of $\lambda_1, \ldots, \lambda_J$ do not appear sensitive to updates for $\bm{b}$. However, the final estimates of $\sigma^2_b$, and consequently the width of confidence intervals can be improved by using the LME update. Consequently, we recommend using cross validation to estimate $\tau$ for the purposes of then estimating $\lambda_1,\ldots,\lambda_J$, but using an LME update when estimating the final model.

With both the closed-form and LME update, we cannot use the training sample to estimate the random effect parameters $\bm{b}$ for the test sample, because these parameters are subject-specific and the test subjects are not included in the training sample. Instead, we use the training sample to obtain estimates for the fixed effect parameters $\beta_0$, $\bm{\beta}_j$, $j=1,\ldots,J$ and then use the test sample to estimate the random effects.

To make our approach clear, we first fix notation. Let $\mathcal{T}^r \subseteq \{1,\ldots,n\}$ be the row indices for the observations in the test sample for both the fixed and random effect design matrices $F_j$, $j=1,\ldots,J$, and $Z$. Also, let $\mathcal{T}^c \subseteq \{1,\ldots,q\}$ be the column indices of $Z$ for observations in the test sample, and let $\mathcal{T} = (\mathcal{T}^r, \mathcal{T}^c)$ be the tuple of row and column indices designating the test sample. Let matrices $F_{j,\mathcal{T}}$ and $F_{j,-\mathcal{T}}$ be matrix $F_j$ with only rows indexed by $\mathcal{T}^r$ retained and removed, respectively. Similarly, let matrices $Z_{\mathcal{T}}$ and $Z_{-\mathcal{T}}$ be matrix $Z$ with only rows and columns indexed by $\mathcal{T}^r$ and $\mathcal{T}^c$, respectively, retained and removed, respectively. Let matrices $S_{\mathcal{T}}$ and $S_{-\mathcal{T}}$ be matrix $S$ with only rows and columns indexed by $\mathcal{T}^c$ retained and removed, respectively. Also, let $\bm{y}_{\mathcal{T}}$ and $\bm{y}_{-\mathcal{T}}$ be vector $\bm{y}$ with elements indexed by $\mathcal{T}^r$ retained and removed, respectively.

We obtain out-of-sample marginal estimates as $\bm{\hat{\mu}}_{\mathcal{T}} = \hat{\beta}_0 \bm{1} + \sum_{j=1}^J F_{j,\mathcal{T}} \bm{\hat{\beta}}_j$, where $\hat{\beta}_0$ and $\bm{\hat{\beta}}_j$, $j=1,\ldots,J$ are estimated with $\bm{y}_{-\mathcal{T}}$, $F_{j, -\mathcal{T}}$, and $Z_{-\mathcal{T}}$. If using the closed-form update (\ref{bUpdate}), we estimate subject-specific random effects as $\bm{\hat{b}}_{\mathcal{T}} = \left( Z_{\mathcal{T}}^T Z_{\mathcal{T}}^T + \tau S_{\mathcal{T}} \right)^{-1} Z_{\mathcal{T}}^T (\bm{y}_{\mathcal{T}} - \bm{\hat{\mu}}_{\mathcal{T}})$ and obtain the out-of-sample prediction residuals as $\bm{r}_{\mathcal{T}} = \bm{y}_{\mathcal{T}} - \bm{\hat{\mu}}_{\mathcal{T}} - Z_{\mathcal{T}} \bm{\hat{b}}_{\mathcal{T}}$. Letting $\mathcal{T}_k$ be the tuple of indices for test sample (fold) $k=1,\ldots,K$, we obtain the CV error as $\sum_{k=1}^K \|\bm{r}_{\mathcal{T}_k} \|_2^2$.

\section{Degrees of freedom \label{sec_edf}}

In this section, we obtain the degrees of freedom, with the primary goal of estimating variance (see Section \ref{sec_var}). However, we note that degrees of freedom does not always align with a model's complexity in terms of its tendency to overfit the data \citep{janson2015effective}.

In each of the approaches described in this section, the degrees of freedom (df) is a function of the smoothing parameters $\lambda_1,\ldots \lambda_J$ and $\tau$. We always obtain the fixed effects smoothing parameters $\lambda_1,\ldots,\lambda_J$ from CV, but when using an LME update for the random effects $\bm{b}$ as described in Sections \ref{admm} and \ref{smoothParams}, we do not directly obtain $\tau$. Consequently, we cannot directly apply the results in this section to estimate df. However, from (\ref{tauEquiv}), we have that $\tau = \sigma^2_b / \sigma^2_\epsilon$. Writing $\text{df} = \text{df}(\tau)$, and letting $\bm{r} = \bm{y} - \sum_{j=1}^J F_j \bm{\hat{\beta}}_j - Z \bm{\hat{b}}$ be an $n \times 1$ vector of residuals and $\hat{\sigma}^2_{\epsilon} = \|\bm{r}\|_2^2 / (n - df(\tau))$ be an estimate of variance, we have that
\[
\hat{\tau} = \frac{\hat{\sigma}^2_b}{\hat{\sigma}^2_\epsilon} = \frac{\hat{\sigma}^2_b}{\|\bm{r}\|_2^2} \left(n - \text{df}(\hat{\tau}) \right).
\]
Therefore, letting
\[
\psi(\tau) = \tau - \frac{\hat{\sigma}^2_b}{\|\bm{r}\|_2^2} \left(n - \text{df}(\tau) \right),
\]
we numerically solve for $\hat{\tau}$ such that $\psi(\hat{\tau}) = 0$ and set $\text{df} = \text{df}(\hat{\tau})$.

\subsection{Stein's method \label{sec_dfDirect}}
 
Let $g(\bm{y}) = \hat{\bm{y}}$, where $g : \mathbb{R}^n \rightarrow \mathbb{R}^n$ is the model fitting procedure. For $\bm{y} \sim N(\mu, \sigma^2I)$, the degrees of freedom is defined as \citep[see][]{efron1986, hastie1990GAM}
\begin{equation}
\text{df} = \frac{1}{\sigma^2} \sum_{i=1}^n \Cov (g_i(y), y_i).
\label{dfDef}
\end{equation}
As \citet{tibshirani2014adaptive} notes, (\ref{dfDef}) is motivated by the fact that the risk $\text{Risk}(g) = \mathbb{E} \|g(\bm{y}) - \bm{\mu} \|_2^2$ can be decomposed as
$$
\text{Risk}(g) = \mathbb{E} \| g(\bm{y}) - \bm{y} \|_2^2 - n \sigma^2 + 2 \sum_{i=1}^n \Cov(g_i(\bm{y}), y_i).
$$
Therefore, the degrees of freedom (\ref{dfDef}) corresponds to the difference between risk and expected training error. Furthermore, if $g$ is continuous and weakly differentiable, then $ \text{df} = \mathbb{E}[\nabla \cdot g(y)]$ \citep{stein1981} where $\nabla \cdot g = \sum_{i=1}^n \partial g_i / \partial y_i$ is the divergence of $g$. Therefore, an unbiased estimate of $\text{df}$ (also used in Stein's unbiased risk estimate \citep{stein1981}) is 
\begin{equation}
\hat{\text{df}} = \sum_{i=1}^n \partial g_i / \partial y_i.
\label{dfStein}
\end{equation}

To obtain an estimate of degrees of freedom, we transform the generalized lasso component of our model to standard form, similar to the approach of \citet{petersen2016fused}. To do so, we use the following matrices described by \citet{tibshirani2014adaptiveSupp}. Let
$$
\tilde{D}^*_j = \begin{bmatrix}
\tilde{D}^{(0)}_{j,1} \\
\vdots \\
\tilde{D}^{(k_j)}_{j,1} \\
\tilde{D}^{(k_j+1)}_{j}
\end{bmatrix}
\in \mathbb{R}^{p_j \times p_j}
$$
be an augmented finite difference matrix, where $\tilde{D}^{(i)}_{j,1}$ is the first row of the finite difference matrix $\tilde{D}^{(i)}_j$, and $\tilde{D}^{(0)}_j = I_{p_j}$ is the identity matrix.
As shown by \citet{tibshirani2014adaptiveSupp}, the inverse of $\tilde{D}^*_j$ is given by $M_j = M_j^{(0)} M_j^{(1)} \cdots M_j^{(k)}$ where\footnote{We denote the inverse matrix as $M_j$. This is unrelated to our use of $M$ in Section \ref{sec_pspline_l1} to denote the order of the B-spline basis.}
$$
M_j^{(i)} = 
\begin{bmatrix}
I_{i} \\
 & L_{(p_j - i) \times (p_j - i)}
\end{bmatrix}
\in \mathbb{R}^{p_j \times p_j},
$$
where $L_{(p_j - i) \times (p_j - i)}$ is the $(p_j - i) \times (p_j - i)$ lower diagonal matrix of 1s.

Assuming our outcome $\bm{y}$ is centered, so that $\beta_0 = y(0) = 0$, and letting $V_j = \tilde{F}_j M_j$, $D_j^* = \tilde{D}_j^* Q_j$ for $j \in \mathcal{E}$ and $D_j^* = \tilde{D}_j^*$ for $j \in \bar{\mathcal{E}}$, and $\bm{\alpha}_j = D^*_j \bm{\beta}_j$, we can write the penalized log likelihood (\ref{l1PsplineAMM}) as 
\begin{equation}
l_{\text{pen}} = \frac{1}{2} \| \bm{y} - \sum_j V_j \bm{\alpha}_j - Z \bm{b} \|_2^2 + \sum_{j=1}^J \lambda_j \sum_{l=k_j+2}^{p_j} |\alpha_{jl} | + \frac{1}{2} \tau \bm{b}^T S \bm{b}.
\label{l1PsplineAMMV}
\end{equation}

To avoid difficulties later differentiating with respect to the $\ell_1$ norm, we remove the non-active $\ell_1$ penalized coefficients from (\ref{l1PsplineAMMV}). We also form the concatenated design matrix $V = [V_1,\ldots,V_J]$ and will need to index the active set of $V$. To these ends, let $\mathcal{A}_j = \{l \in \{k_j+2, \ldots, p'_j\} : \hat{\alpha}_{j,l} \ne 0 \}$ be the active set of the penalized coefficients for smooth $j$, and let $\mathcal{A}_j^* = \{1,\ldots,k_j + 1 \} \cup \mathcal{A}_j$ be the active set for smooth $j$ augmented with the unpenalized coefficients. Also, for a set $\mathcal{A}_j$ and constant $c \in \mathbb{R}$, let $\mathcal{A}_j + c = \{i + c : i \in \mathcal{A}_j \}$ be the set of elements in $\mathcal{A}_j$ shifted by $c$. Now let $\mathcal{A}^* = \bigcup_{j=1}^J (\mathcal{A}^*_j + \sum_{l =0}^{j-1} p'_l)$ be the augmented active set of $V$, where $p'_0 = 0$ and $p'_j, j=1,\ldots,J$ are the number of columns in $V_j$ (equivalently $F_j$). Finally, let $V_{\mathcal{A}^*}$ be matrix $V$ subset to retain only those columns indexed by $\mathcal{A}^*$. Similarly, let $\bm{\hat{\alpha}} = (\bm{\hat{\alpha}}_1^T, \ldots, \bm{\hat{\alpha}}_J^T)^T$ be the concatenated vector of estimated coefficients, and let $\bm{\hat{\alpha}}_{\mathcal{A}^*}$ be vector $\bm{\hat{\alpha}}$ subset to retain only elements indexed by $\mathcal{A}^*$. Then we can write the estimated penalized loss (\ref{l1PsplineAMMV}) as
\begin{equation}
\hat{l}_{\text{pen}} = \frac{1}{2} \left\| \bm{y} - [V_{\mathcal{A}^*}, Z] \begin{pmatrix} \bm{\hat{\alpha}}_{\mathcal{A}^*} \\ \bm{\hat{b}} \end{pmatrix} \right\|_2^2 + \sum_{j=1}^J \lambda_j \sum_{l=k_j+2}^{p_j} |\hat{\alpha}_{jl} | + \frac{1}{2} \tau \bm{\hat{b}}^T S \bm{\hat{b}}
\label{l1PsplineAMMVest}
\end{equation}

Taking the derivative of (\ref{l1PsplineAMMVest}) and keeping in mind that the first $k_j + 1$ elements of each $\bm{\hat{\alpha}}_j$ are unpenalized and $|\hat{\alpha}_{jl}| > 0$ for all $l \in \mathcal{A}_j$, we have
\begin{equation}
\bm{0}_{(|\mathcal{A}^*| + q) \times 1} = \frac{\partial l_{\text{pen}}}{\partial (\bm{\hat{\alpha}}_{\mathcal{A}^*}^T, \bm{\hat{b}}^T)^T} = \begin{bmatrix} V_{\mathcal{A}^*}^T \\  Z^T \end{bmatrix} \left([V_{\mathcal{A}^*}, Z] \begin{pmatrix} \bm{\hat{\alpha}}_{\mathcal{A}^*} \\ \bm{\hat{b}} \end{pmatrix} - \bm{y}\right)
+ 
\begin{pmatrix}
\bm{\eta} \\
\tau S \bm{\hat{b}}
\end{pmatrix}
\label{lpenDer}
\end{equation}
where 
$$
\bm{\eta} = 
\begin{bmatrix}
\bm{0}_{k_1+1} \\
\lambda_1 \; \text{sign}(\bm{\hat{\alpha}}_{\mathcal{A}_1}) \\
\bm{0}_{k_2+1} \\
\lambda_2 \; \text{sign}(\bm{\hat{\alpha}}_{\mathcal{A}_2 + p_1}) \\
\vdots \\
\bm{0}_{k_J+1} \\
\lambda_J \; \text{sign}(\bm{\hat{\alpha}}_{\mathcal{A}_J + \sum_{j=1}^{J-1} p_j})
\end{bmatrix},
$$
$\bm{0}_{k_j + 1}$ is a $(k_j + 1) \times 1$ vector of zeros, and the sign operator is taken element-wise. 

From \citet[][Lemmas 6 and 9]{tibshirani2012degrees}, we know that within a small neighborhood of $\bm{y}$, the active set $\mathcal{A}$ and the sign of the fitted terms $\hat{\alpha}_{\mathcal{A}}$ are constant with respect to $\bm{y}$ except for $\bm{y}$ in a set of measure zero. Therefore, $\partial \bm{\eta} / \partial \bm{y} = 0_{|\mathcal{A}^*| \times n}$, where $0_{|\mathcal{A}^*| \times n}$ is an $|\mathcal{A}^*| \times n$ matrix of zeros and $|\mathcal{A}^*|$ is the cardinality of $\mathcal{A}^*$. Then taking the derivative of (\ref{lpenDer}) with respect to $\bm{y}$, we have 
\begin{align*}
0_{(|\mathcal{A}^*| + q) \times n} = \frac{\partial^2 l_{\text{pen}}}{\partial (\bm{\hat{\alpha}}_{\mathcal{A}^*}^T, \bm{\hat{b}}^T)^T \partial \bm{y}} &= 
\begin{bmatrix} 
  V_{\mathcal{A}^*}^T \\  
  Z^T
\end{bmatrix} 
[V_{\mathcal{A}^*}, Z] 
\begin{bmatrix} 
  \partial \bm{\hat{\alpha}}_{\mathcal{A}^*} / \partial \bm{y} \\ 
  \partial \bm{\hat{b}} / \partial \bm{y}
\end{bmatrix} - 
\begin{bmatrix} 
  V_{\mathcal{A}^*}^T \\  
  Z^T
\end{bmatrix} \\
& \, + 
\begin{bmatrix}
0_{|\mathcal{A}^*| \times n} \\
\tau S (\partial \bm{\hat{b} / \partial \bm{y})}
\end{bmatrix}.
\end{align*}
Solving for the derivatives of the estimated coefficients, we have
$$
\begin{bmatrix} 
  \partial \bm{\hat{\alpha}}_{\mathcal{A}^*} / \partial \bm{y} \\ 
  \partial \bm{\hat{b}} / \partial \bm{y}
\end{bmatrix}
=
\left(
\begin{bmatrix} 
  V_{\mathcal{A}^*}^T \\  
  Z^T
\end{bmatrix} 
[V_{\mathcal{A}^*}, Z] 
+
\begin{bmatrix}
  0_{|\mathcal{A}^*| \times |\mathcal{A}^*|} & 0_{|\mathcal{A}^*| \times q} \\
  0_{q \times |\mathcal{A}^*|} & \tau S
\end{bmatrix}
\right)^{-1}
\begin{bmatrix} 
  V_{\mathcal{A}^*}^T \\  
  Z^T
\end{bmatrix}.
$$
Now let $A = [V_{\mathcal{A}^*}, Z]$ and 
\begin{equation*}
\Omega = \begin{bmatrix}
  0_{|\mathcal{A}^*| \times |\mathcal{A}^*|} & 0_{|\mathcal{A}^*| \times q} \\
  0_{q \times |\mathcal{A}^*|} & \tau S
\end{bmatrix}.
\end{equation*}
Then since
$\bm{\hat{y}} = A (\hat{\bm{\alpha}}_{\mathcal{A}^*}^T, \hat{\bm{b}}^T)^T$
we have
\begin{align*}
\frac{\partial \bm{\hat{y}}}{\partial \bm{y}} &= 
\frac{\partial \bm{\hat{y}}}{\partial (\bm{\hat{\alpha}}_{\mathcal{A}^*}^T, \bm{\hat{b}}^T)^T}
\frac{\partial (\bm{\hat{\alpha}}_{\mathcal{A}^*}^T, \bm{\hat{b}}^T)^T}{\partial \bm{y}} \\
 &=
A \left( A^T A + \Omega \right)^{-1} A^T.
\end{align*}

From \citet[][Lemmas 1 and 8]{tibshirani2012degrees}, we know that $g(\bm{y}) = \bm{\hat{y}}$ is continuous and weakly differentiable. Also, $\nabla g = \tr(\partial \bm{\hat{y}} / \partial \bm{y})$. Therefore, we can use Stein's formula (\ref{dfStein}) to estimate the degrees of freedom as
\begin{equation}
\hat{\text{df}} = 1 + \tr \left(A ( A^T A + \Omega \right)^{-1} A^T) = 1 + \tr \left( (A^T A + \Omega)^{-1} A^T A \right),
\label{dfAll}
\end{equation}
where we add 1 for the intercept. We note that this result is similar to the degrees of freedom for the elastic net \citep[see the remark on page 18 of][]{tibshirani2012degrees} as well as for FLAM \citep{petersen2016fused}.

To obtain degrees of freedom for individual smooths $j=1,\ldots, J$, let $E_j$ be an $(|\mathcal{A}^*|+q) \times (|\mathcal{A}^*|+q)$ matrix with 1s on the diagonal positions indexed by $\mathcal{A}^*_j + \sum_{l=0}^{j-1} |\mathcal{A}^*_l|$ and zero elsewhere, where $|\mathcal{A}^*_j|$ is the cardinality of $\mathcal{A}^*_j$ and $\mathcal{A}^*_0 = \emptyset$. Also, let $\hat{f}_j = V_j \bm{\hat{\alpha}}_j$ be the estimate of the $j^{th}$ smooth. Then as \citet{ruppert2003semiparametric} note, $\hat{f}_j = A E_j(A^TA + \Omega)^{-1} A^T \bm{y}$. Therefore, 
\begin{equation}
  \hat{\text{df}}_j = \tr \left(A E_j(A^T A + \Omega)^{-1} A^T \right) = \tr \left(E_j(A^T A + \Omega)^{-1} A^T A \right).
  \label{dfj}
\end{equation}
In other words, the degrees of freedom for smooth $j$ is the sum of the diagonal elements of $(A^T A + \Omega)^{-1} A^T A$ indexed by $\mathcal{A}_j^* + \sum_{l=0}^{j-1} |\mathcal{A}^*_l|$.

We note that when using the ADMM algorithm, or most likely any proximal algorithm, the fitted $D_j \bm{\hat{\beta}}_j$, or equivalently $\bm{\hat{\alpha}}_j$, will typically have several very small non-zero values, but will not typically be sparse. However, the vector $\bm{\hat{w}}_j$ is sparse, where in the ADMM algorithm we constrain $\bm{w}_j = D_j \bm{\beta}_j$. Therefore, in practice we use $\bm{w}_j$ to obtain the active set $\mathcal{A}_j$.

\subsection{Stable and fast approximations \label{stab_fast_df}}

In some cases, such as the application in Section \ref{sec_app}, the estimates based on Stein's method (\ref{dfAll}) and (\ref{dfj}) cannot be computed due to numerical instability. In this section, we propose alternatives that are more numerically stable and which are also more computationally efficient.

\subsubsection{Based on restricted derivatives}

In this approach, we take derivatives of the fitted values restricted to individual smooths. In particular, from Section \ref{sec_dfDirect}, we see that
\begin{align*}
\frac{\partial \bm{\hat{y}}}{\partial \bm{\hat{\alpha}}_{\mathcal{A}^*_j}}
\frac{\partial \bm{\hat{\alpha}}_{\mathcal{A}^*_j}}{\partial \bm{y}}
&= V_{\mathcal{A}_j^*} (V_{\mathcal{A}_j^*}^T V_{\mathcal{A}_j^*})^{-1} V_{\mathcal{A}_j^*}^T \\
\frac{\partial \bm{\hat{y}}}{\partial \bm{\hat{b}}}
\frac{\partial \bm{\hat{b}}}{\partial \bm{y}}
&= Z (Z^T Z + \tau S)^{-1} Z^T.
\end{align*}
We can then approximate the degrees of freedom for each individual smooth and the random effects by
\begin{equation}
\tilde{\text{df}}_j = 
\begin{cases}
\tr \left((V_{\mathcal{A}_j^*}^T V_{\mathcal{A}_j^*})^{-1} V_{\mathcal{A}_j^*}^T V_{\mathcal{A}_j^*} \right) & j=1,\ldots,J \\
\tr \left((Z^T Z + \tau S)^{-1} Z^T Z \right) & j=J+1
\end{cases}
\label{dfResj}
\end{equation}
We estimate the overall degrees of freedom as
\begin{equation}
\tilde{\text{df}} = 1 + \sum_{j=1}^{J+1} \tilde{\text{df}}_j
\label{dfRes}
\end{equation}
where we add 1 for the intercept.

This approach is similar to one described by \citet[][p. 176]{ruppert2003semiparametric}, though in a different context and for a different purpose. In particular, whereas we use this approach to approximate the degrees of freedom after fitting the model, \citet{ruppert2003semiparametric} use it to set the degrees of freedom before fitting the model in the context of $\ell_2$ penalized loss functions.

\subsubsection{Based on ADMM constraint parameters}

In this approach, we propose estimates of degrees of freedom specific to the ADMM algorithm. As in the previous section, this approach is based on estimates for the individual smooths. Consider the model with $J=1$ smooth, no random effects, and centered $\bm{y}$:
$$
\|\bm{y} - F \bm{\beta} \|_2^2 + \lambda \| D \bm{\beta} \|_1.
$$
Suppose we make the centering constraints described Section \ref{sec_model}, i.e. we set $F = \tilde{F} Q$ and $D = \tilde{D}^{(k+1)} Q$ for an $n \times p$ design matrix $\tilde{F}$, a $k+1$ order finite difference matrix $D^{(k+1)}$, and an orthonormal $p \times (p-1)$ matrix $Q$. Let $\mathcal{A} = \{l \in \{1,\ldots,p - k-1 \}: (D \bm{\hat{\beta}})_l \ne 0\}$ be the active set, and let $|\mathcal{A}|$ be its cardinality. In our context, we expect the design matrices $F$ to be full rank, in which case Theorem 3 of \citet{tibshirani2012degrees} (see the first Remark) states that the degrees of freedom is given by $\text{df} = \mathbb{E}[\text{nullity}(D_{-\mathcal{A}})]$. Here, $\text{nullity}(D)$ is the dimension of the null space of matrix $D$, and $D_{-\mathcal{A}}$ is matrix $D$ with rows indexed by $\mathcal{A}$ removed. Now, $D$ has dimensions $(p-k-1) \times (p-1)$, and we can see by inspection that for all $k<p-1$ the columns of $D$ are linearly independent. Therefore, the rank of $D_{-\mathcal{A}}$ is equal to the number of rows $p - k - 1 - |\mathcal{A}|$, and the nullity is equal to the number of columns $p-1$ minus the number of rows. This gives $\hat{\text{df}} = \text{nullity}(D_{-\mathcal{A}}) = k + |\mathcal{A}|$ for centered smooths, i.e. the number of non-zero elements of $D \bm{\hat{\beta}}$ plus one less than the order of the difference penalty. This is similar to the result for $\ell_1$ trend filtering, but we have lost one degree of freedom due to the constraint that $\bm{1}^T \tilde{F} \bm{\tilde{\beta}} = \bm{0}$. For uncentered smooths, $D$ has dimensions $(p-k-1) \times p$, which gives $\hat{\text{df}} = \text{nullity}(D_{-\mathcal{A}})) = k + 1 + |\mathcal{A}|$.

As before, we note that in the ADMM algorithm, $D \bm{\hat{\beta}}$ will not generally be sparse, as ADMM is a proximal algorithm. However, the corresponding $\bm{w}$ is sparse, where in the optimization problem we constrain $D \bm{\beta} = \bm{w}$. Now considering a model with smooths $j=1,\ldots,J$, a numerically stable and fast alternative to (\ref{dfj}) is given by
\begin{align}
  \tilde{\text{df}}_{j}^{\text{ADMM}}
   &= \mathbbm{1}\mathcal[j \in \bar{\mathcal{E}}] + k_j + \sum_{l=1}^{p - k - 1} \mathbbm{1} \left[w_{jl} \ne 0\right].
   \label{dfADMMj}
\end{align}
where $\bar{\mathcal{E}}$ indexes the un-centered smooths and $\mathbbm{1}$ is an indicator variable. We then combine (\ref{dfADMMj}) with the restricted derivative approximation for the degrees of freedom of the random effects given above to obtain the overall degrees of freedom
\begin{equation}
\tilde{\text{df}}^{\text{ADMM}} = 1 + \sum_{j=1}^J \tilde{\text{df}}^{\text{ADMM}}_j + \tr \left((Z^T Z + \tau S)^{-1} Z^T Z \right),
\label{dfADMM}
\end{equation}
where we add 1 for the intercept.

\subsection{Ridge approximation}

Let $U = [F_1, \ldots, F_J, Z]$ be the concatenated design matrix of fixed and random effects and
$$
\Omega^{\text{ridge}} = \begin{bmatrix}
\lambda_1 D_1^T D_1  \\
& \ddots \\
&& \lambda_J D_J^T D_J \\
&& & \tau S
\end{bmatrix}
$$
be the penalty matrix. Then the hat matrix from the linear smoother approximation (see Section \ref{sec_inf}) is given by $H = U (U^T U + \Omega^{\text{ridge}})^{-1} U^T$. Similar to before, we can get the overall degrees of freedom as
\begin{equation}
\hat{\text{df}}^{\text{ridge}} = 1 + \tr \left((U^T U + \Omega^{\text{ridge}})^{-1} U^T U \right),
\label{dfAllRidge}
\end{equation}
where we add 1 for the intercept. To obtain degrees of freedom for individual smooths $j=1,\ldots, J$, let $E_j$ be a $(p+q) \times (p+q)$ matrix with 1s on the diagonal positions indexed by the columns of $F_j$ and zero elsewhere. Also, let $\hat{f}_j = F_j \bm{\hat{\beta}}_j$ be the estimate of the $j^{th}$ smooth. Then the ridge approximation for smooth $j$ is given by $\hat{f}_j \approx U E_j(U^T U + \Omega^{\text{ridge}})^{-1} U^T \bm{y}$. Therefore, 
\begin{equation}
  \hat{\text{df}}^{\text{ridge}}_j = \tr \left( E_j(U^T U + \Omega^{\text{ridge}})^{-1} U^T U \right)
  \label{dfRidgej}
\end{equation}

Similar to before, we also propose stable and fast approximations to the ridge estimate of degrees of freedom based on restricted derivatives. In particular, let
\begin{equation}
\tilde{\text{df}}_j^{\text{ridge}} = 
\begin{cases}
\tr \left(({F_j}^T F_j + \lambda_j D_j^T D_j)^{-1} F_j^T F \right) & j=1,\ldots,J \\
\tr \left((Z^T Z + \tau S)^{-1} Z^T Z \right) & j=J+1
\end{cases}
\label{dfRidgeResj}
\end{equation}
Then we can estimate the overall degrees of freedom as
\begin{equation}
\tilde{\text{df}}^{\text{ridge}} = 1 + \sum_{j=1}^{J+1} \tilde{\text{df}}_j^{\text{ridge}}
\label{dfRidgeRes}
\end{equation}
where we add 1 for the intercept.

As noted above, this approach is similar to one described by \citet[][p. 176]{ruppert2003semiparametric}, though for a different purpose. Whereas we use this approach to obtain the degrees of freedom after fitting the model, \citet{ruppert2003semiparametric} use it to set the degrees of freedom before fitting the model.

\section{Approximate inference \label{sec_inf}}

In this section, we discuss approximate inferential methods based on ridge approximations to the $\ell_1$ penalized fit and conditional on the smoothing parameters $\lambda_j, j=1,\ldots,J$ and $\tau$. We use the ADMM algorithm to analyze the approximation. In particular, we note that we can write the ADMM update for $\bm{\beta}_j$ as
\begin{equation}
\bm{\beta}_j^{m+1} = \left(F_j^T F_j + \rho D_j^T D_j \right)^{-1} F_j^T \bm{y}^{(j,m)} + \bm{\delta}_j^m
\label{betaUpdate}
\end{equation}
where $\bm{\delta}_j^m = \rho(F_j^T F_j + \rho D_j^T D_j)^{-1}F^T_j D^T_j(\bm{w}^m_j - \bm{u}^m_j)$ and $\bm{y}^{(j,m)} = \bm{y} - \beta_0^{m+1} - \sum_{l<j} F_l \bm{\beta}_l^{m+1} - \sum_{l>j} F_l \bm{\beta}_l^m - Z \bm{b}^m$. As we note in Observation \ref{delta}, $\bm{\delta}_j$ loosely represents the difference in the estimate of $\bm{\beta}_j$ obtained with the $\ell_1$ and $\ell_2$ penalties.

\begin{observation}
With the $\ell_1$ penalty, i.e. $\|D_j \bm{\beta}_j\|_1$, in general $\bm{\delta}_j^{m} \ne \bm{0}$. However, with the $\ell_2$ penalty, i.e. $\|D_j \bm{\beta}_j\|_2^2$, and $\lambda_j = \rho$, we have $\bm{\delta}_j^{m} = \bm{0}$.
\label{delta}
\end{observation}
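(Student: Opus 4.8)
The plan is to read $\bm{\delta}_j^m$ off the ADMM recursions of Section~\ref{admm} and reduce the whole statement to the behaviour of $\bm{w}_j^m-\bm{u}_j^m$. From (\ref{betaUpdate}), $\bm{\delta}_j^m$ is a fixed (iteration-independent) linear image of $D_j^T(\bm{w}_j^m-\bm{u}_j^m)$ under the invertible matrix $\rho(F_j^TF_j+\rho D_j^TD_j)^{-1}$; hence $\bm{\delta}_j^m=\bm{0}$ iff $D_j^T(\bm{w}_j^m-\bm{u}_j^m)=\bm{0}$, for which $\bm{w}_j^m=\bm{u}_j^m$ is sufficient, while $\bm{\delta}_j^m\neq\bm{0}$ for a generic $\bm{w}_j^m-\bm{u}_j^m$ not lying in $\ker D_j^T$. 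Throughout I use the initialization of Section~\ref{admm}, $\bm{w}_j^0=\bm{u}_j^0=\bm{0}$, so $\bm{\delta}_j^0=\bm{0}$ in both cases and the claim has content only for $m\geq 1$.

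For the $\ell_2$ statement, replacing $\|D_j\bm{\beta}_j\|_1$ by $\tfrac{1}{2}\lambda_j\|D_j\bm{\beta}_j\|_2^2$ changes only the $\bm{w}_j$-subproblem: minimizing $\tfrac{1}{2}\lambda_j\|\bm{w}_j\|_2^2+\tfrac{\rho}{2}\|D_j\bm{\beta}_j^{m+1}-\bm{w}_j+\bm{u}_j^m\|_2^2$ gives the closed form $\bm{w}_j^{m+1}=\tfrac{\rho}{\lambda_j+\rho}(D_j\bm{\beta}_j^{m+1}+\bm{u}_j^m)$, which at $\lambda_j=\rho$ is $\bm{w}_j^{m+1}=\tfrac{1}{2}(D_j\bm{\beta}_j^{m+1}+\bm{u}_j^m)$. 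Feeding this into the unchanged dual step $\bm{u}_j^{m+1}=\bm{u}_j^m+D_j\bm{\beta}_j^{m+1}-\bm{w}_j^{m+1}$ yields $\bm{u}_j^{m+1}=\tfrac{1}{2}(D_j\bm{\beta}_j^{m+1}+\bm{u}_j^m)=\bm{w}_j^{m+1}$. With $\bm{w}_j^0=\bm{u}_j^0$ as base case, a one-line induction gives $\bm{w}_j^m=\bm{u}_j^m$ for every $m$, hence $\bm{\delta}_j^m=\bm{0}$. As a sanity check, in this regime the $\bm{\beta}_j$-update (\ref{betaUpdate}) collapses to the exact ridge solve $(F_j^TF_j+\rho D_j^TD_j)^{-1}F_j^T\bm{y}^{(j,m)}$, i.e.\ ADMM with the matched quadratic penalty is just block-coordinate ridge regression.

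For the $\ell_1$ statement it suffices to exhibit a single iteration at which $\bm{\delta}_j^m\neq\bm{0}$. Take $m=1$: since $\bm{u}_j^0=\bm{0}$, the soft-threshold and dual updates give $\bm{w}_j^1=\psi_{\lambda_j/\rho}(D_j\bm{\beta}_j^1)$ and $\bm{u}_j^1=D_j\bm{\beta}_j^1-\psi_{\lambda_j/\rho}(D_j\bm{\beta}_j^1)$, i.e.\ the element-wise clip of $D_j\bm{\beta}_j^1$ at level $\lambda_j/\rho$, so $\bm{w}_j^1-\bm{u}_j^1=2\psi_{\lambda_j/\rho}(D_j\bm{\beta}_j^1)-D_j\bm{\beta}_j^1$. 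On any coordinate $l$ with $|(D_j\bm{\beta}_j^1)_l|>\lambda_j/\rho$ this equals $(D_j\bm{\beta}_j^1)_l-2(\lambda_j/\rho)\,\text{sign}((D_j\bm{\beta}_j^1)_l)$, which for fixed $D_j\bm{\beta}_j^1\neq\bm{0}$ is nonzero outside a measure-zero set of $(\bm{y},\lambda_j,\rho)$ and in fact tends to $(D_j\bm{\beta}_j^1)_l$ as $\lambda_j/\rho\to 0$; hence $\bm{w}_j^1\neq\bm{u}_j^1$ and $\bm{\delta}_j^1\neq\bm{0}$ in general, in contrast with the $\ell_2$ case.

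I expect the main obstacle to be bookkeeping rather than depth. One must pin down the normalization of the quadratic penalty for the matching condition to read exactly "$\lambda_j=\rho$" — this is the $\tfrac{1}{2}\lambda_j\|\cdot\|_2^2$ convention of (\ref{pspline}); a $\lambda_j\|\cdot\|_2^2$ convention would instead require $\lambda_j=\rho/2$ — and one must read "in general" in the $\ell_1$ part as "with the thresholding active and for all but a measure-zero set of data and parameters," since degenerate choices (e.g.\ $D_j\bm{\beta}_j^1=\bm{0}$, or $\lambda_j/\rho$ large enough that every coordinate of $D_j\bm{\beta}_j^1$ is thresholded to zero) do force $\bm{\delta}_j^1=\bm{0}$.
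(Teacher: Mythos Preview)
Your argument is correct but takes a different route from the paper. The paper's proof is purely structural: it notes that with the quadratic penalty $(\lambda_j/2)\|D_j\bm{\beta}_j\|_2^2$ there is no need for the splitting variable $\bm{w}_j$ or the constraint $D_j\bm{\beta}_j=\bm{w}_j$ at all; the $\bm{\beta}_j$-subproblem can be solved directly and its solution is exactly the ridge update $(F_j^TF_j+\lambda_jD_j^TD_j)^{-1}F_j^T\bm{y}^{(j,m)}$, which at $\lambda_j=\rho$ coincides with the first term of (\ref{betaUpdate}), forcing $\bm{\delta}_j^m=\bm{0}$ by comparison. The paper does not argue the $\ell_1$ part beyond the assertion ``in general.'' You instead retain the full ADMM machinery, replace only the $\bm{w}_j$-proximal step by its quadratic analogue, and show by an explicit induction that $\bm{w}_j^m=\bm{u}_j^m$ for every $m$; you also supply a concrete first-iteration computation for the $\ell_1$ case. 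Your route is more mechanical but also more complete: it verifies the claim within the algorithm as actually run, handles the $\ell_1$ half explicitly, and your closing caveat about the normalization convention and degenerate cases is well taken. The paper's route is shorter and more conceptual, at the cost of quietly redefining the algorithm (dropping $\bm{w}_j$) rather than tracking the iterates.
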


\begin{proof}[Proof of Observation \ref{delta}]
Similar to the ridge update for $\bm{b}$, if we changed $\lambda_j \|D_j \bm{\beta}_j\|_1$ to $(\lambda_j / 2) \|D_j \bm{\beta}_j\|_2^2$ in (\ref{l1PsplineAMM}) we could remove the $\bm{w}_j$ term and the constraint that $D_j \bm{\beta}_j^m = \bm{w}_j$ from (\ref{ADMMobj}) to obtain the ridge update $\bm{\beta}_j^{m+1} = \left(F_j^T F_j + \lambda_j D_j^T D_j \right)^{-1} F_j^T \bm{y}^{(j,m)}$. Then since we assumed $\lambda_j = \rho$, we have $\bm{\beta}_j^{m+1} = \left(F_j^T F_j + \rho D_j^T D_j \right)^{-1} F_j^T \bm{y}^{(j,m)}$. By comparison with (\ref{betaUpdate}), we see that $\bm{\delta}_j^m = \bm{0}$.
\end{proof}

Observation \ref{delta} motivates our approximate inferential strategy. Letting $\bm{\hat{f}}_j$ be the $j^{th}$ fitted smooth, and letting $\bm{y}^{(j)} = \bm{y} - \hat{\beta}_0 - \sum_{l \ne j} F_l \bm{\hat{\beta}}_l - Z \bm{\hat{b}}$, we have
\begin{align}
\bm{\hat{f}}_j = F_j \bm{\hat{\beta}}_j &= F_j(F_j^T F_j + \rho D_j^T D_j)^{-1} F_j^T \bm{y}^{(j)} + F_j \bm{\hat{\delta}}_j \label{l1fit} \\
&\approx F_j(F_j^T F_j + \rho D_j^T D_j)^{-1} F_j^T \bm{y}^{(j)} && (\text{assuming } F_j \bm{\hat{\delta}}_j \approx \bm{0}) \nonumber \\
&\approx F_j(F_j^T F_j + \lambda_j D_j^T D_j)^{-1} F_j^T \bm{y}^{(j)} && (\text{assuming } \lambda_j \approx \rho) \nonumber \\
&=H_j \bm{y}^{(j)} \label{linSmoothFit}
\end{align}
where $H_j = F_j(F_j^T F_j + \lambda_j D_j^T D_j)^{-1} F_j^T$. We obtain confidence intervals for the linear smoother (\ref{linSmoothFit}) centered around the estimated fit (\ref{l1fit}), ignore $F_j \bm{\delta}_j$ when estimating variance, and assume $\lambda_j \approx \rho$. We also condition on the smoothing parameters $\lambda_1,\ldots,\lambda_J$ and $\tau$.

Figure \ref{Hy_breakdown} gives a visual demonstration of the approximation for the simulation presented in Section \ref{sec_sim} and the application shown in Section \ref{sec_app}. As seen in Figure \ref{Hy_breakdown}, in these examples the $\ell_1$ fit and ridge approximation are very similar. If this holds in general, then this would suggest that 1) the approximate inferential procedures we propose might have reliable coverage probabilities, and 2) there may be minimal practical advantage to using an $\ell_1$ penalty instead of the standard $\ell_2$ penalty. However, as shown in Section \ref{sec_change_point}, the $\ell_1$ penalty appears to perform noticeably better in certain situations, including the detection of change points.

\begin{figure}[H]
  \begin{subfigure}{0.48\textwidth}
    \includegraphics[width = \linewidth]{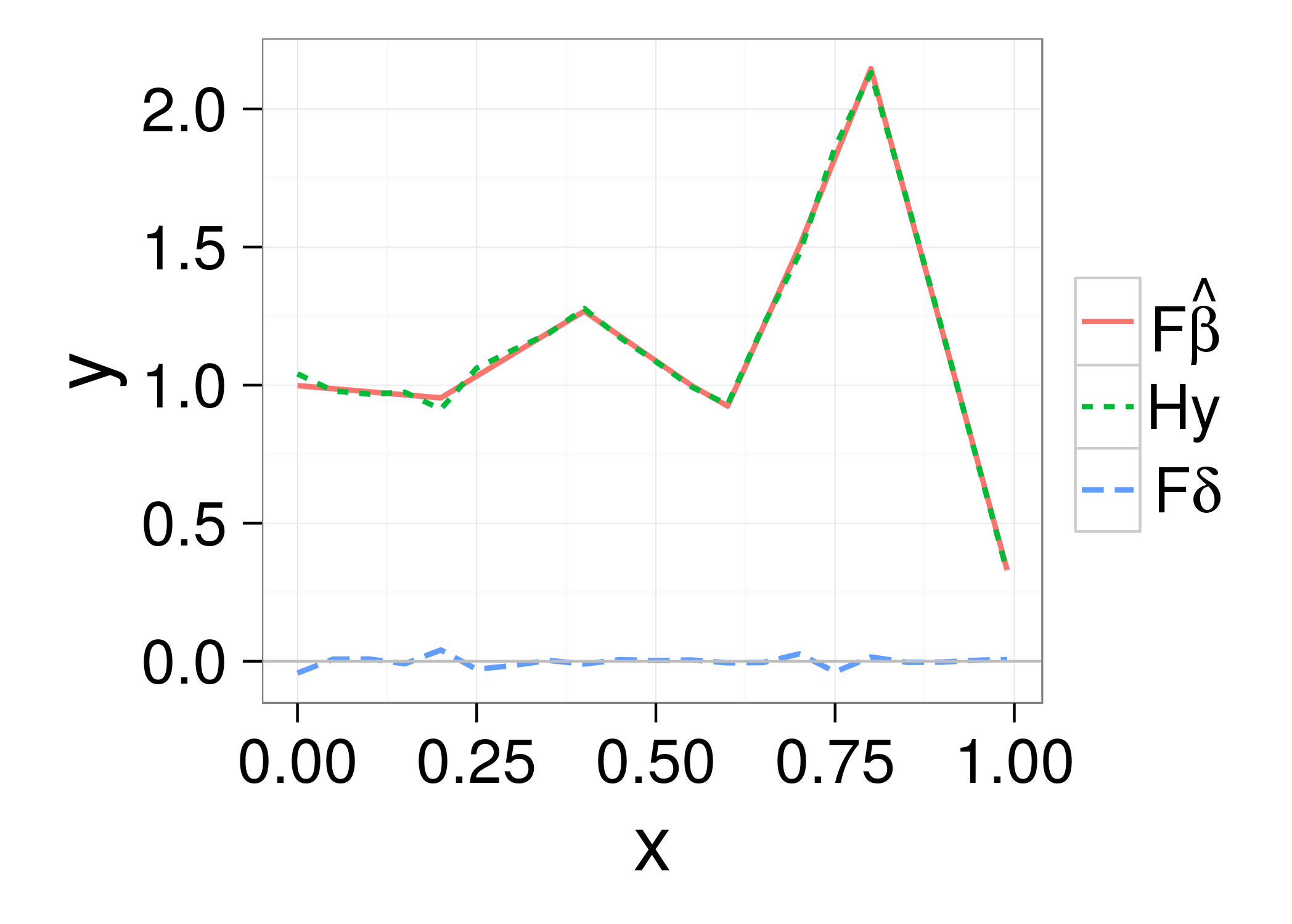}
    \caption{Simulation (Section \ref{sec_sim})}
  \end{subfigure}
  \begin{subfigure}{0.48\textwidth}
    \includegraphics[width = \linewidth]{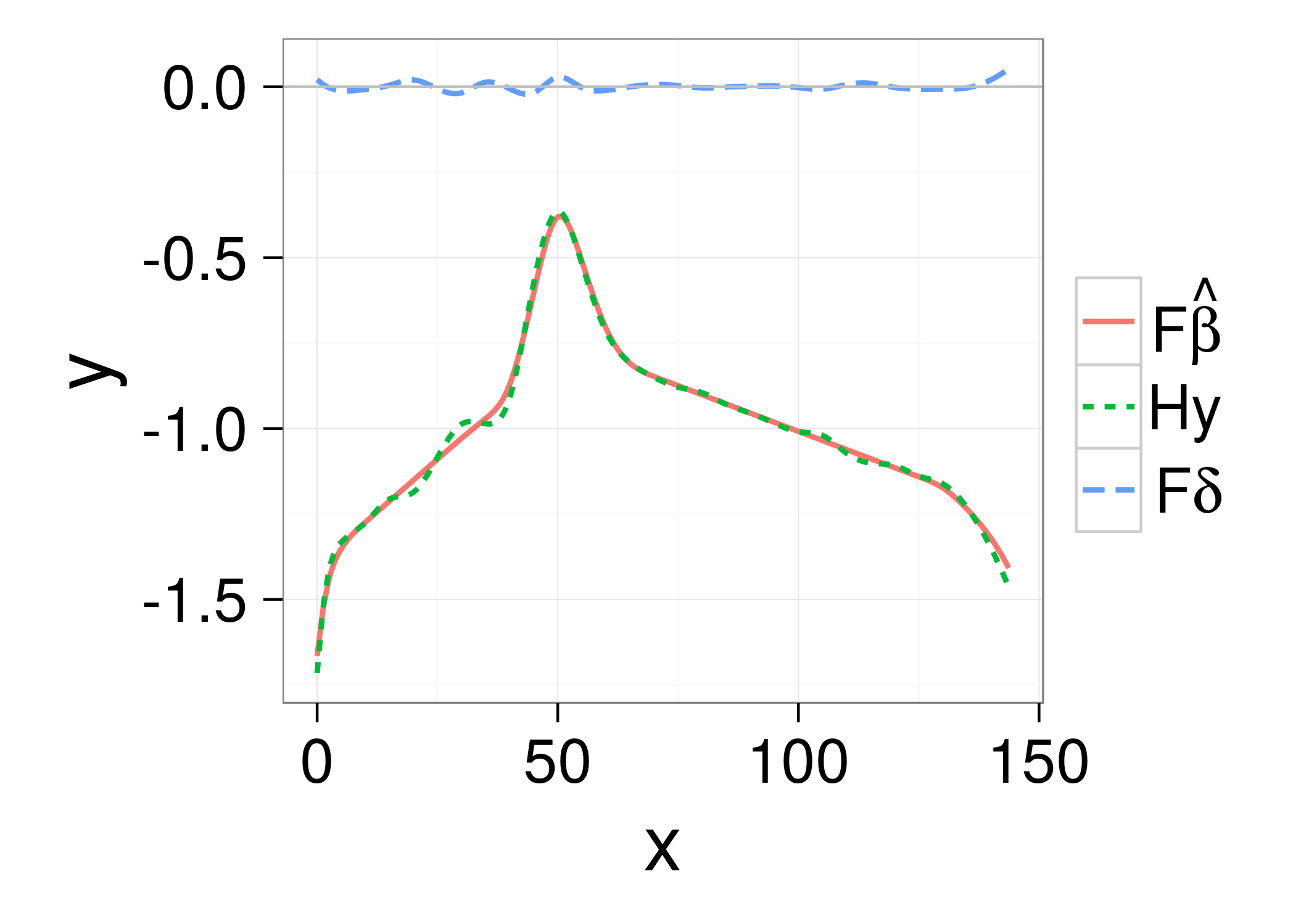}
    \caption{Application (Section \ref{sec_app})}
  \end{subfigure}
  \caption[Linear smoother approximation to the $\ell_1$ penalized fit]{Linear smoother approximation to the $\ell_1$ penalized fit in the simulation (see Section \ref{sec_sim}) and application (see Section \ref{sec_app}). The solid red line is the $\ell_1$ penalized fit, the dotted green line is the linear smoother approximation, and the dashed blue line is the difference between the two.}
  \label{Hy_breakdown}
\end{figure}

Before presenting the confidence bands in greater detail, we discuss our approach for estimating variance in Section \ref{sec_var}, which we then use to form confidence bands in Section \ref{sec_CI}.

\subsection{Variance \label{sec_var}}

Let $\bm{r} = \bm{y} - \sum_{j=1}^J F_j \bm{\hat{\beta}}_j - Z \bm{\hat{b}}$ be an $n \times 1$ vector of residuals. We estimate the overall variance as $\hat{\sigma}^2_\epsilon = \|\bm{r}\|^2_2 / \hat{\text{df}}_{\text{resid}}$, where $\hat{\text{df}}_{\text{resid}}$ is the residual degrees of freedom. When possible, we use the estimate based on Stein's method (\ref{dfAll}) and set $\hat{\text{df}}_{\text{resid}} = n - \hat{\text{df}}$. If Stein's method is not numerically stable, then we use the restricted derivatives approximation (\ref{dfRes}) and set $\hat{\text{df}}_{\text{resid}} = n - \tilde{\text{df}}$. As another alternative, we could also use the ADMM approximation and set $\hat{\text{df}}_{\text{resid}} = n - \tilde{\text{df}}^{\text{ADMM}}$.

\subsection{Confidence bands \label{sec_CI}}

In this section, we obtain confidence bands for typical subjects, i.e. for subjects for whom $\bm{b}_i = \bm{0}$. Since we assume a normal outcome, this is equivalent to the marginal population level response.

\subsubsection{Frequentist confidence bands}

Ignoring the distribution on $D_j \bm{\beta}_j$ and treating $\bm{\beta}_l$, $l \ne j$ as fixed, $\bm{y}^{(j)}$ is normal with variance $\text{Var}(\bm{y}^{(j)}) = \sigma^2_\epsilon I_n + \sigma^2_b Z S^{+} Z^T$, where $S^+$ is the Moore-Penrose generalized inverse of matrix $S$ (as noted in Section \ref{sec_model}, $S$ may not be positive definite). Therefore, $\widehat{\text{Var}}(\bm{\hat{f}}_j) \approx H_j \widehat{\text{Var}}(\bm{y}^{(j)}) H_j^T$ where $\widehat{\text{Var}}(\bm{y}^{(j)})$ is an $n \times n$ estimate of $\text{Var}(\bm{y}^{(j)})$ with $\hat{\sigma}^2_{\epsilon}$ and $\hat{\sigma}^2_b$ plugged in for $\sigma^2_{\epsilon}$ and $\sigma^2_b$ respectively, and $\bm{\hat{f}}_j \overset{\cdot}{\sim} N(\bm{\hat{f}}_j, H_j \widehat{\text{Var}}(\bm{y}^{(j)}) H_j^T)$. The estimated variance of the fit at a single point $x$, which we denote as $\widehat{\text{Var}}(\hat{f}_j(x))$, is the corresponding diagonal element of $H_j \widehat{\text{Var}}(\bm{y}^{(j)}) H_j^T$. Therefore, asymptotic pointwise $1-\alpha$ confidence bands take the form $\hat{f}_j(x) \pm z_{1-\alpha/2} \sqrt{\widehat{\text{Var}}(\hat{f}_j(x))}$ where $\Phi(z_a) = a$ and $\Phi$ is the standard normal CDF, e.g. $z_{1-\alpha/2} = 1.96$ for $\alpha = 0.05$.

For the purposes of interpretation, we include the intercept term in the confidence band for the $j=1$ smooth, but not for the remaining smooths.

\subsubsection{Bayesian credible bands \label{BayesCB}}

Many authors, including \citet{wood2006generalized}, recommend using Bayesian confidence bands for nonparametric and semiparametric models, because the point estimates are themselves biased. While Bayesian credible bands do not remedy the bias, they are self consistent.

To this end, we replace the element-wise Laplace prior with the (generally improper) joint normal prior that is equivalent to the standard $\ell_2$ penalty: $\bm{\beta}_j \sim N \left(\bm{0}, ( \lambda_j D_j^T D_j)^{-1} \right)$. This leads to the posterior
\begin{equation}
\bm{\beta}_j | \bm{y} \overset{\cdot}{\sim} N \left(\bm{\hat{\beta}}_j, (\underset{W_j}{\underbrace{F^T_j \widehat{\text{Var}}(\bm{y}^{(j)})^{-1} F_j + \lambda_j D^T_j D_j}})^{-1} \right).
\label{bayesPost}
\end{equation}
We can then form simultaneous Bayesian credible bands for $\bm{f}_j | \bm{y}$ by simulating from the posterior (\ref{bayesPost}) and taking quantiles from  $F_j \bm{\beta}_j^b, b = 1,\ldots,B$. Alternatively, for a faster approximation we use frequentist confidence bands with $F_j W_j^{-1} F^T_j$ in place of $H_j \widehat{\text{Var}}(\bm{y}^{(j)}) H_j^T$. In practice, we have found the simultaneous credible bands and the faster approximation to be nearly indistinguishable.\footnote{It appears that the latter (faster) method is the default in the \verb|mgcv| package \citep{wood2006generalized}. As in \verb|mgcv|, we only need to compute the diagonal elements of $F_j W_j^{-1} F^T_j$ as \verb|rowSums|$((F_j W_j^{-1}) \circ F_j)$, where $\circ$ is the Hadamard (element-wise) product.}

As before, for the purposes of interpretation, we include the intercept term in the credible band for the $j=1$ smooth, but not for the remaining smooths.

\section{Simulation \label{sec_sim}}

We simulated data from a piecewise linear mean curve as shown in Figure \ref{nonDiff_smallN}. Each subject had a random intercept and is observed over only a portion of the domain. There are 50 subjects, each with between 4 and 14 measurements (450 total observations). The random intercepts were normally distributed with variance $\sigma_b^2 = 1$, and the overall noise was normally distributed with variance $\sigma_\epsilon^2 = 0.01$.

\begin{figure}[H]
\centering
\includegraphics[scale = 0.4]{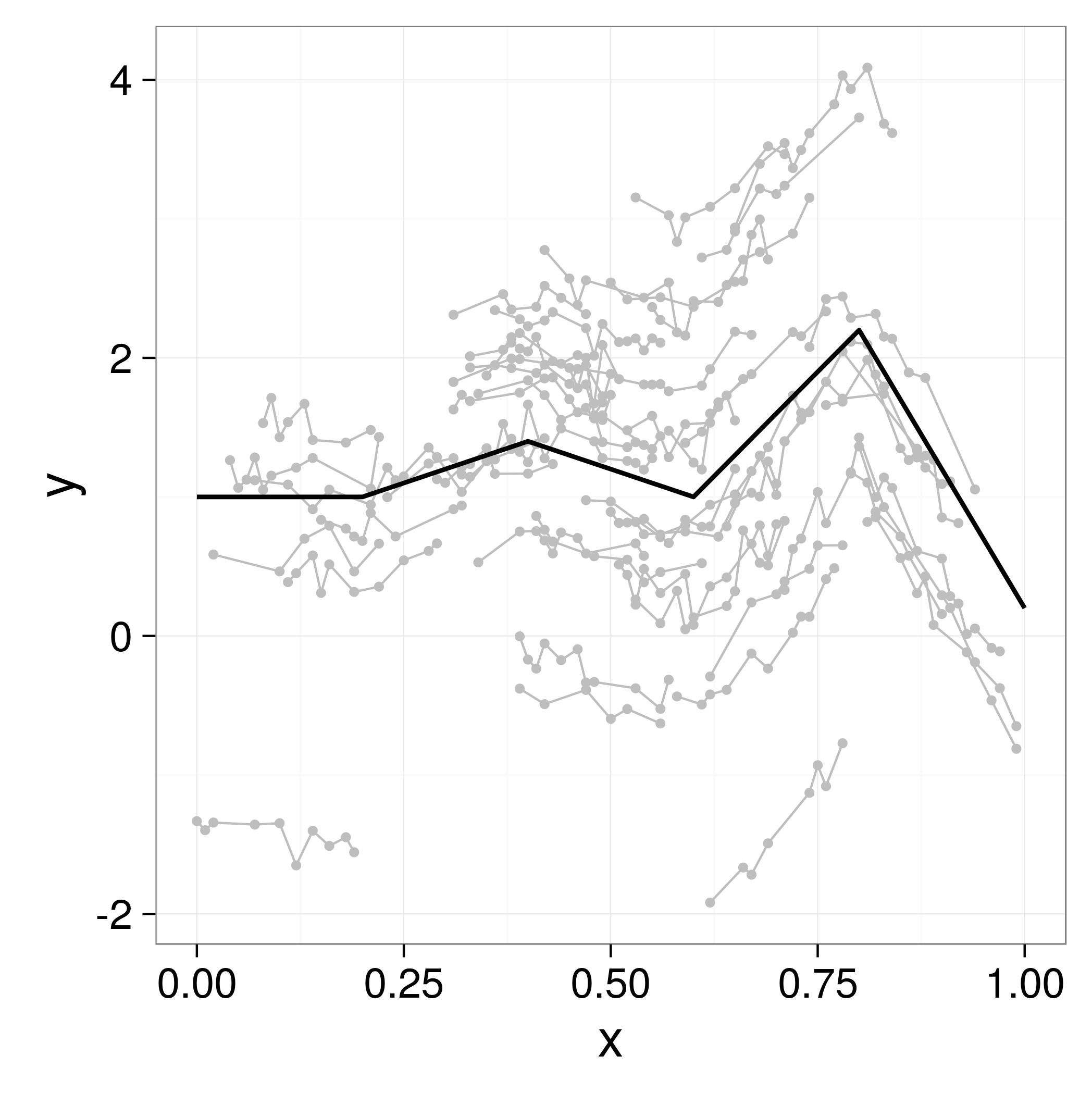}
\caption[Simulated data]{Simulated data: true marginal curve in black, observed (simulated) data in gray.}
\label{nonDiff_smallN}
\end{figure}

In all models, we used order 2 (degree 1) B-splines with $p=21$ basis functions.

\subsection{Frequentist estimation}

We fit models with $J=1$ smooth term and random intercepts. To obtain estimates for the $\ell_1$ penalized model, we used ADMM and 5-fold CV to minimize
\begin{equation}
\underset{\beta_0 \in \mathbb{R}, \bm{\beta} \in \mathbb{R}^{p-1}, \bm{b} \in \mathbb{R}^{N}}{\text{minimize}} \frac{1}{2} \| \bm{y} - \beta_0 \bm{1} - F \bm{\beta} - Z \bm{b} \|_2^2 + \lambda \|D^{(2)} \bm{\beta} \|_1 + \tau \bm{b}^T \bm{b}.
\label{sim_obj}
\end{equation}
where $Z_{il} = 1$ if observation $i$ belongs to subject $l$ and zero otherwise. As noted above, we used order 2 (degree 1) B-splines with $p=21$ basis functions, i.e. $F \in \mathbb{R}^{n \times (p-1)}$ where $n=450$ and $p=21$. After estimating $\lambda$ and $\tau$ via CV, we used LME updates to estimate $\sigma^2_b$ and $\bm{b}$ in the final model. We also fit an equivalent model with an $\ell_2$ penalty using the \verb|mgcv| package \citep{wood2006generalized}, i.e. with $(\lambda / 2) \|D^{(2)} \bm{\beta} \|_2^2$ in place of $\lambda \|D^{(2)} \bm{\beta} \|_1$ in (\ref{sim_obj}). Figure \ref{sim_point} shows the marginal mean with 95\% credible intervals, and Figure \ref{sim_subj} shows the subject-specific predicted curves.

\begin{figure}[H]
\centering
  \begin{subfigure}{0.48\linewidth}
    \includegraphics[width=\textwidth]{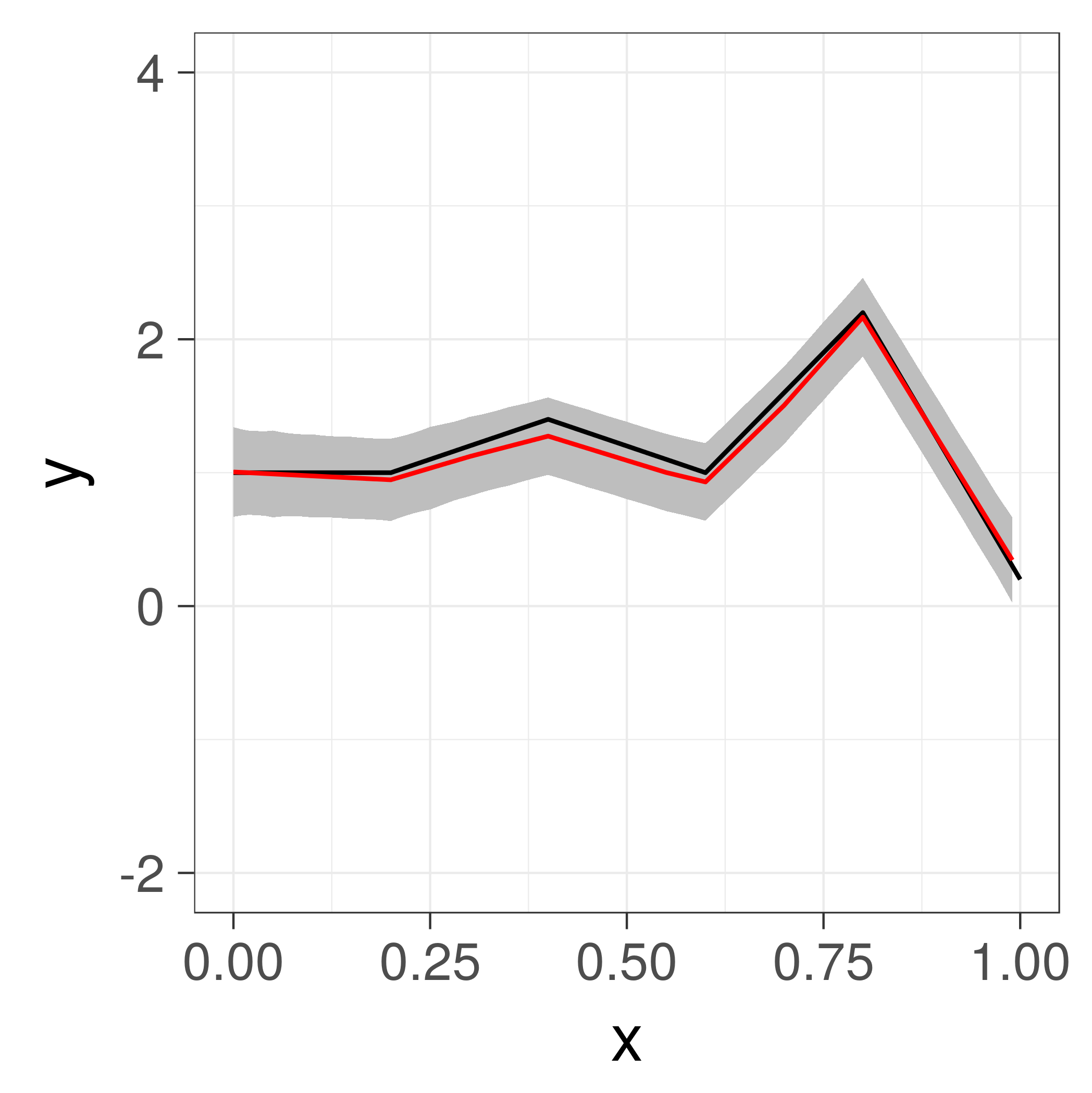}
    \caption{$\ell_1$ fit with ADMM and CV}
  \end{subfigure}
  \begin{subfigure}{0.48\linewidth}
    \includegraphics[width=\textwidth]{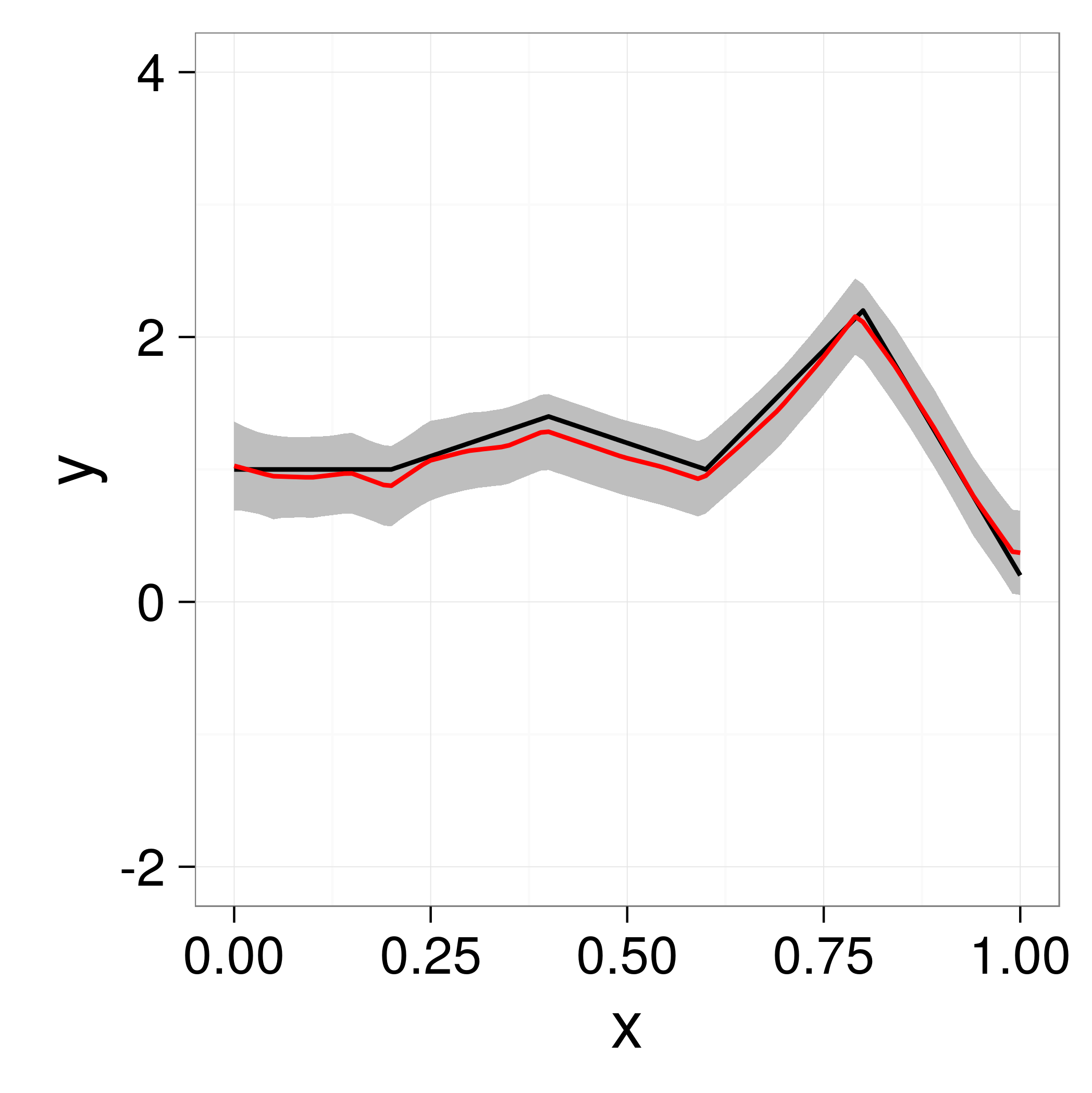}
    \caption{$\ell_2$ fit with mgcv \citep{wood2006generalized}}
  \end{subfigure}
  \caption[Simulation: marginal mean and credible bands from frequentist estimation]{Marginal mean and 95\% credible intervals from frequentist estimation: black is true marginal mean, red is estimated marginal mean}
  \label{sim_point}
\end{figure}

\begin{figure}[H]
\centering
  \begin{subfigure}{0.48\linewidth}
    \includegraphics[width=\textwidth]{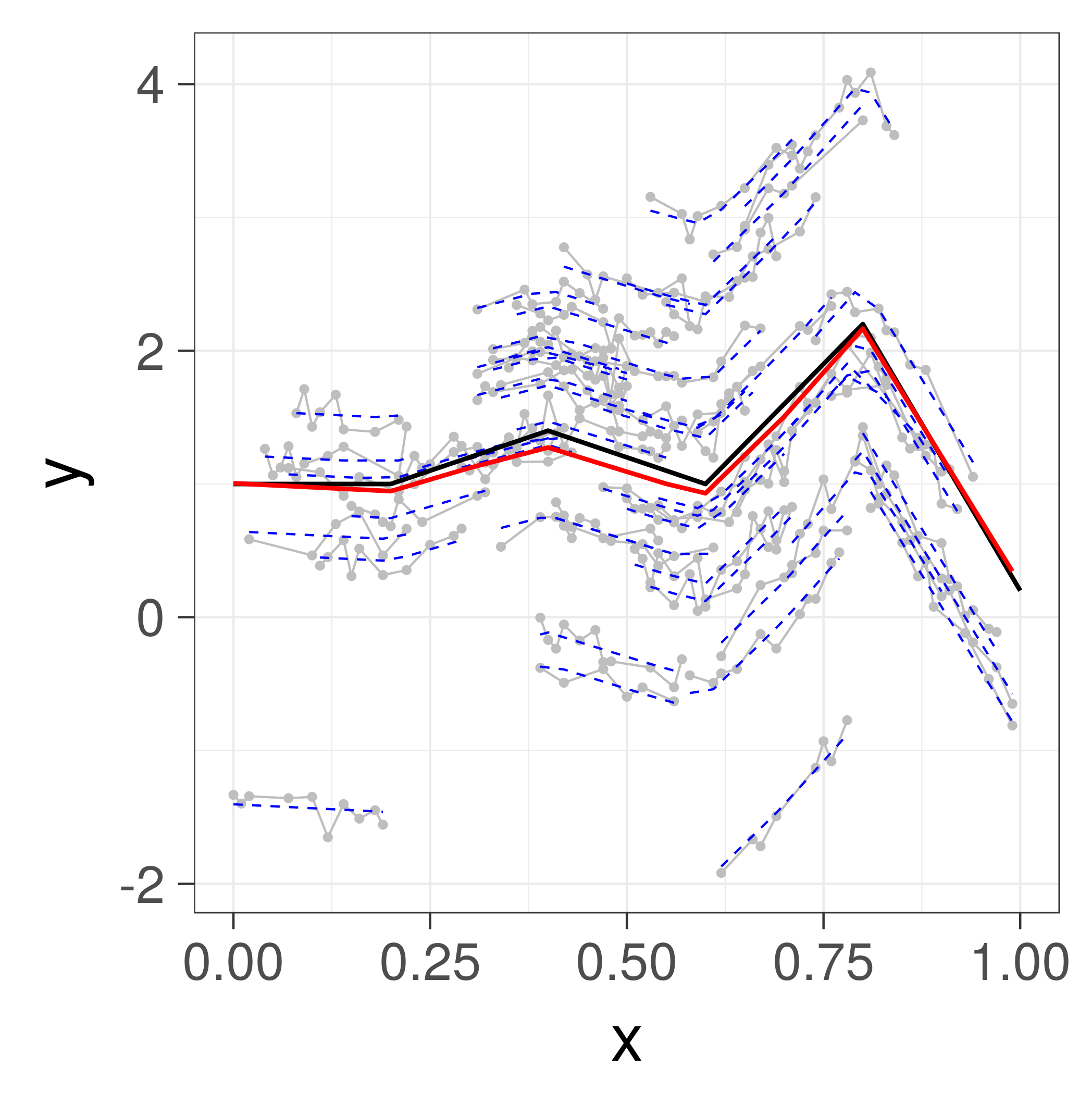}
    \caption{$\ell_1$ fit with ADMM and CV}
  \end{subfigure}
  \begin{subfigure}{0.48\linewidth}
    \includegraphics[width=\textwidth]{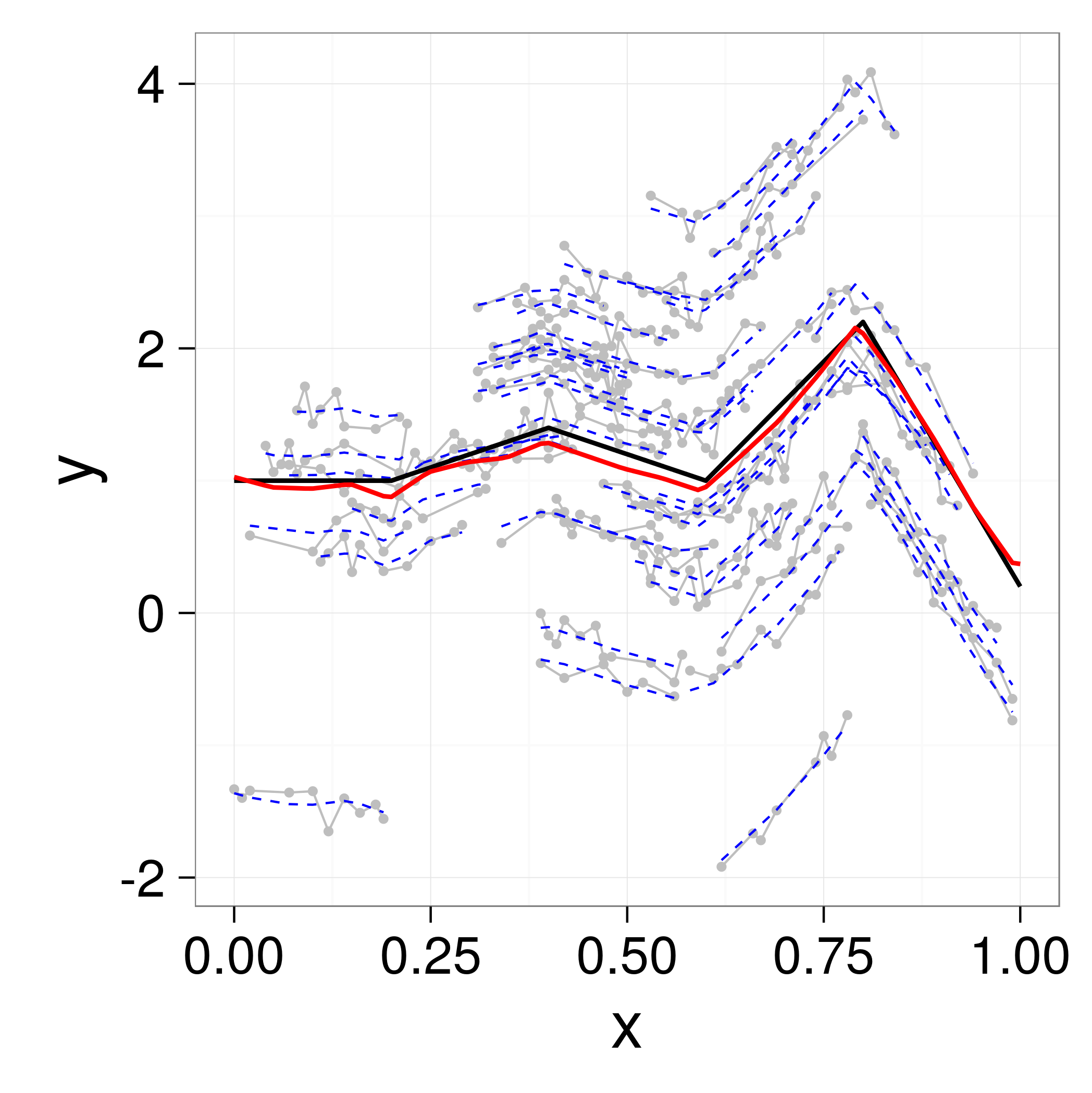}
    \caption{$\ell_2$ fit with mgcv \citep{wood2006generalized}}
  \end{subfigure}
  \caption[Simulation: subject-specific predicted curves from frequentist estimation]{Subject-specific predicted curves from frequentist estimation: black is true marginal mean, red is estimated marginal mean, blue is subject-specific curves}
  \label{sim_subj}
\end{figure}

As seen in Figures \ref{sim_point} and \ref{sim_subj}, the results from the $\ell_1$ and $\ell_2$ penalized models are very similar. However, the $\ell_1$ penalized model does slightly better at identifying the change points and the line segments. We explore this further in Section \ref{sec_change_point}.

Table \ref{tab_l1l2comp} compares the degrees of freedom and variance estimates from the $\ell_1$ penalized fit against those from the $\ell_2$ penalized fit. From Table \ref{tab_l1l2comp}, we see that the ridge degrees of freedom $\hat{\text{df}}^{\text{ridge}}$ appears reasonable, as it is near the estimate for the $\ell_2$ penalized model. The true degrees of freedom $\hat{\text{df}}$ also seems reasonable. Ideally, the degrees of freedom for the $\ell_1$ penalized fit should equal six, as there are four change points and we are using a second order difference penalty (see Section \ref{stab_fast_df}).

\begin{table}[H]
  \centering
  \caption[Simulation: estimated degrees of freedom and variance]{Estimated degrees of freedom for smooth $F$ and variance in $\ell_1$ and $\ell_2$ penalized models}
    \begin{tabular}{lccc}
    \hline \hline
    & \multicolumn{2}{c}{Penalty} \\
    \cline{2-3}
    Estimator & $\ell_1$ & $\ell_2$ & Truth\\
    \hline
    $\hat{\text{df}}^{\text{ridge}}$ & 17.7 & 19.0 & --\\
    $\hat{\text{df}}$ & 10 & -- & --\\
    $\hat{\sigma}^2_\epsilon$ & 0.0093 & 0.0106 & 0.01\\
    $\hat{\sigma}^2_b$ & 1.06 & 1.05 & 1
    \end{tabular}
  \label{tab_l1l2comp}
\end{table}

Table \ref{dfCompSim} compares the different estimates of degrees of freedom. In this simulation, the degrees of freedom based on the ridge approximation is larger than that from Stein's formula, and the approximations based on restricted derivatives are equal or near the estimate with Stein's formula.

\begin{table}[H]
  \centering
  \caption[Simulation: comparison of degrees of freedom estimates]{Comparison of degrees of freedom estimates for the $\ell_1$ penalized model}
  \begin{tabular}{lcccc}
  \hline \hline
  && \multicolumn{3}{c}{Smooth} \\
  \cline{3-5}
  Estimator & Description & Overall & $F$ & $Z$ \\
  \hline
$\hat{\text{df}}$ & Stein (\ref{dfAll}) and (\ref{dfj})  & 14.3 & 10.0 & 3.29 \\
$\tilde{\text{df}}$ & Restricted (\ref{dfResj}) and (\ref{dfRes}) & 14.6 & 10.0 & 3.63 \\
$\tilde{\text{df}}^{\text{ADMM}}$ & ADMM (\ref{dfADMMj}) and (\ref{dfADMM}) & 13.6 & 9.0 & 3.63 \\
$\hat{\text{df}}^{\text{ridge}}$ & Ridge (\ref{dfAllRidge}) and (\ref{dfRidgej}) & 22.1 & 17.7 & 3.31 \\
$\tilde{\text{df}}^{\text{ridge}}$ & Ridge restricted (\ref{dfRidgeResj}) and (\ref{dfRidgeRes}) & 22.4 & 17.8 & 3.63
  \end{tabular}
  \label{dfCompSim}
\end{table}

\subsection{Bayesian estimation}

We modeled the data as $\bm{y} | \bm{b} = \beta_0 \bm{1} + F \bm{\beta} + \bm{b} + \bm{\epsilon}$ where
\begin{align*}
\bm{\epsilon} &\sim N(\bm{0}, \sigma^2_\epsilon I) \\
\bm{b} &\sim N(0, \sigma^2_b I) \\
D^{(2)} \bm{\beta} &\sim \text{Laplace}(\bm{0}, \sigma^2_\lambda I) \\
p(\sigma_\epsilon) &\propto 1 \\
p(\sigma_b) &\propto 1 \\
p(\log(\sigma_\lambda)) &\propto 1.
\end{align*}
We also fit models with normal and diffuse priors for $D^{(2)} \bm{\beta}$.

We fit all models with \verb|rstan| \citep{rstan}, each with four chains of 2,000 iterations with the first 1,000 iterations of each chain used as warmup. The MCMC chains, not shown, appeared to be reasonably well mixing and stationary, and had $\hat{R}$ values under 1.1 \citep[see][]{gelman2014bayesian}.\footnote{As described by \citet[][pp. 284--285]{gelman2014bayesian}, for each scalar parameter, $\hat{R}$ is the square root of the ratio of the marginal posterior variance (a weighed average of between- and within-chain variances) to the mean within-chain variance. As the number of iterations in the MCMC chains goes to infinity, $\hat{R}$ converges to 1 from above. Consequently, $\hat{R}$ can be interpreted as a scale reduction factor, and \citet{gelman2014bayesian} recommend ensuring that $\hat{R}<1$ for all parameters.} Figure \ref{Bayes_CI} shows the marginal mean with 95\% credible intervals, and Figure \ref{Bayes_point} shows point estimates.

\begin{figure}[H]
\centering
\begin{subfigure}{0.32\textwidth}
  \includegraphics[width = \linewidth]{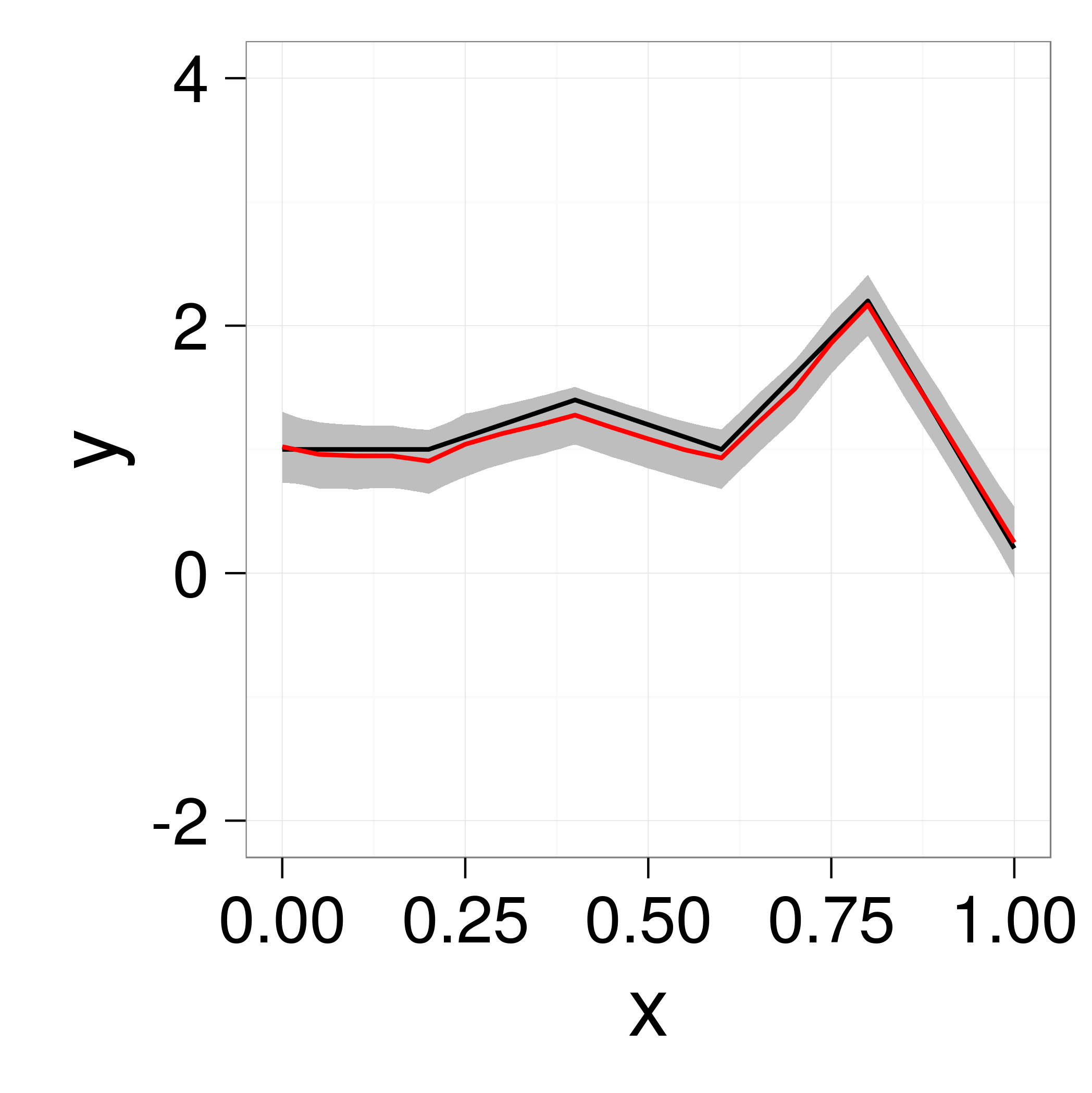}
  \caption{$D \bm{\beta} \sim \text{Laplace}(\bm{0}, \sigma^2_\lambda I)$}
\end{subfigure}
\begin{subfigure}{0.32\textwidth}
  \includegraphics[width = \linewidth]{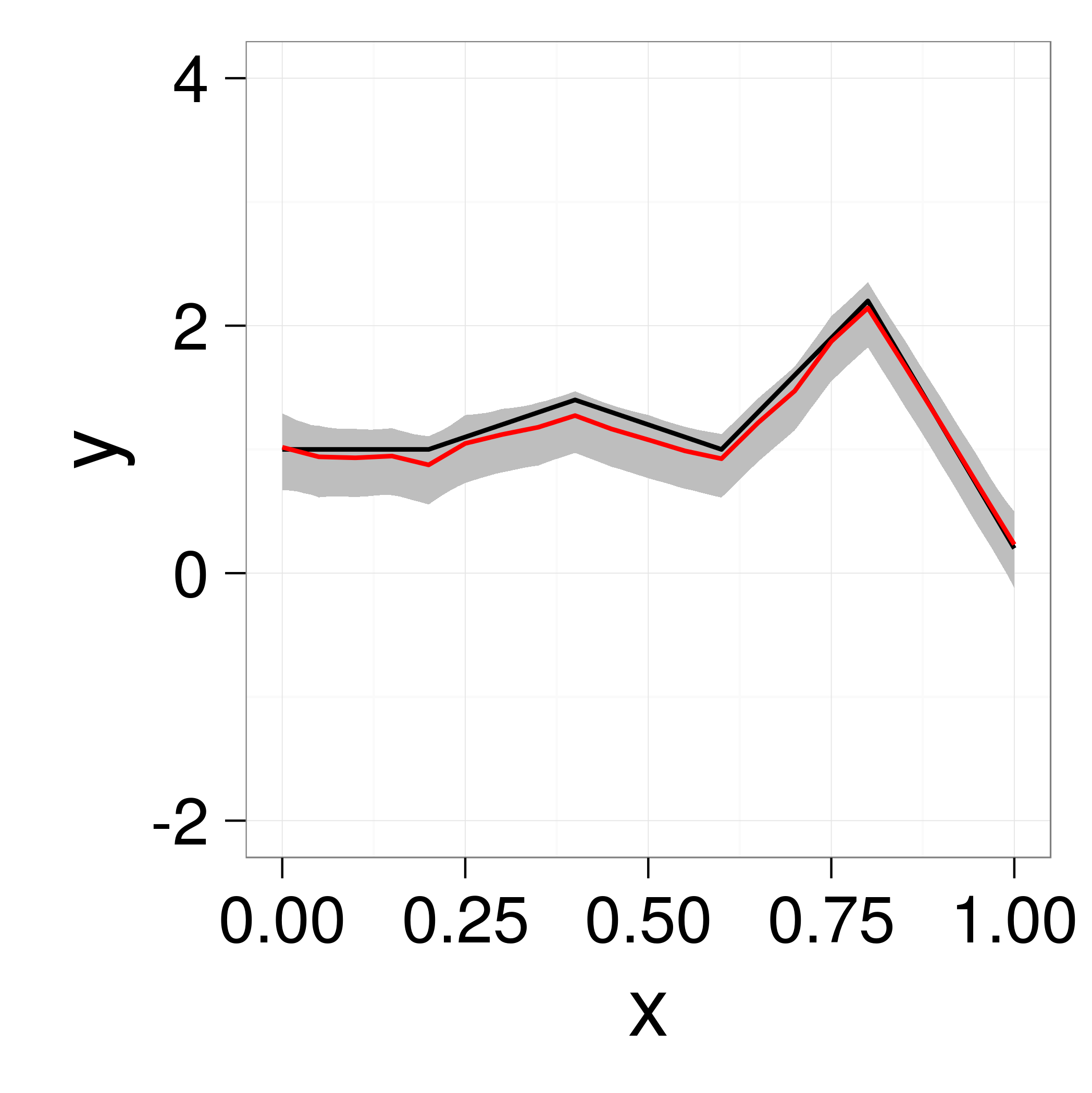}
  \caption{$D \bm{\beta} \sim N(\bm{0}, \sigma^2_\lambda I)$}
\end{subfigure}
\begin{subfigure}{0.32\textwidth}
  \includegraphics[width = \linewidth]{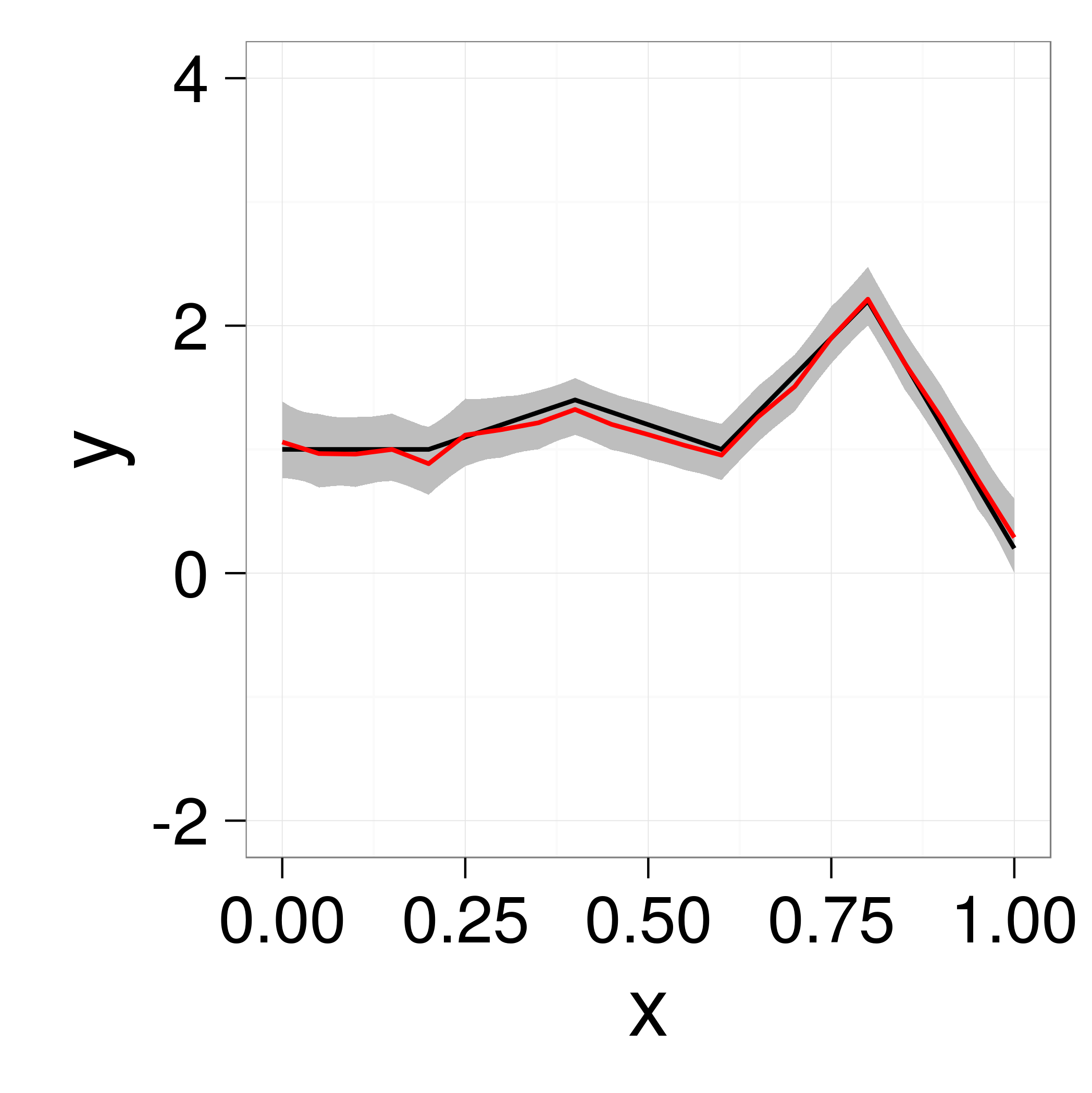}
  \caption{No prior on $D \bm{\beta}$}
\end{subfigure}
\caption[Simulation: marginal mean and credible bands from Bayesian estimation]{Credible bands for Bayesian models with order 2 (degree 1) B-splines. Black is true marginal mean, red dashed is estimated marginal mean, gray area is 95\% credible interval}
\label{Bayes_CI}
\end{figure}

\begin{figure}[H]
\centering
\begin{subfigure}{0.32\textwidth}
  \includegraphics[width = \linewidth]{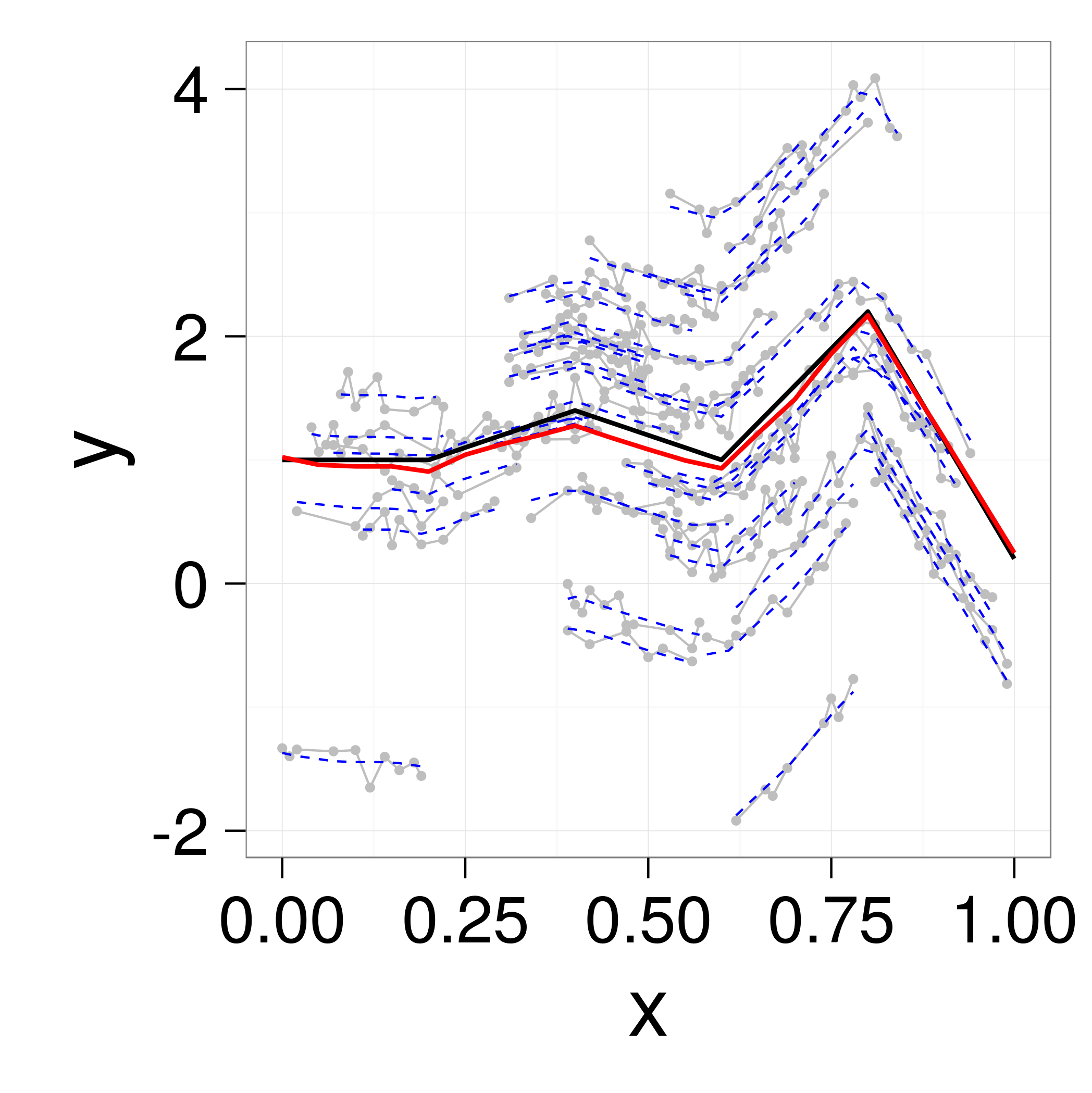}
  \caption{$D \bm{\beta} \sim \text{Laplace}(\bm{0}, \sigma^2_\lambda I)$}
\end{subfigure}
\begin{subfigure}{0.32\textwidth}
  \includegraphics[width = \linewidth]{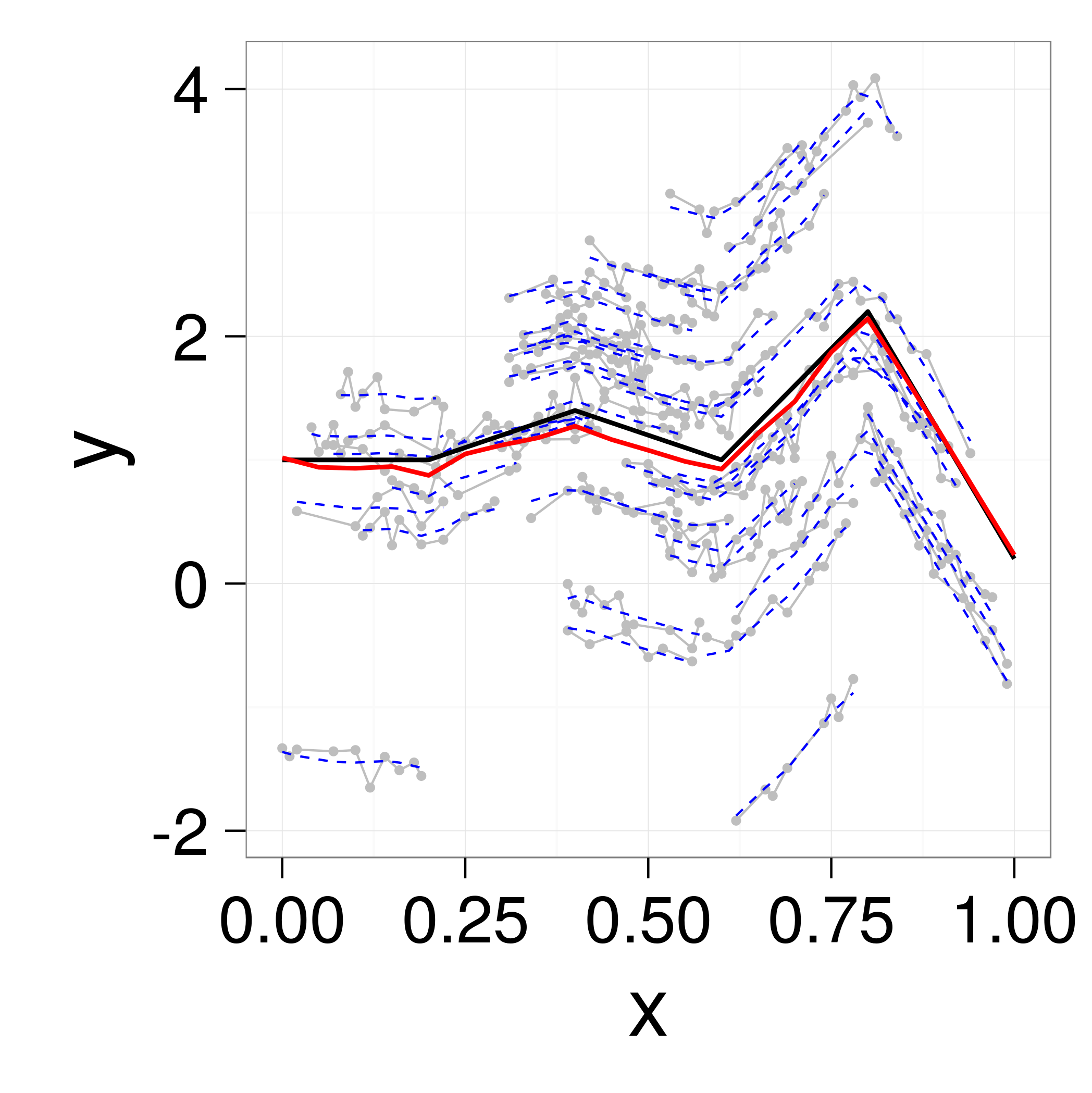}
  \caption{$D \bm{\beta} \sim N(\bm{0}, \sigma^2_\lambda I)$}
\end{subfigure}
\begin{subfigure}{0.32\textwidth}
  \includegraphics[width = \linewidth]{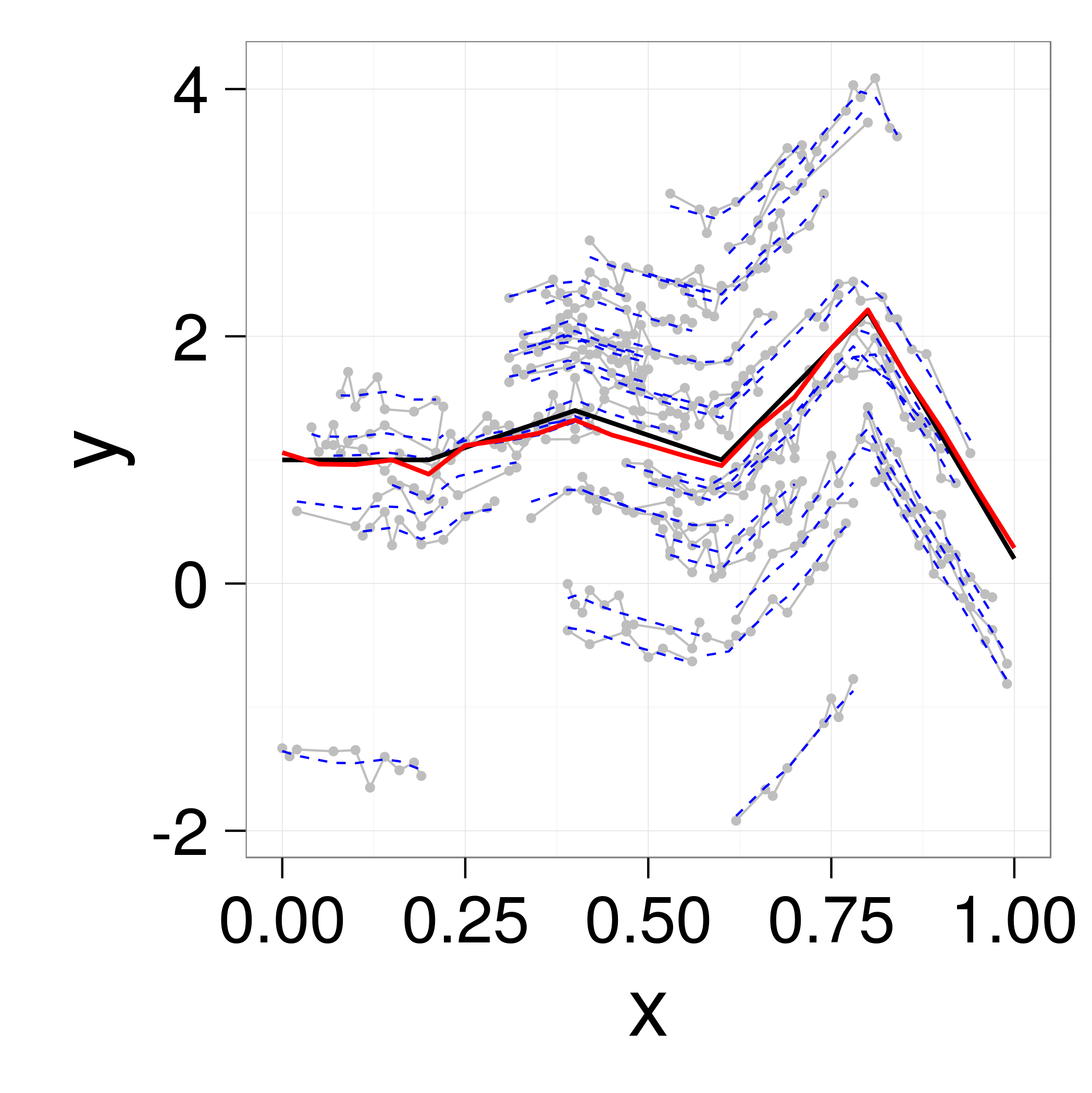}
  \caption{No prior on $D \bm{\beta}$}
\end{subfigure}
\caption[Simulation: subject-specific predicted curves from Bayesian estimation]{Subject-specific predicted curves from Bayesian models fit with order 2 (degree 1) B-splines. Gray is observed data, black is true marginal mean, red dashed is estimated marginal mean, and blue dashed is subject-specific predictions}
\label{Bayes_point}
\end{figure}

As seen in Figures \ref{Bayes_CI} and \ref{Bayes_point}, all models performed well and gave similar fits as above. Similar to before, the Laplace prior appears to better enforce a piece-wise linear fit, particularly around $x=0.2$.

\subsection{Change point detection \label{sec_change_point}}

We simulated 1,000 datasets with the same generating mechanism used to produce the data shown in Figure \ref{nonDiff_smallN} and measured the performance of the $\ell_1$ and $\ell_2$ penalized models on two criteria: 1) the number of inflection points found, and 2) the distance between the estimated inflection points and the closest true inflection point. To that end, let $\mathcal{T}=\{\tau_1,\ldots,\tau_4\}$ be the set of true inflection points, and $M = \max_{x \in \mathcal{X}}|\hat{f}''(x)|$ be the maximum absolute second derivative of the estimated function, where $\mathcal{X}=\{x_1,x_2,\ldots\}$ is the ordered set of unique simulated $x$ values. We approximate $\hat{f}''$ by
$$
\hat{f}''(x_i) \approx \frac{(\hat{f}(x_{i+1}) -\hat{f}(x_{i}))/(x_{i+1} - x_i) - (\hat{f}(x_{i}) -\hat{f}(x_{i-1}))/(x_{i} - x_{i-1})}{x_{i+1} - x_i}.
$$
Then let $\mathcal{I} = \{x \in \mathcal{X} : |\hat{f}''(x)| \ge c M \}$ be the set of estimated inflection points, where $c \in (0,1)$ is a cutoff value defining how large the second derivative must be to be counted as an inflection point. Also, let $n_{\mathcal{I}} = |\mathcal{I}|$ be the number of estimated inflection points, and $\bar{d} = n^{-1}_\mathcal{I} \sum_{x \in \mathcal{I}} \min_{\tau \in \mathcal{T}} |x - \tau|$ be the mean absolute deviance of the estimated inflection points.

Figure \ref{sim_cp} shows the results from 1,000 simulated datasets. The $\ell_1$ penalized model was better able to 1) find the correct number of inflection points, and 2) determine the location of the inflection points.

\begin{figure}[H]
  \begin{subfigure}{0.48\textwidth}
    \includegraphics[width = \linewidth]{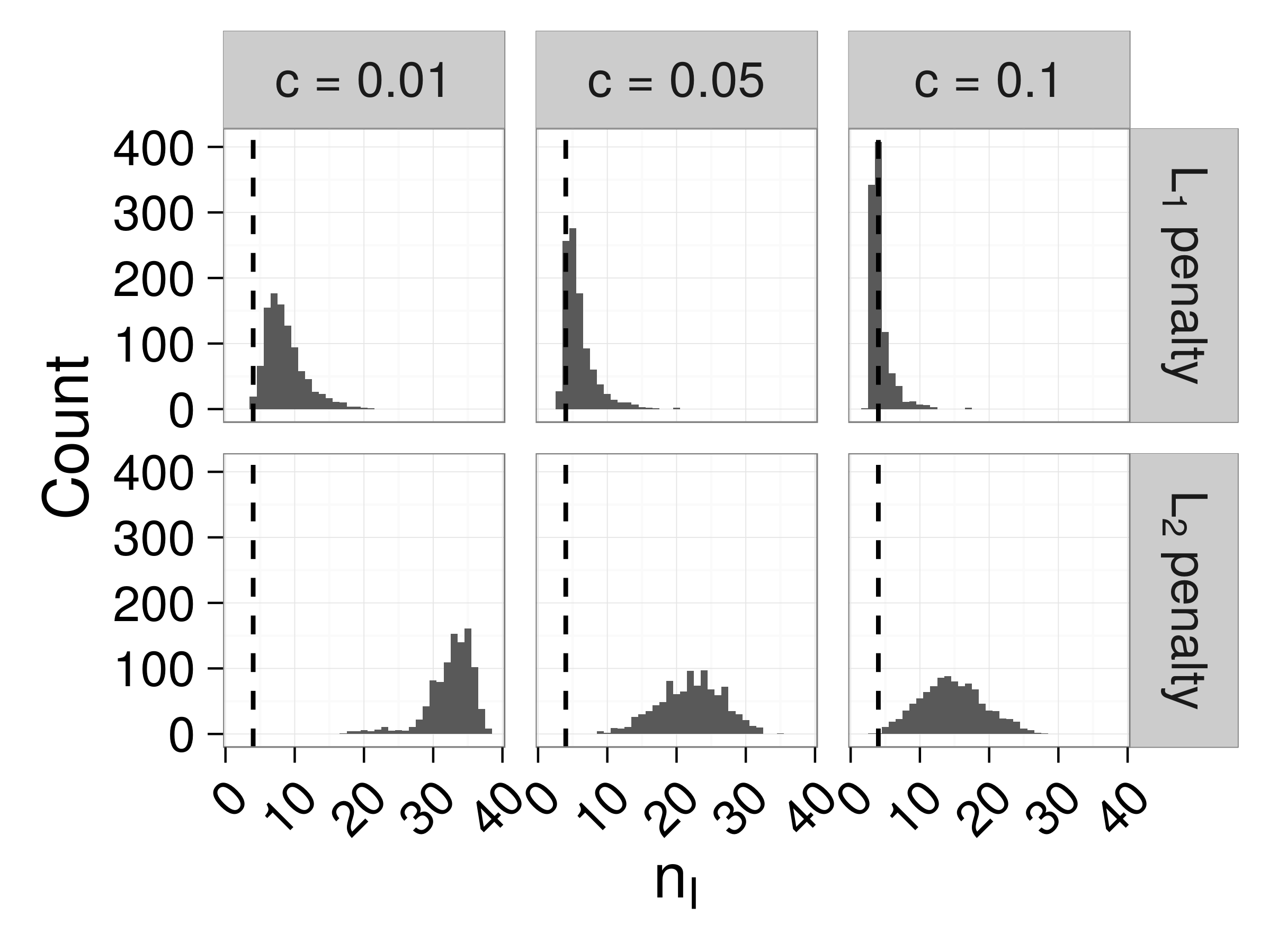}
    \caption{Number of estimated inflection points}
  \end{subfigure}
  \begin{subfigure}{0.48\textwidth}
    \includegraphics[width = \linewidth]{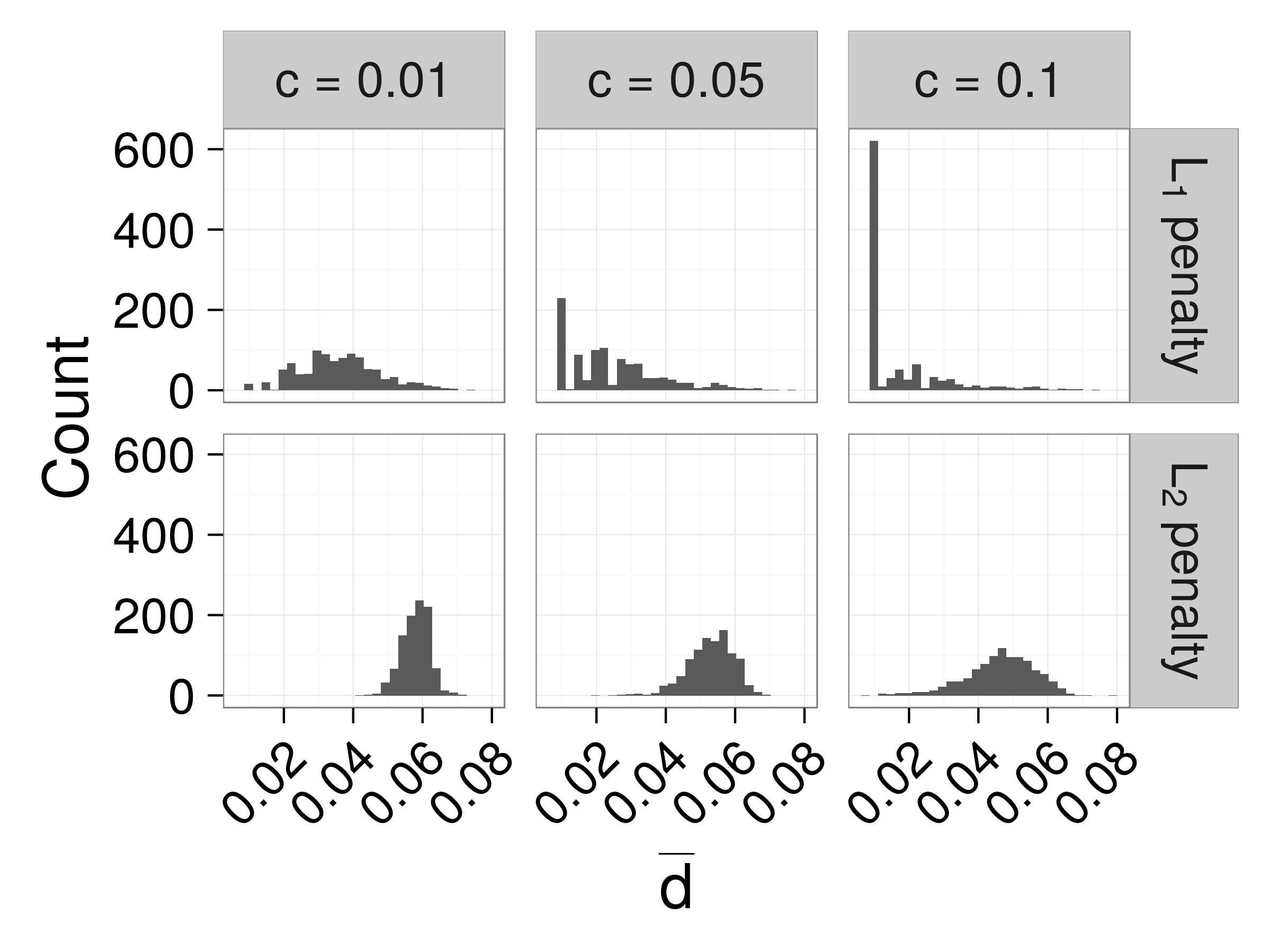}
    \caption{Mean absolute deviance}
  \end{subfigure}
  \caption[Simulation: change point detection results]{Results from 1,000 simulated datasets measuring ability of the models to detect inflection points}
  \label{sim_cp}
\end{figure}

\subsection{Coverage probability}

We simulated 1,000 datasets with the same generating mechanism used to produce the data shown in Figure \ref{nonDiff_smallN} and measured the coverage probability of the approximate Bayesian credible bands described in Section \ref{BayesCB} for the $\ell_1$ penalized model, and simultaneous Bayesian credible bands for the $\ell_2$ penalized model \citep{wood2006generalized}. Figure \ref{cp} shows the coverage probabilities for both approaches. As seen in Figure \ref{cp}, the confidence bands perform similarly and are near the nominal rate over most of the $x$ domain. Both approaches have difficulty maintaining nominal coverage at the edges of the $x$ domain.

\begin{figure}[H]
\centering
\includegraphics[scale = 0.48]{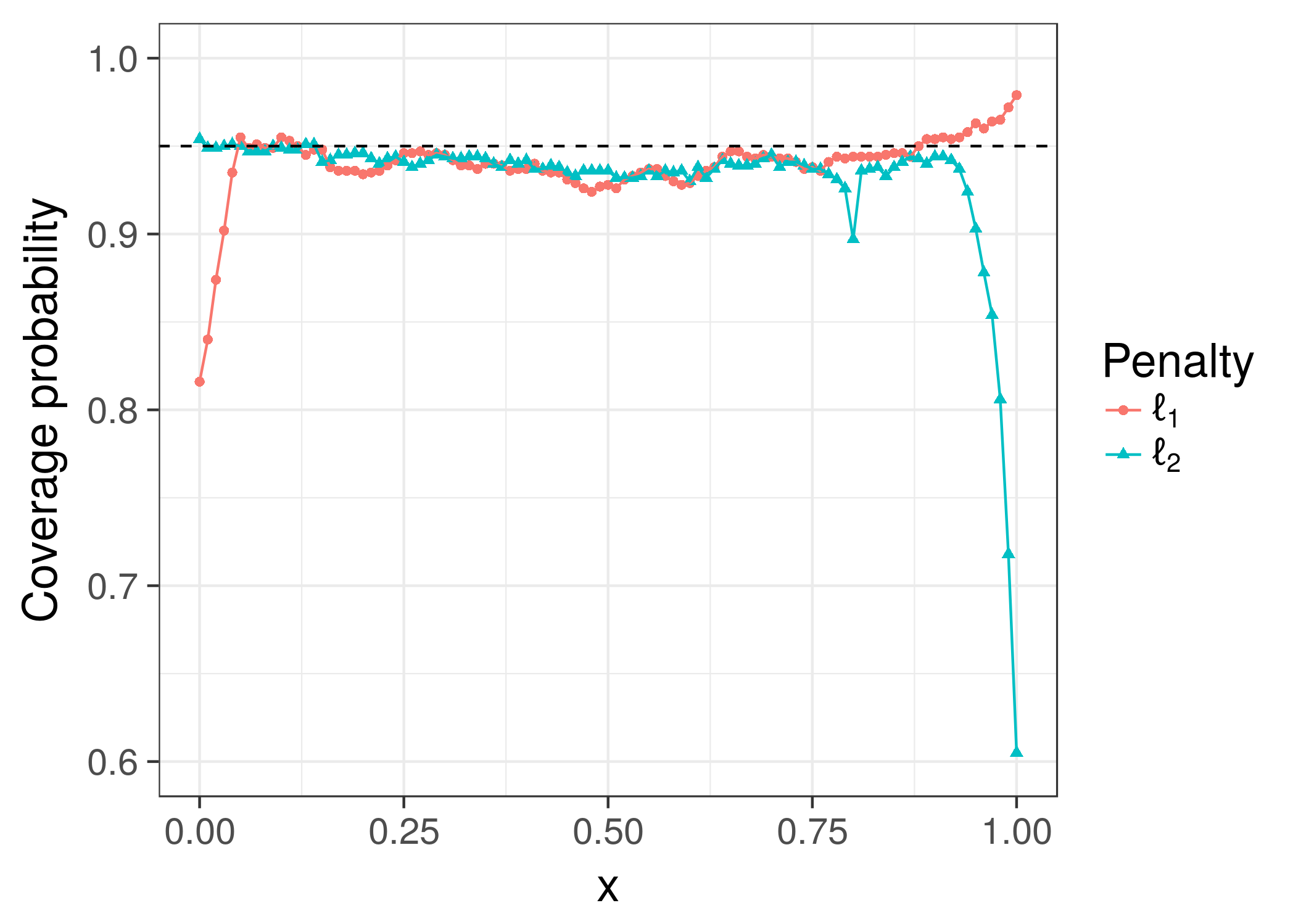}
\caption{Coverage probability from 1,000 simulated datasets using approximate Bayesian credible bands for the $\ell_1$ penalized model and simultaneous Bayesian credible bands for the $\ell_2$ penalized model.}
\label{cp}
\end{figure}

\section{Application \label{sec_app}}

\subsection{Data description and preparation}

In this section, we analyze electrodermal activity (EDA) data collected as part of a stress study. In brief, all subjects completed a written questionnaire prior to the study, which categorized the subjects as having either low vigilance or high vigilance personality types. During the study, all participants wore wristbands that measured EDA while undergoing stress-inducing activities, including giving a public speech and performing mental arithmetic in front of an audience. The scientific questions were: 1) Is EDA higher among high vigilance subjects, and 2) when did trends in stress levels change? In this section, we demonstrate how P-splines with an $\ell_1$ penalty can address both questions.

The raw EDA data are shown in Figure \ref{EDA_raw}. After excluding subjects who had EDA measurements of essentially zero throughout the entire study, we were left with ten high vigilance subjects and seven low vigilance subjects.

\begin{figure}[H]
\centering
  \includegraphics[scale = 0.42]{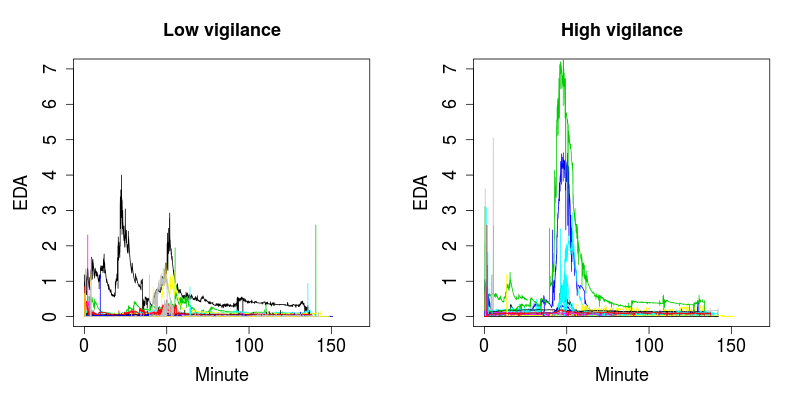}
\caption[Raw electrodermal activity data]{Raw electrodermal activity (EDA) data by experimental group}
\label{EDA_raw}
\end{figure}

To remove the extreme second-by-second fluctuations in EDA, which we believe are artifacts of the measurement device as opposed to real biological signals, we smoothed each curve separately with a Nadaraya--Watson kernel estimator using the \verb|ksmooth| function in \textsf{R}. We then thinned the data to reduce computational burden, taking 100 evenly spaced measurements from each subject. Figure \ref{thinned} shows the results of this process for a single subject, and Figure \ref{eda_data} shows the prepared data for all subjects. Because of the limited number of subjects, as well as issues of misalignment in the time series across individuals, the results presented here should be considered as illustrative rather than of full scientific validity.
\begin{figure}[H]
\centering
\includegraphics[scale = 0.42]{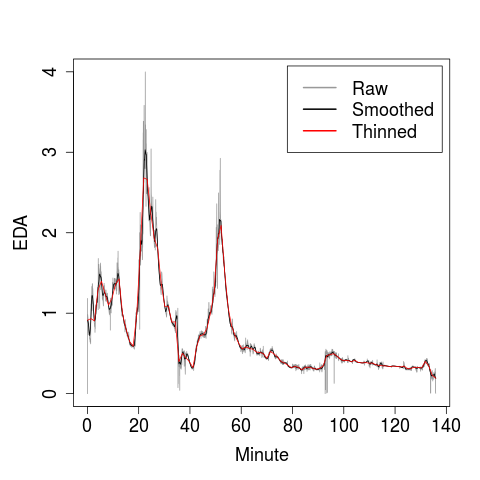}
\caption[Raw, smoothed, and thinned electrodermal activity data]{Raw, smoothed, and thinned electrodermal activity data for a single subject}
\label{thinned}
\end{figure}

\begin{figure}[H]
  \centering
  \includegraphics[scale=0.48]{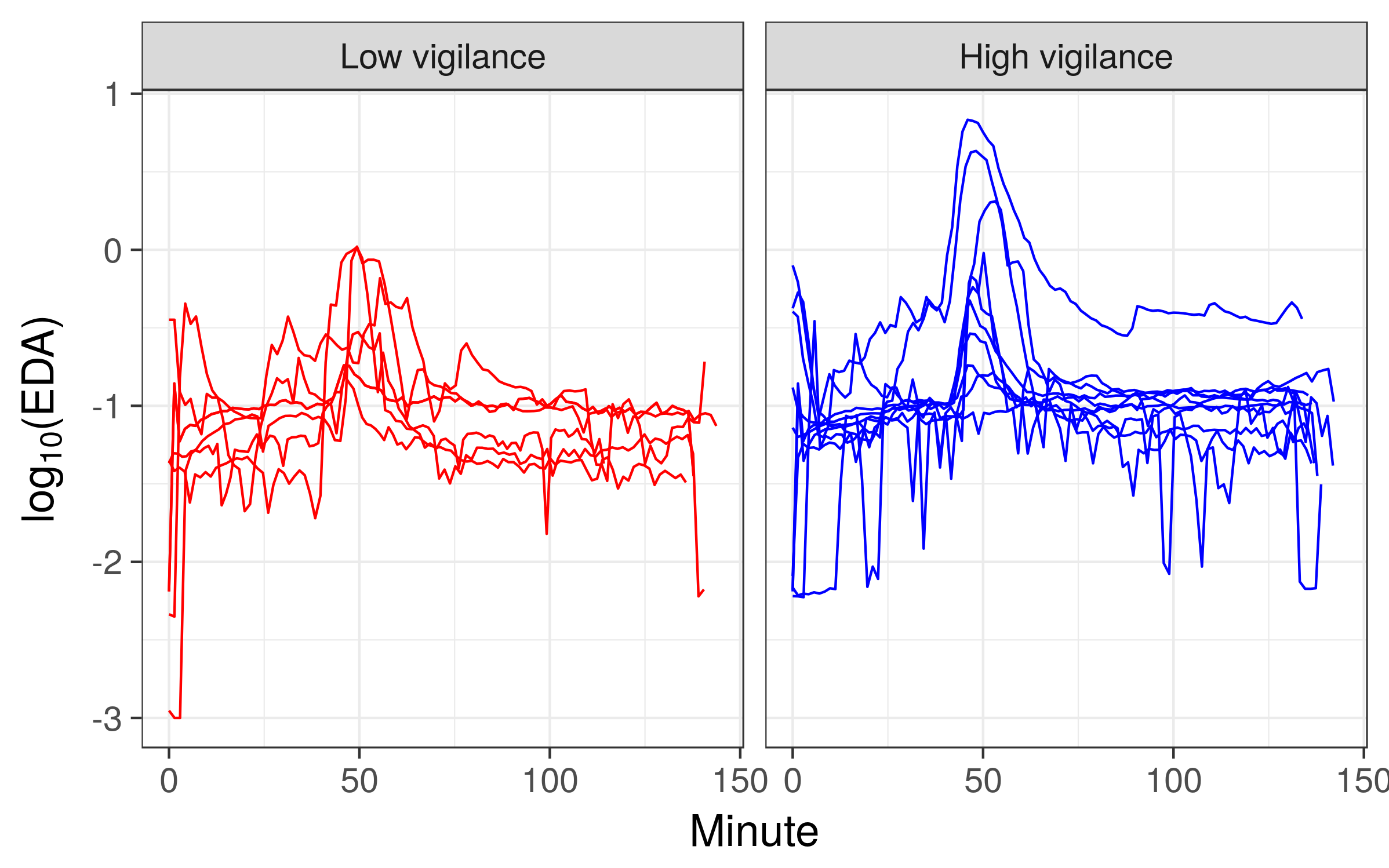}
  \caption[Electrodermal activity data used for analysis]{Electrodermal activity (EDA) data used in the analysis (seven low vigilance and ten high vigilance subjects). Note: subjects not aligned in time (x-axis).}
  \label{eda_data}
\end{figure}

\subsection{Models \label{sec:models}}

In all models, we fit the structure
$$
y_i(x) = \beta_0 + \beta_1(x) + \mathbbm{1}_{\text{high}}[i] \beta_2(x) + b_i(x) + \epsilon_{i}(x)
$$
where $x$ represents time in minutes, $\mathbbm{1}_{\text{high}}[i]=1$ if subject $i$ has high vigilance and $ \mathbbm{1}_{\text{high}}[i]=0$ if subject $i$ has low vigilance, $b_i(x)$ are random subject-specific curves, and $\epsilon_i(x) \sim N(0, \sigma^2_\epsilon)$. For $\beta_1(x)$, $\beta_2(x)$, and $b_i(x)$, we used a fourth order B-spline basis with 31 basis functions each and a second order difference penalty ($k=1$).

Written in matrix notation, the $\ell_1$ penalized model is
\begin{equation}
\min  \frac{1}{2} \| \bm{y} - \beta_0 \bm{1} - \sum_{j=1}^2 F_j \bm{\beta}_j - Z \bm{b} \|_2^2 + \sum_{j=1}^2 \lambda_j \|D^{(2)} \bm{\beta}_j \|_1 + \bm{b}^T S \bm{b}
\label{l1mod}
\end{equation}
where $\bm{y}$ is a stacked vector for subjects $i=1,\ldots,17$, $F_1$ is an $n \times p$ design matrix where $n=1,700$ and $p=31$, and $F_2 = \text{diag}( \mathbbm{1}_{\text{high}}[\bm{i}]) F_1$ where $\bm{i}$ is an $n \times 1$ vector of subject IDs. In other words, $F_2$ is equal to $F_1$, but with rows corresponding to low vigilance subjects zeroed out. We set
$$
Z = \begin{bmatrix}
Z_1 \\
& \ddots \\
&& Z_{17}
\end{bmatrix}
$$
where each $Z_i$ is an $n_i \times 31$ random effects design matrix of order 4 B-splines evaluated at the input points for subject $i$, and 
$$
S = \begin{bmatrix}
S_1 \\
& \ddots \\
&& S_{17}
\end{bmatrix}
$$
where $S_{i,jl} = \int \phi''_{ij}(t) \phi''_{il}(t) dt$ are smoothing spline penalty matrices. We also mean-centered $F_1$ as described in Section \ref{sec_model}, with the corresponding changes in dimensions.

To fit a comparable $\ell_2$ penalized model, in which $\lambda_j \|D^{(2)} \bm{\beta}_j \|_1$ in (\ref{l1mod}) is replaced with $(\lambda_j/2) \|D^{(2)} \bm{\beta}_j \|_2^2$, we rotated the random effect design and penalty matrices $Z$ and $S$ as described in Section \ref{sec_model}. To facilitate the use of existing software, we used a normal prior for the ``unpenalized" random effect coefficients, i.e. $\bm{\breve{b}}_f \sim N(\bm{0}, \sigma^2_f I)$.

We also fit a Bayesian model using the same rotations and equivalent penalties as above. In particular, we modeled the data as $\bm{y}|\bm{b} = \beta_0 \bm{1} + \sum_{j=1}^J F_j \bm{\beta}_j + \check{Z}_r \bm{\check{b}}_r + \breve{Z}_f \bm{\breve{b}}_f + \bm{\epsilon}$ where
\begin{align}
\bm{\check{b}}_r &\sim N(\bm{0}, \sigma^2_r I) \nonumber \\
\bm{\breve{b}}_f &\sim N(\bm{0}, \sigma^2_f I) \nonumber \\
\left( D_j \bm{\beta}_j \right)_l &\sim \text{Laplace}(0, a_j) \text{ for } a_j = \sigma^2_\epsilon / (2 \lambda_j), l=1,\ldots, p_j - k_j - 1, j=1,\ldots,J \label{lapPrior} \\
\bm{\epsilon} &\sim N \left( \bm{0}, \sigma^2_\epsilon I_n \right). \nonumber
\end{align}

\subsection{Results \label{sec_res}}

\subsubsection{Frequentist estimation}

We tried to use CV to estimate the smoothing parameters for the $\ell_1$ penalized model. However, with only 17 subjects split between two groups, we only did 3-fold CV. CV did not find a visually reasonable fit so we set the tuning parameters by hand.

Figure \ref{EDA_L1_smooth} shows the estimated marginal mean and 95\% credible bands for the $\ell_1$ penalized model, and Figure \ref{EDA_L1_subj} shows the subject-specific predicted curves for the $\ell_1$ penalized model. As seen in Figure \ref{EDA_L1_smooth1}, our model identified a few inflection points, particularly near minutes 40, 50, and 60. From Figure \ref{EDA_L1_smooth2} it appears that the difference in EDA between the low and high vigilance subjects was not statistically significant. Also, as seen in Figure \ref{EDA_L1_subj}, the subject-specific predicted curves are shrunk towards the mean, which is expected, because the predicted curves are analogous to BLUPs, although they are not linear smoothers.

\begin{figure}[H]
\centering
\begin{subfigure}{0.48\textwidth}
  \includegraphics[width = \linewidth]{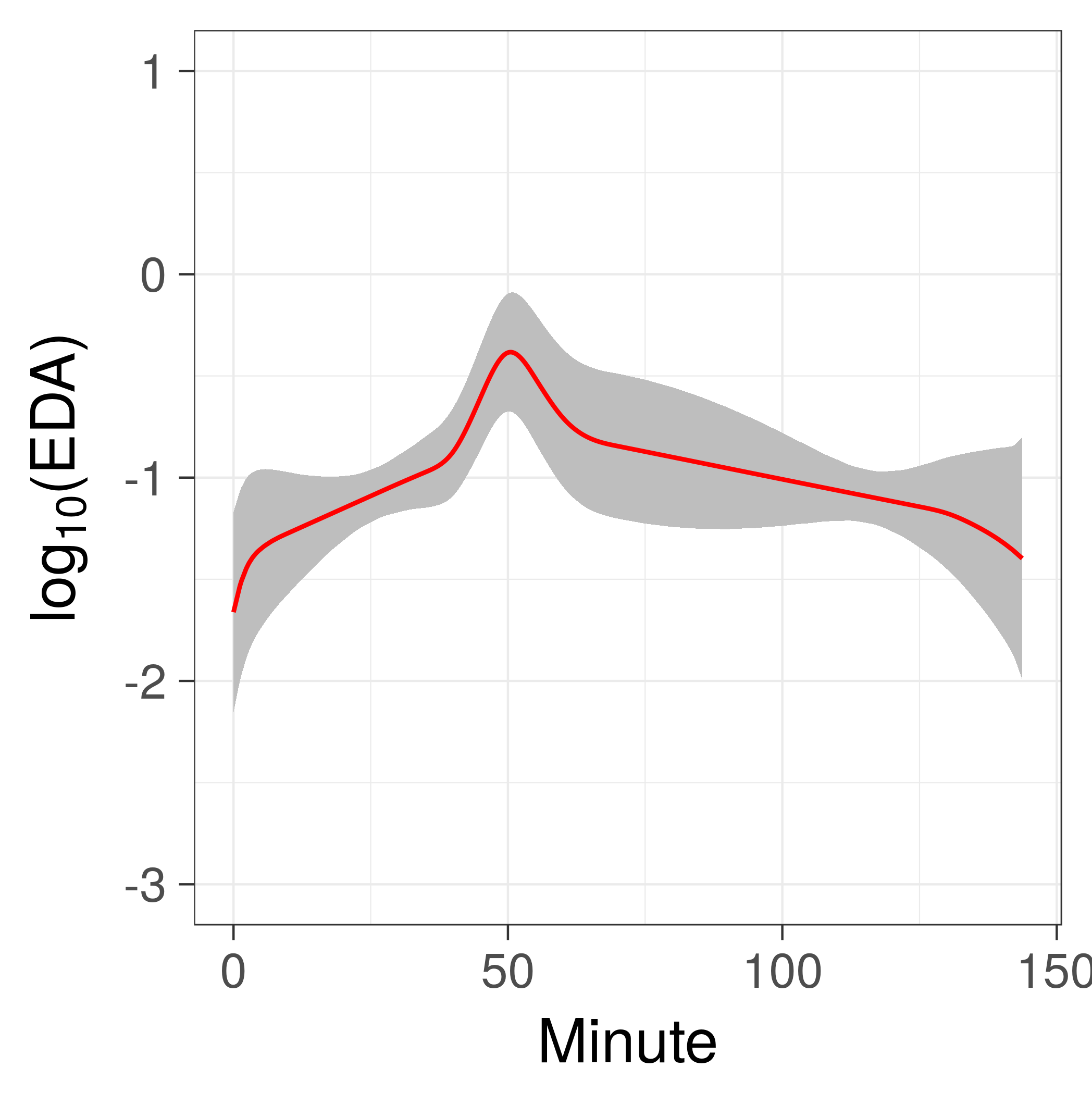}
  \caption{$\hat{\beta}_1(x)$ (low vigilance)}
  \label{EDA_L1_smooth1}
\end{subfigure}
\begin{subfigure}{0.48\textwidth}
  \includegraphics[width = \linewidth]{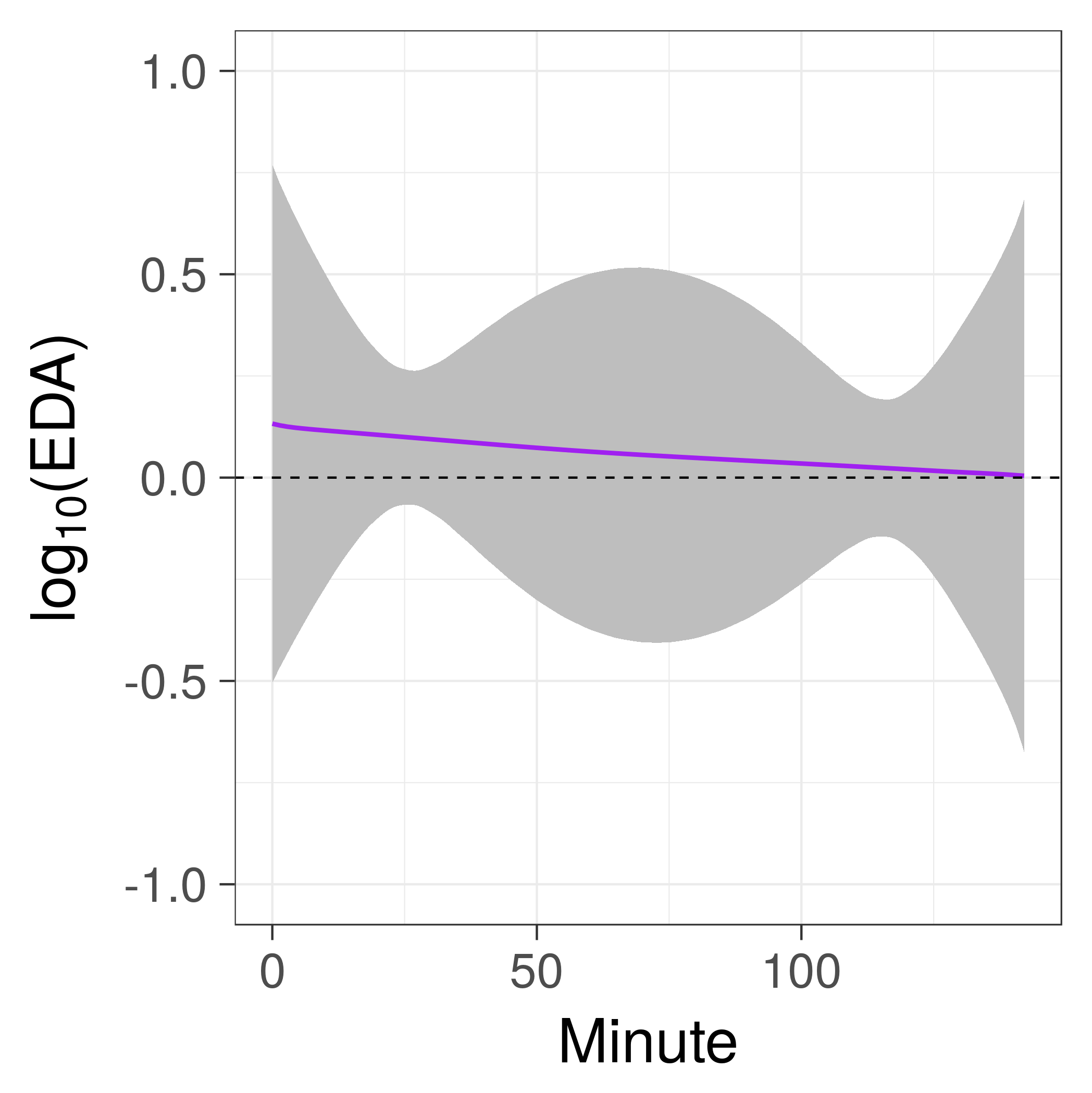}
  \caption{$\hat{\beta}_2(x)$ ($\text{high} - \text{low}$ vigilance)}
  \label{EDA_L1_smooth2}
\end{subfigure}
  \caption{$\ell_1$ penalized model: parameter estimates with 95\% confidence bands}
  \label{EDA_L1_smooth}
\end{figure}

\begin{figure}[H]
  \centering
  \includegraphics[scale = 0.48]{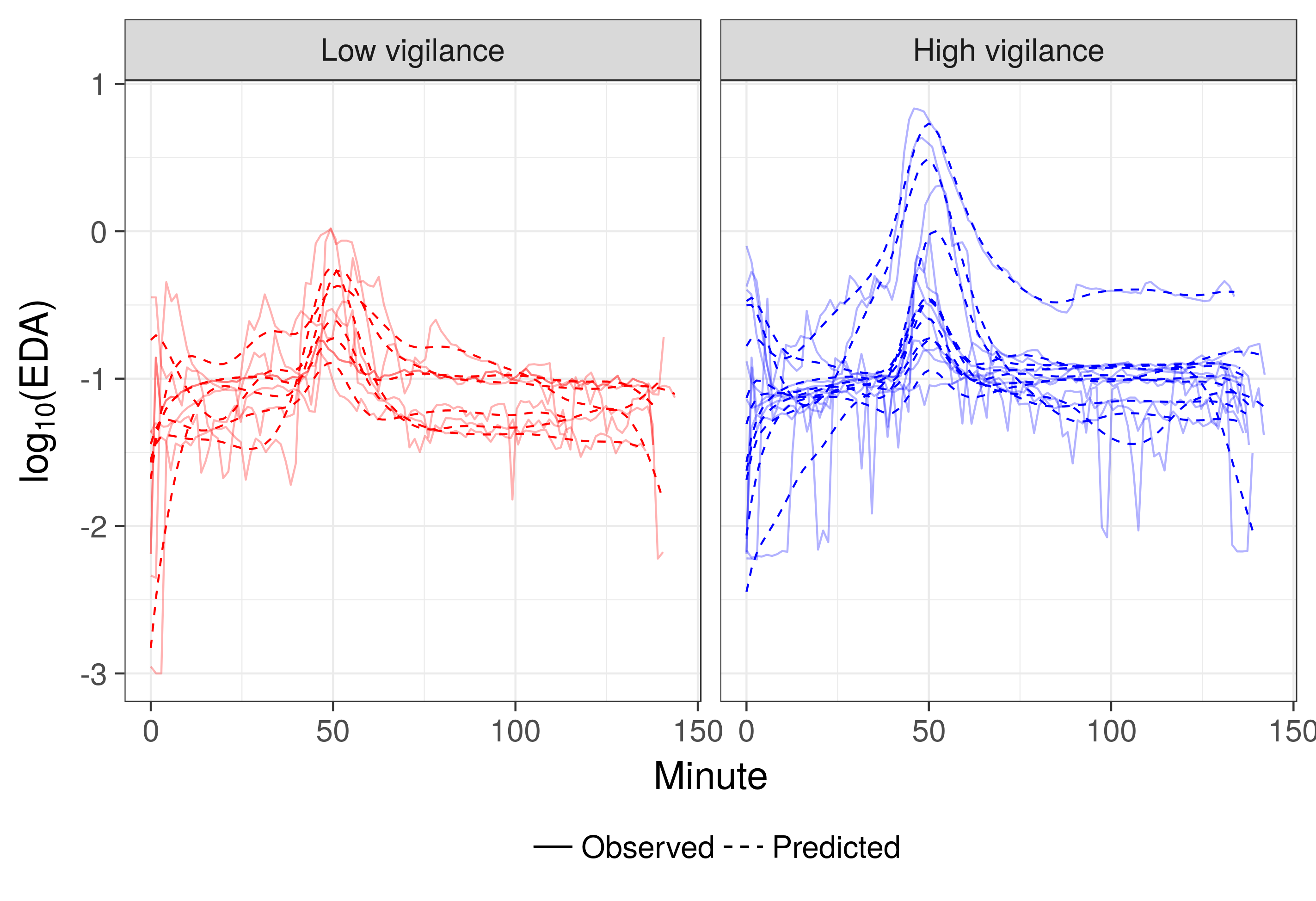}
  \caption{$\ell_1$ penalized model: subject-specific predicted curves}
  \label{EDA_L1_subj}
\end{figure}

Figure \ref{EDA_GAMM_smooths} shows the estimated marginal mean and 95\% credible bands for the $\ell_2$ penalized model, and Figure \ref{l2Pred} shows the subject-specific predicted curves for the $\ell_2$ penalized model. The estimate shown in Figure \ref{eda_l2_low_fit} is similar to that shown in Figure \ref{EDA_L1_smooth1}, though the inflection points are slightly less pronounced in Figure \ref{eda_l2_low_fit}. The results in Figure \ref{eda_l2_diff_fit} are for the most part substantively the same as those in Figure \ref{EDA_L1_smooth2}; the $\ell_2$ penalized model does not show a statistically significant difference between the low and high vigilance subjects, with the possible exception of minutes 45 to 66. As seen in Figure \ref{l2Pred}, the predicted subject-specific curves from the $\ell_2$ penalized model are also shrunk towards the mean.

\begin{figure}[H]
\centering
\begin{subfigure}{0.48\textwidth}
  \includegraphics[width = \linewidth]{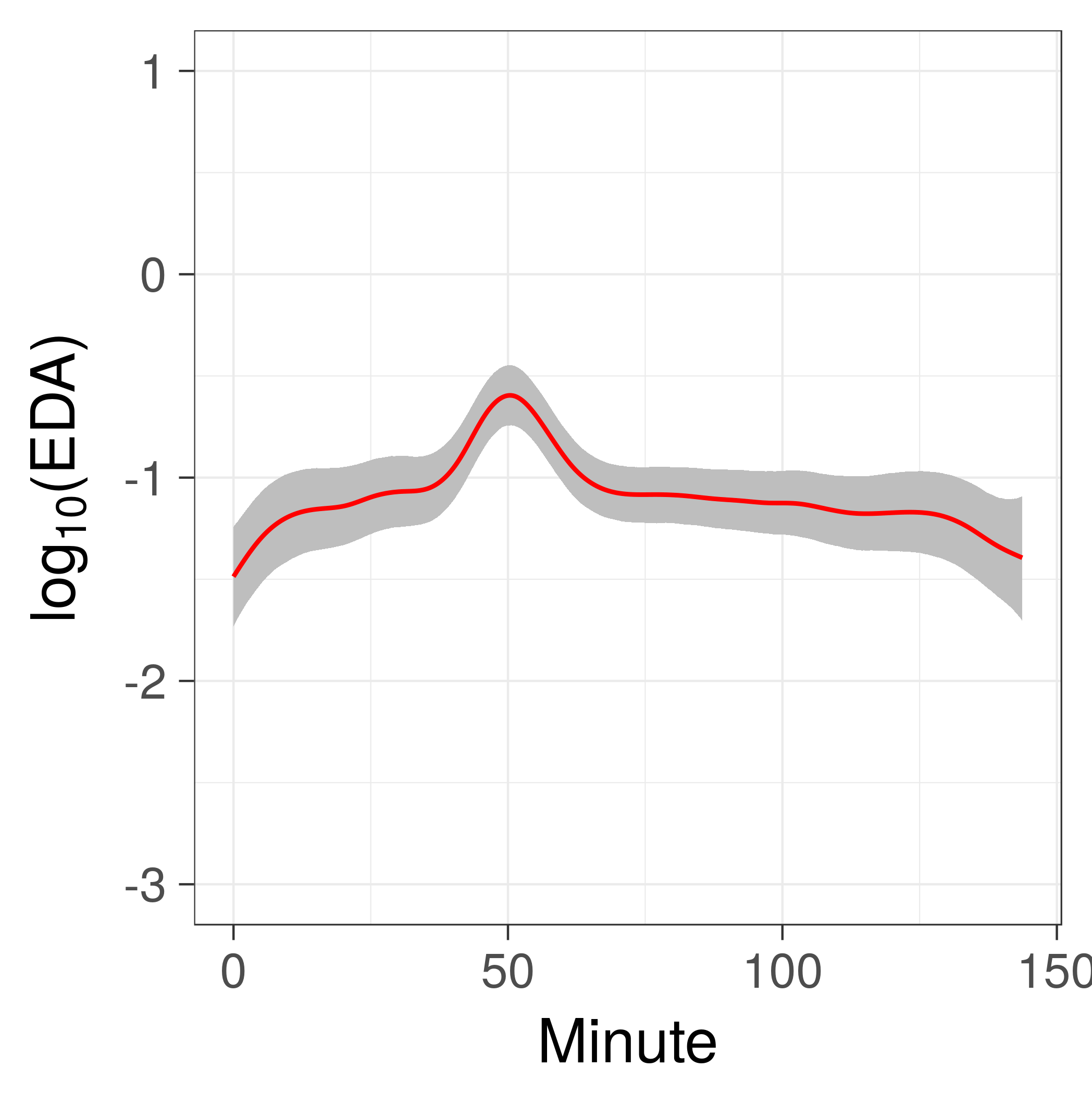}
  \caption{$\hat{\beta}_1(x)$ (low vigilance)}
  \label{eda_l2_low_fit}
\end{subfigure}
\begin{subfigure}{0.48\textwidth}
  \includegraphics[width = \linewidth]{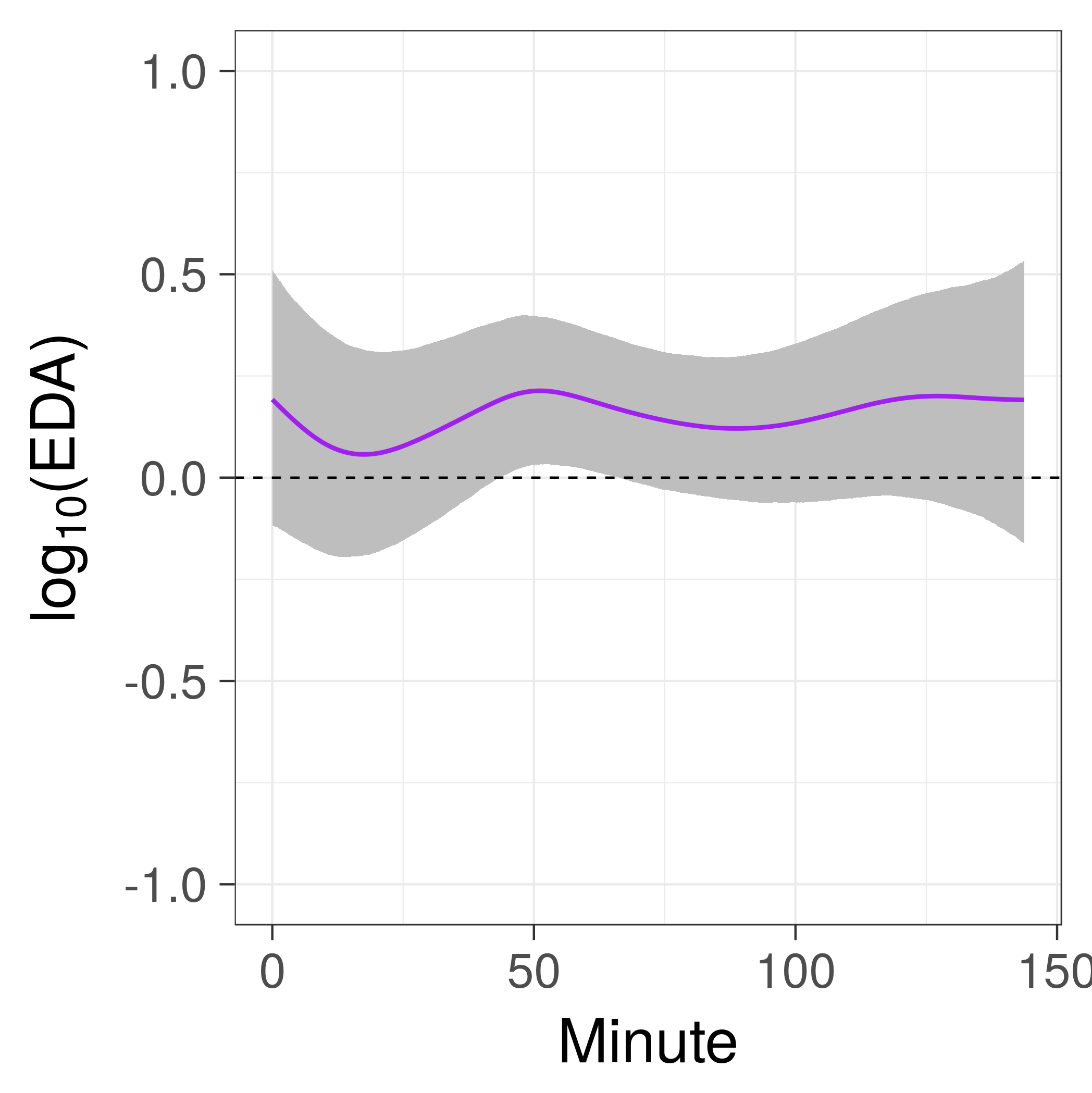}
  \caption{$\hat{\beta}_2(x)$ ($\text{high} - \text{low}$ vigilance)}
  \label{eda_l2_diff_fit}
\end{subfigure}
\caption{$\ell_2$ penalized model: parameter estimates with 95\% confidence bands}
\label{EDA_GAMM_smooths}
\end{figure}

\begin{figure}[H]
  \centering
  \includegraphics[scale = 0.48]{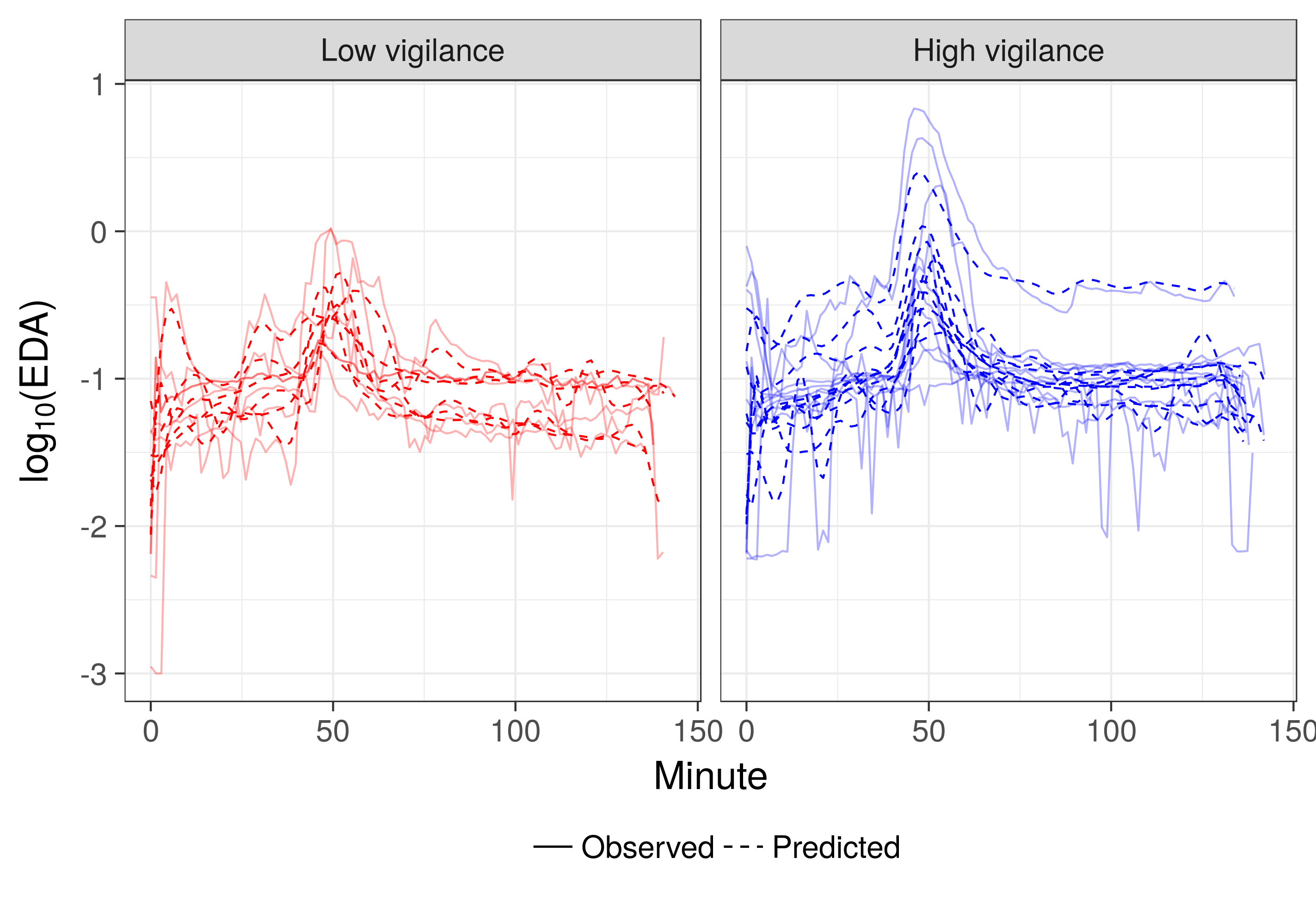}
  \caption{$\ell_2$ penalized model: subject-specific predicted curves}
  \label{l2Pred}
\end{figure}

Table \ref{dfCompApp} shows the estimated degrees of freedom for the $\ell_1$ penalized model. Stein's method $\hat{\text{df}}$ ((\ref{dfAll}) and (\ref{dfj})) and the ridge approximation $\hat{\text{df}}^{\text{ridge}}$ ((\ref{dfAllRidge}) and (\ref{dfRidgej})) were numerically instable ($A^T A + \Omega$ and $U^T U + \Omega^{\text{ridge}}$ were computationally singular). Therefore we used the restricted derivative approximation $\tilde{\text{df}}$ to estimate the variance, as described in Section \ref{sec_var}. In the $\ell_2$ penalized model, smooth $F_1$ had 14.2 degrees of freedom, and smooth $F_2$ had 6.96 degrees of freedom.

\begin{table}[H]
  \centering
  \caption{Comparison of degrees of freedom estimates for the $\ell_1$ penalized model}
  \begin{tabular}{cccccc}
  \hline \hline
  && \multicolumn{4}{c}{Smooth} \\
  \cline{3-6}
  Estimator & Description & Overall & $F_1$ & $F_2$ & $Z$ \\
  \hline
$\hat{\text{df}}$ & Stein (\ref{dfAll}) and (\ref{dfj})  & -- & -- & -- & -- \\
$\tilde{\text{df}}$ & Restricted (\ref{dfResj}) and (\ref{dfRes}) & 194 & 10.0 & 2.00 & 181 \\
$\tilde{\text{df}}^{\text{ADMM}}$ & ADMM (\ref{dfADMMj}) and (\ref{dfADMM}) & 193 & 9.00 & 2.00 & 181 \\
$\hat{\text{df}}^{\text{ridge}}$ & Ridge (\ref{dfAllRidge}) and (\ref{dfRidgej}) & -- & -- & -- & -- \\
$\tilde{\text{df}}^{\text{ridge}}$ & Ridge restricted (\ref{dfRidgeResj}) and (\ref{dfRidgeRes}) & 216 & 21.1 & 13.50 & 181
  \end{tabular}
  \label{dfCompApp}
\end{table}

\subsubsection{Bayesian estimation}

We fit the model described in Section \ref{sec:models} with an element-wise Laplace prior on $D \bm{\beta}$ given by (\ref{lapPrior}). To fit the model, we used \verb|rstan| \citep{rstan} with four chains of 5,000 iterations each, with the first 2,500 iterations of each chain used as warmup. The MCMC chains, not shown, appeared to be reasonably well mixing and stationary with $\hat{R}$ values under 1.1 \citep[see][]{gelman2014bayesian}. Figure \ref{EDA_Bayes_CI} shows the marginal means with 95\% credible bands, and Figure \ref{EDA_Bayes_subj_curves} shows the subject-specific curves. Similar to the $\ell_2$ penalized model, the Bayesian model found a slightly statistically significant difference between low and high vigilance between minutes 42 and 65.

\begin{figure}[H]
\centering
\begin{subfigure}{0.48\textwidth}
  \includegraphics[width = \linewidth]{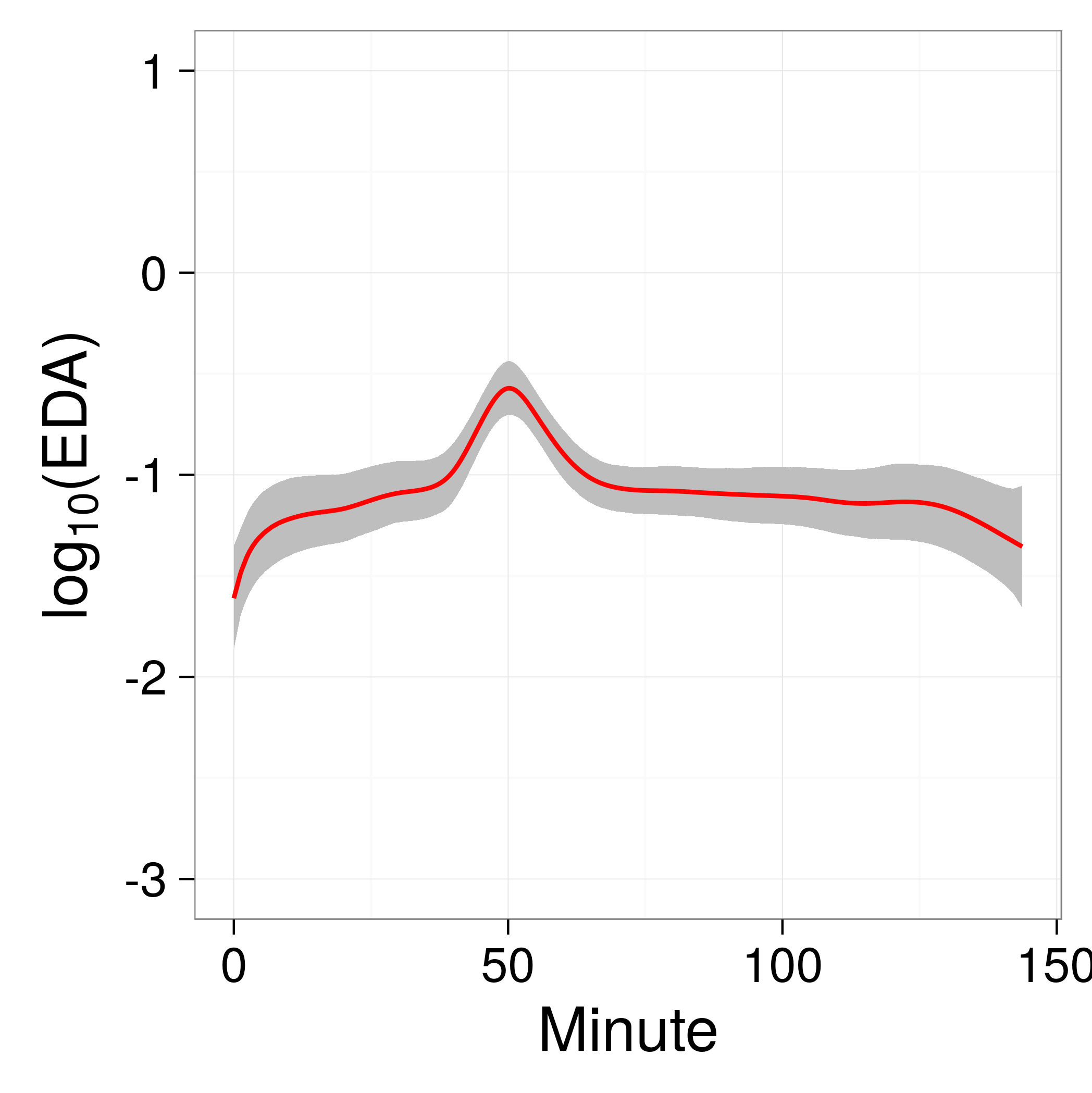}
  \caption{$\hat{\beta}_1(x)$ (low vigilance subjects)}
  \label{EDA_Bayes_CI_low}
\end{subfigure}
\begin{subfigure}{0.48\textwidth}
  \includegraphics[width = \linewidth]{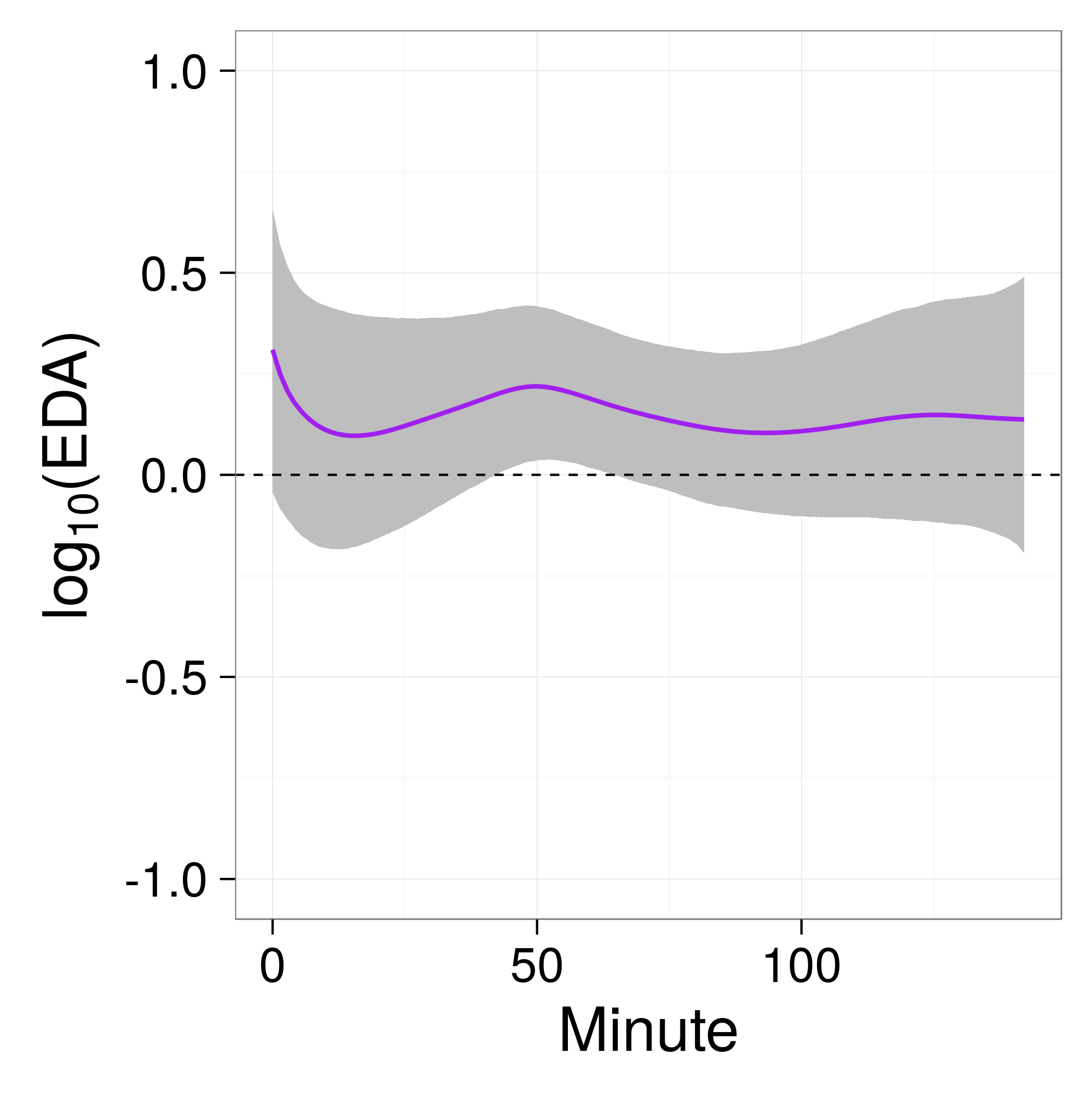}
  \caption{$\hat{\beta}_2(x)$ ($\text{high} - \text{low}$ vigilance subjects)}
  \label{EDA_Bayes_CI_diff}
\end{subfigure}
\caption{Bayesian model: parameter estimates with 95\% confidence bands}
\label{EDA_Bayes_CI}
\end{figure}

\begin{figure}[H]
  \centering
  \includegraphics[scale = 0.48]{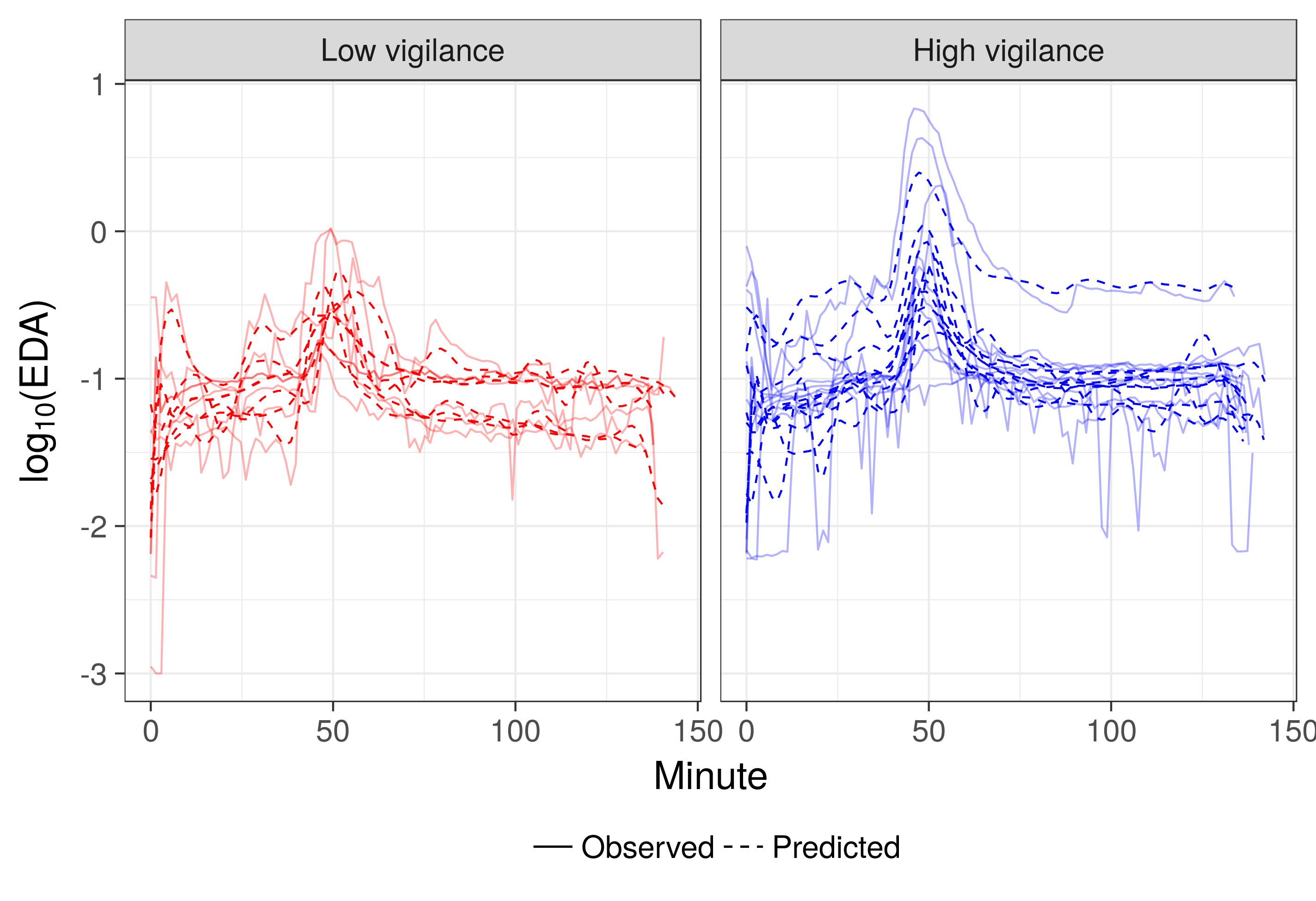}
  \caption{Bayesian model: subject-specific predicted curves}
  \label{EDA_Bayes_subj_curves}
\end{figure}

\subsection{Alternative correlation structure \label{cor_alt}}

For comparison, we also fit $\ell_1$ and $\ell_2$ penalized models with alternative correlation structures similar to that recommended by \citet[][p. 192]{ruppert2003semiparametric}.

For the $\ell_1$ penalized model, in place of the correlation structure implied by the penalty matrix $S$ described above, we set the penalty matrix to $S:= I_q$. While this is a simplification of the correlation structure recommended by \citet[][p. 192]{ruppert2003semiparametric}, we think it offers a similar amount of flexibility.

Figure \ref{EDA_L1_smooth_alt} shows the estimated marginal mean and 95\% credible bands, and Figure \ref{EDA_L1_subj_alt} shows the subject-specific predicted curves. The point estimates shown in Figure \ref{EDA_L1_smooth_alt} are similar to that shown in Figure \ref{EDA_L1_smooth}. However, the confidence intervals in Figure \ref{EDA_L1_smooth_alt} appear more reasonable. The subject-specific predicted curves shown in \ref{EDA_L1_subj_alt} are not shrunk towards the mean as much as in Figure \ref{EDA_L1_subj}.

\begin{figure}[H]
\centering
\begin{subfigure}{0.48\textwidth}
  \includegraphics[width = \linewidth]{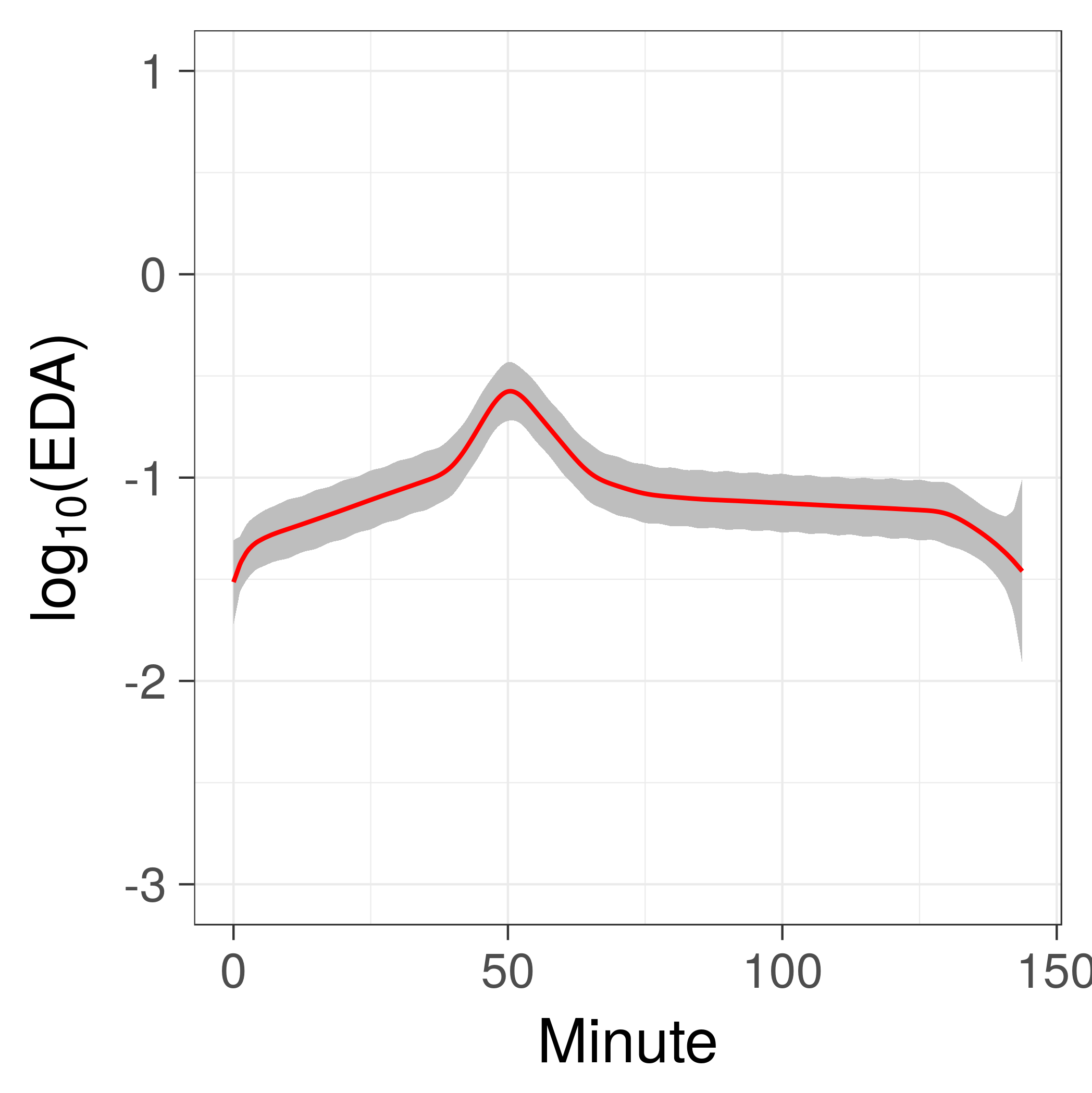}
  \caption{$\hat{\beta}_1(x)$ (low vigilance)}
  \label{EDA_L1_smooth1_poster_alt}
\end{subfigure}
\begin{subfigure}{0.48\textwidth}
  \includegraphics[width = \linewidth]{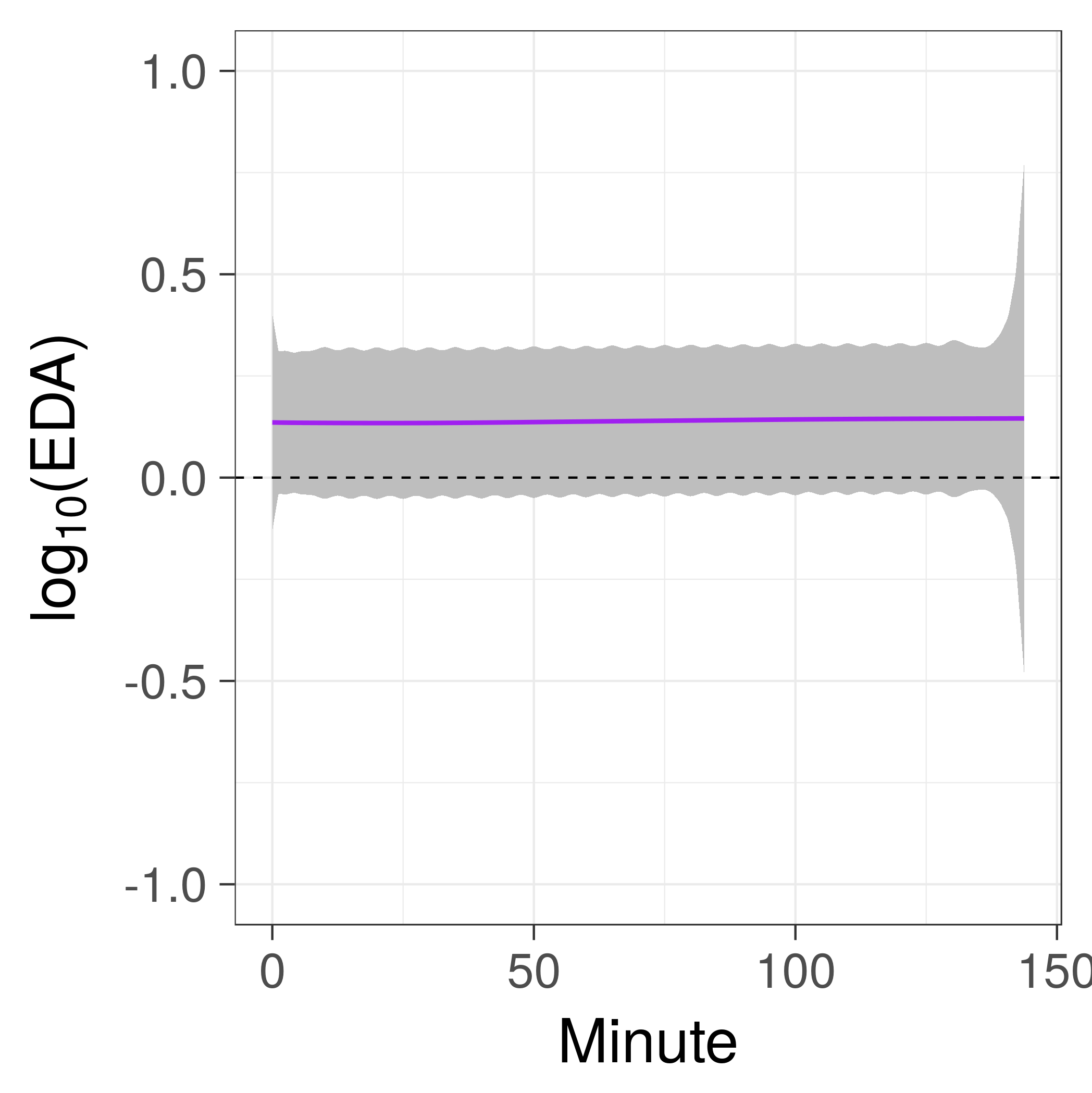}
  \caption{$\hat{\beta}_2(x)$ ($\text{high} - \text{low}$ vigilance)}
  \label{EDA_L1_smooth2_poster_alt}
\end{subfigure}
  \caption{$\ell_1$ penalized model with alternative correlation structure: parameter estimates with 95\% confidence bands}
  \label{EDA_L1_smooth_alt}
\end{figure}

\begin{figure}[H]
  \centering
  \includegraphics[scale = 0.48]{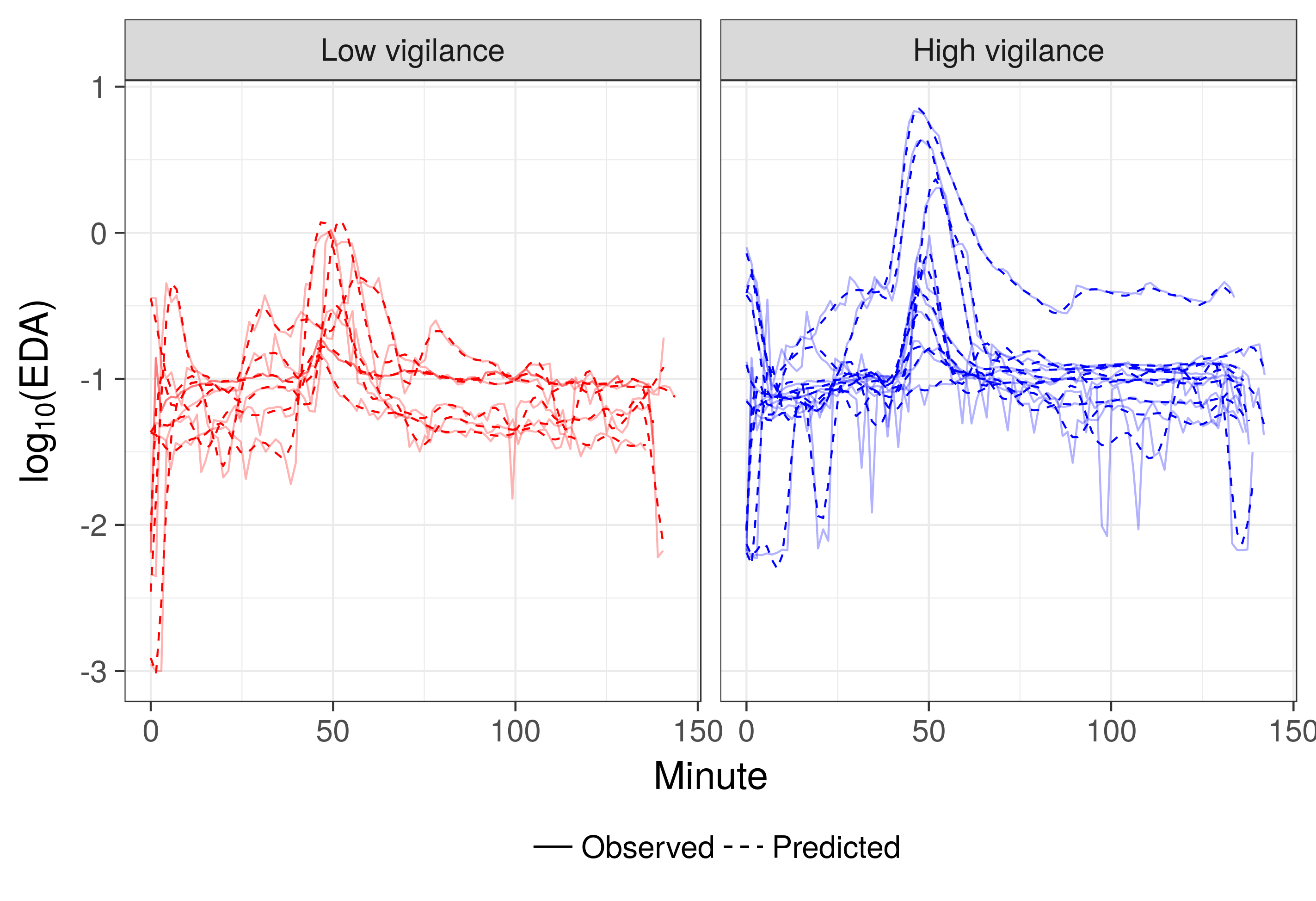}
  \caption{$\ell_1$ penalized model with alternative correlation structure: subject-specific predicted curves}
  \label{EDA_L1_subj_alt}
\end{figure}

For the $\ell_2$ Penalized model, in place of the correlation structure implied by the penalty matrix $S$ described above, we augmented each $Z_i$ matrix on the left with the columns $[\bm{1}, \bm{x}_i]$, where $\bm{x}_i$ is an $n_i \times 1$ vector of measurement times for subject $i$. We then replaced $Z_i \bm{b}_i$ with $[\bm{1}, \bm{x}_i, Z_i] (\bm{u}_i^T, \bm{b}_i^T)^T$, and assumed $(\bm{u}_i^T, \bm{b}_i^T)^T \sim N(\bm{0}, \Sigma_i)$ where 
$$
\Sigma_i = \begin{bmatrix}
\Sigma' \\
& \sigma_b^2 I_{q_i}
\end{bmatrix} 
$$
and $\Sigma'$ is a common $2 \times 2$ unstructured positive definite matrix. To model the within-subject correlations, we used a continuous autoregressive process of order 1. In particular, $\text{Cor}(y_i(x_{ij}), y_i(x_{ij'})) = \zeta^{|x_{ij} - x_{ij'}|}$ for a common parameter $\zeta>0$.

Figure \ref{EDA_GAMM_smooths_alt} shows the estimated marginal mean and 95\% credible bands, and Figure \ref{l2Pred} shows the subject-specific predicted curves. The estimates shown in Figure \ref{EDA_GAMM_smooths_alt} are similar to that shown in Figure \ref{EDA_GAMM_smooths}. While estimates of the difference between low and high vigilance subjects differs between this model and the $\ell_2$ penalized model in Section \ref{sec_res}, the more notable difference is in the subject-specific predicted curves. As seen in Figure \ref{l2Pred_alt}, the predicted subject-specific curves are not shrunk towards the mean as much as in Figure \ref{l2Pred}.

\begin{figure}[H]
\centering
\begin{subfigure}{0.48\textwidth}
  \includegraphics[width = \linewidth]{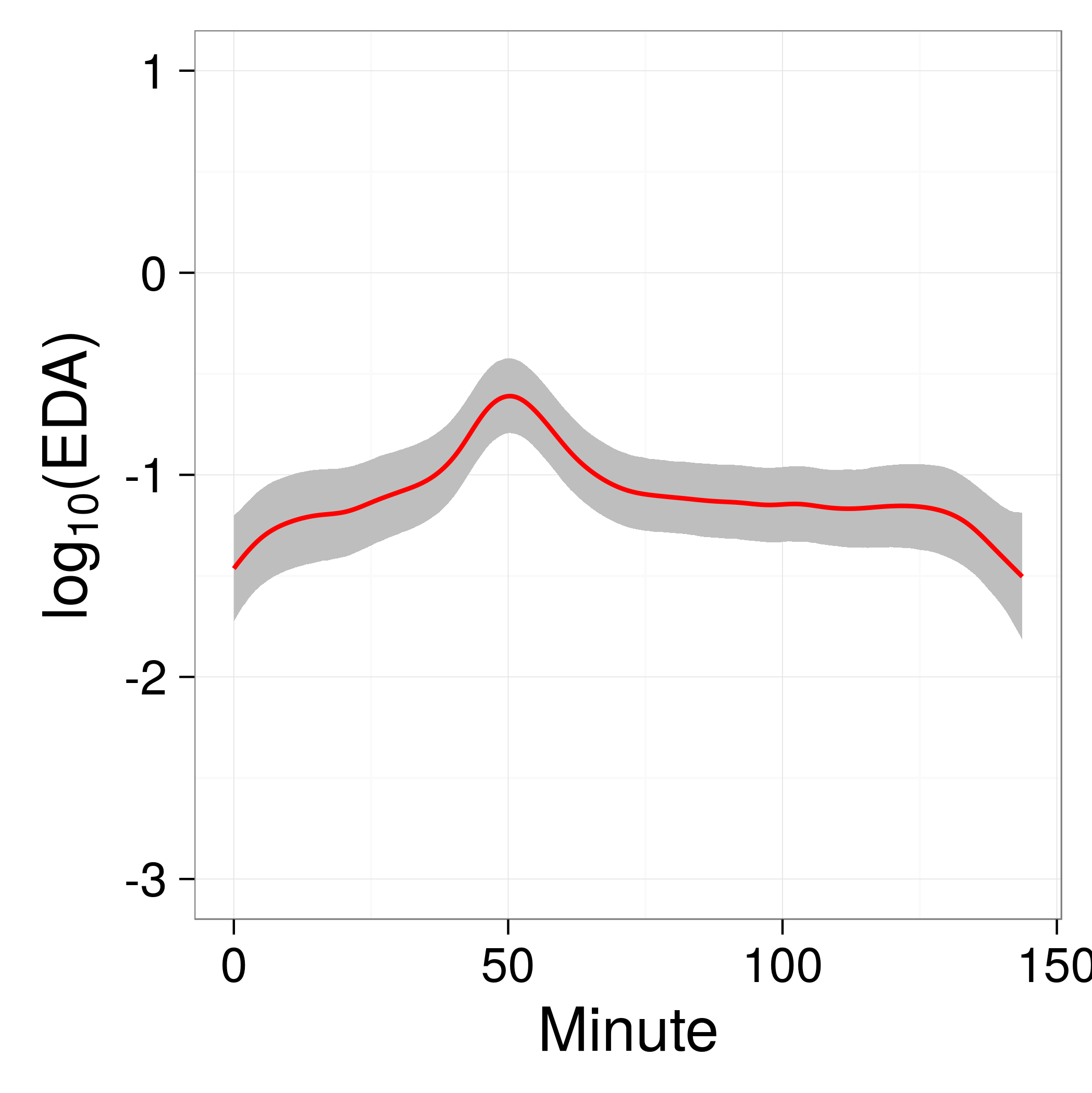}
  \caption{$\hat{\beta}_1(x)$ (low vigilance subjects)}
  \label{eda_l2_low_fit_alt}
\end{subfigure}
\begin{subfigure}{0.48\textwidth}
  \includegraphics[width = \linewidth]{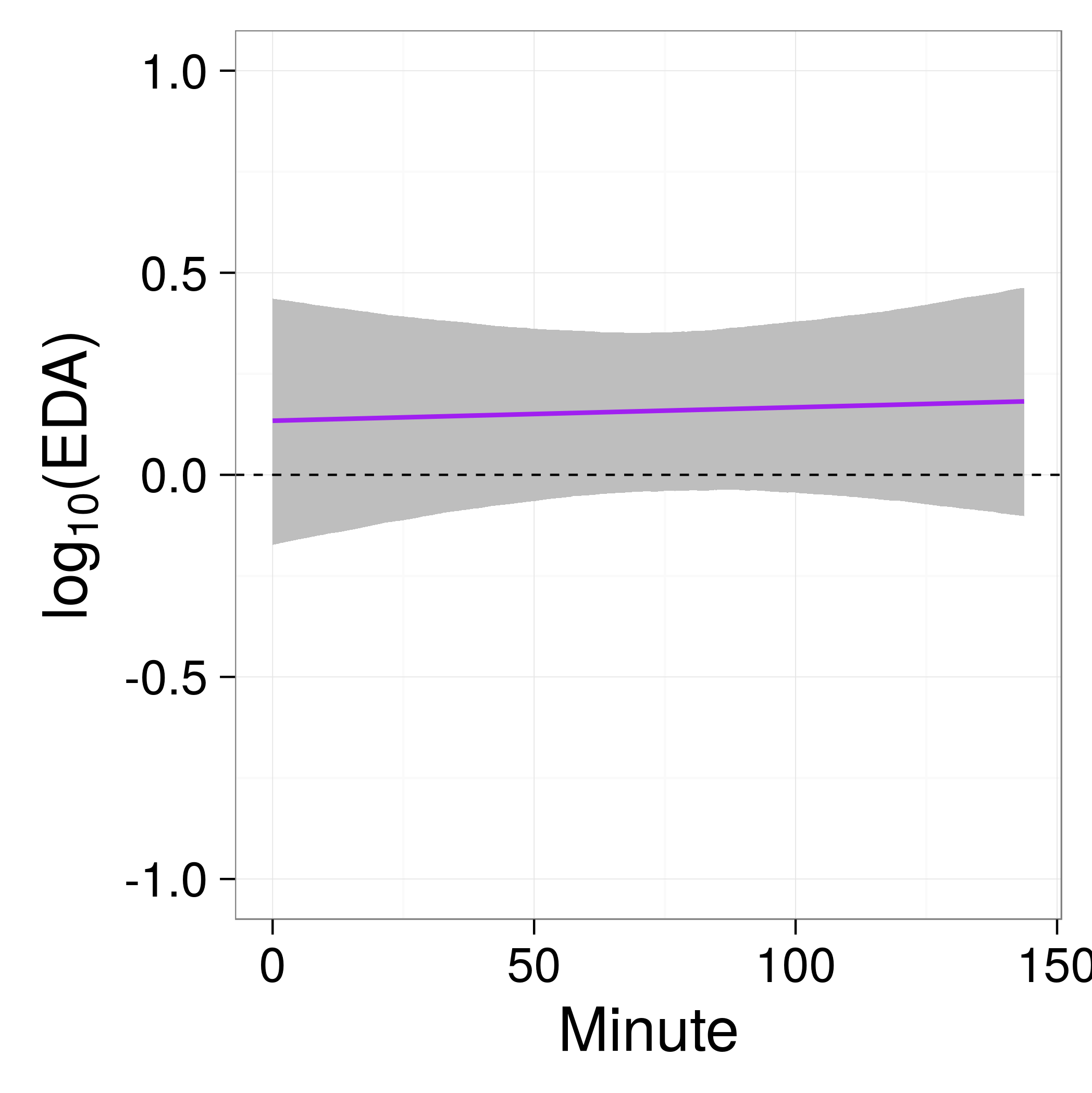}
  \caption{$\hat{\beta}_2(x)$ ($\text{high} - \text{low}$ vigilance subjects)}
  \label{eda_l2_diff_fit_alt}
\end{subfigure}
\caption{$\ell_2$ penalized model with alternative correlation structure: parameter estimates with 95\% confidence bands}
\label{EDA_GAMM_smooths_alt}
\end{figure}

\begin{figure}[H]
  \centering
  \includegraphics[scale = 0.48]{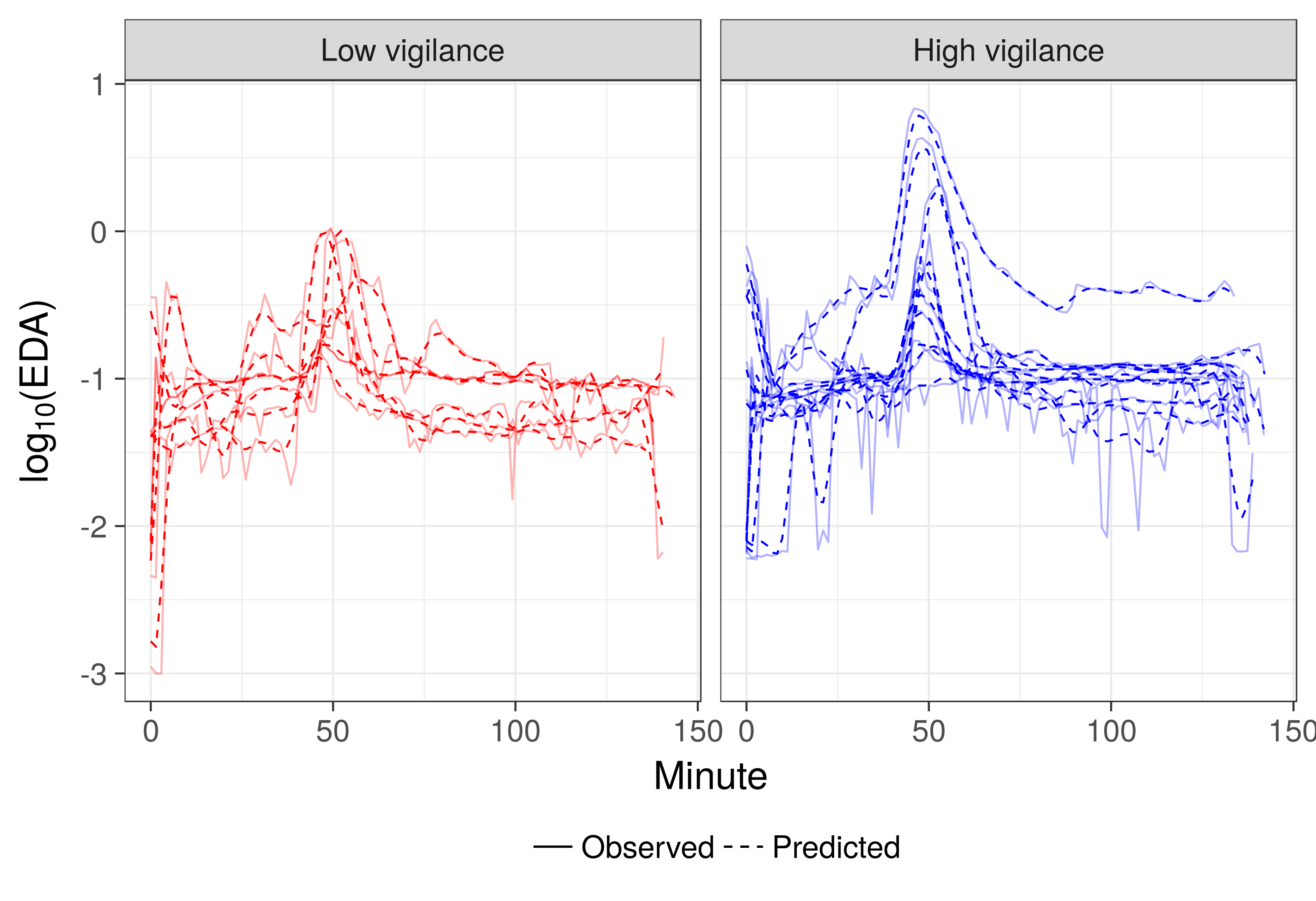}
  \caption{$\ell_2$ penalized model with alternative correlation structure: subject-specific predicted curves}
  \label{l2Pred_alt}
\end{figure}

Table \ref{corr_perf} shows the mean squared error (MSE) and computing time for the $\ell_1$ penalized and $\ell_2$ penalized models. In Table \ref{corr_perf}, computing time for the $\ell_1$ penalized model does not include cross-validation, because the parameters were hand-tuned (with only 17 subjects and a complex random effects structure, cross-validation did not find reasonable parameter values). As can be seen in Table \ref{corr_perf}, the alternative correlation structure led to smaller MSE for both the $\ell_1$ and $\ell_2$ penalized models, and less computing time for the $\ell_2$ penalized model.

\begin{table}[H]
\centering
\caption{MSE and computing time for different random curve correlation structures using the $\ell_1$ penalized and $\ell_2$ penalized models. ``Smoothing'' refers to smoothing splines used in Section \ref{sec_res} and ``Alternative'' refers to the correlation structures described in Section \ref{cor_alt}.}
\begin{tabular}{ccccc}
\hline \hline
 & \multicolumn{2}{c}{$\ell_1$ penalty} & \multicolumn{2}{c}{$\ell_2$ penalty} \\
 \cline{2-3} \cline{4-5}
 & Smoothing & Alternative & Smoothing & Alternative \\
MSE & 0.0195 & 0.00649 & 0.0348 & 0.00764 \\
Computing time (seconds) & $10.9^*$ & $11.2^*$ & 166 & 56.9\\
\hline
\multicolumn{5}{l}{$^*$Does not include cross-validation (parameters hand-tuned)}\\
\multicolumn{5}{l}{Note: Models fit on a laptop with Intel i7 quad CPUs at 2.67 GHz and 8 GB memory}
\end{tabular}
\label{corr_perf}
\end{table}

\section{Discussion and potential extensions \label{discussSec}}

As demonstrated in this article, P-splines with an $\ell_1$ penalty can be useful for analyzing repeated measures data. Compared to related work with $\ell_1$ penalties, our model is ambitious in that we allow for multiple smoothing parameters and propose approximate inferential procedures that do not require Bayesian estimation. However, these are also the two aspects of our proposed approach that require additional future work. For P-splines with an $\ell_2$ penalty, in most cases the knot placement is not critical so long as the number of knots is large enough \citep{ruppert2002, eilers2015}. We believe this also holds for P-splines with an $\ell_1$ penalty, though further experimentation is needed to support this assumption. In practice, we recommend fitting models with a few different knot placements and widths to determine whether the model is sensitive to those choices for the data at hand.

Regarding estimation, our current approach of using ADMM and CV appears to work reasonably well for random intercepts, but is not yet reliable for random curves. In the future, we plan to develop more robust estimation techniques, particularly for smoothing parameters. As one possibility, we have done preliminary work to minimize quantities similar to GCV and AIC instead of the more computationally intensive CV, though these approaches do not seem as promising as their $\ell_2$ counterparts. It may also be helpful to set the degrees of freedom prior to fitting the model. When possible, Bayesian estimation may be the most reliable way to currently fit these models. Bayesian estimation also opens the possibility of using other sparsity inducing priors, such as spike and slab models \citep{ishwaran2005}.

Regarding inference, in future work it may be possible to use the $\bm{\delta}$ quantity to bound difference between $\ell_1$ and $\ell_2$ penalized fits under certain assumptions on the data. It may also be helpful to investigate the use of post-selection inference methods to develop confidence bands for linear combinations of the active set, and to further investigate through simulations the performance of our proposed approximations of degrees of freedom. However, we note that our primary use of the degrees of freedom estimate $\hat{\text{df}}$ is to obtain the residual degrees of freedom $\hat{\text{df}}_{\text{resid}} = n - \hat{\text{df}}$, which we then use to estimate the variances $\hat{\sigma}^2_\epsilon = \| \bm{r} \|_2^2 / \hat{\text{df}}_{\text{resid}}$. Therefore, when $n \gg \hat{\text{df}}$, $\hat{\sigma}^2_\epsilon$ is not very sensitive to $\hat{\text{df}}$, in which case it is not critical for our purposes to obtain an exact estimate of degrees of freedom.

As for P-splines with an $\ell_2$ penalty, users must select both the order $M$ of the B-splines and the order $k+1$ of the finite differences. These choices will depend on the scientific problem and analytical goals. Using $k=1$ ($2^{nd}$ order differences) is likely an appropriate starting point for most applications, and larger $k$ could be used to increase the amount of smoothness.

For P-splines with an $\ell_2$ penalty, in most cases the knot placement is not critical so long as the number of knots is large enough \citep{ruppert2002, eilers2015}. We believe this also holds for P-splines with an $\ell_1$ penalty, though further experimentation is needed to support this assumption. In practice, we recommend fitting models with a few different knot placements and widths to determine whether the model is sensitive to those choices for the data at hand.

Regarding the rate of convergence, from Observation \ref{equiv} and the work of \citet{tibshirani2014adaptive}, we know that for equally spaced data and $F = I_{n}$, P-splines with an $\ell_1$ penalty achieve the minimax rate of convergence for the class of weakly differentiable functions of bounded variation. When there are less knots than data points, we do not think it is possible to achieve the minimax rate of convergence. However, if the knots are selected well, it may be possible to achieve the same performance in practice.

It could also be useful to extend these results to generalized additive models to allow for non-normal responses, and to extend the approach of \citet{sadhanala2017} to include random effects and multiple smoothing parameters.

\section{Supplementary material}

We have implemented our method in the \text{R} package \verb|psplinesl1| available at \url{https://github.com/bdsegal/psplinesl1}. All code for the simulations and analyses in this paper are available at \url{https://github.com/bdsegal/code-for-psplinesl1-paper}.

\section*{Acknowlegements}

We thank Margaret Hicken for sharing the data from the stress study.

\begin{appendices}

\section{Simulated demonstration with two smooths}

In this appendix, we simulated data similar to that in Section \ref{sec_sim}, but with an additional varying-coefficient smooth. In particular, we simulated data for two groups with 50 subjects in each group and between 4 and 14 measurements per subject (900 total observations). The data for subject $i$ at time $t$ was generated as $y_{it} = \beta_0 + f_1(x_{it}) + f_2(x_{it}) \mathbbm{1}[\text{subject } i \text{ in Group 2}] + b_i + \epsilon_{it}$ where $b_i \sim N(0, \sigma_b^2)$ and $\epsilon_{it} \sim N(0, \sigma^2_{\epsilon})$ for $\sigma_b^2 = 1$ and $\sigma_\epsilon^2 = 0.01$. The true group means $f_1(x)$ and $f_2(x)$ are shown in Figure \ref{nonDiff_2groups_truth} and the simulated data are shown in Figure \ref{nonDiff_2groups_facet}.

\begin{figure}[H]
\centering
\includegraphics[scale = 0.48]{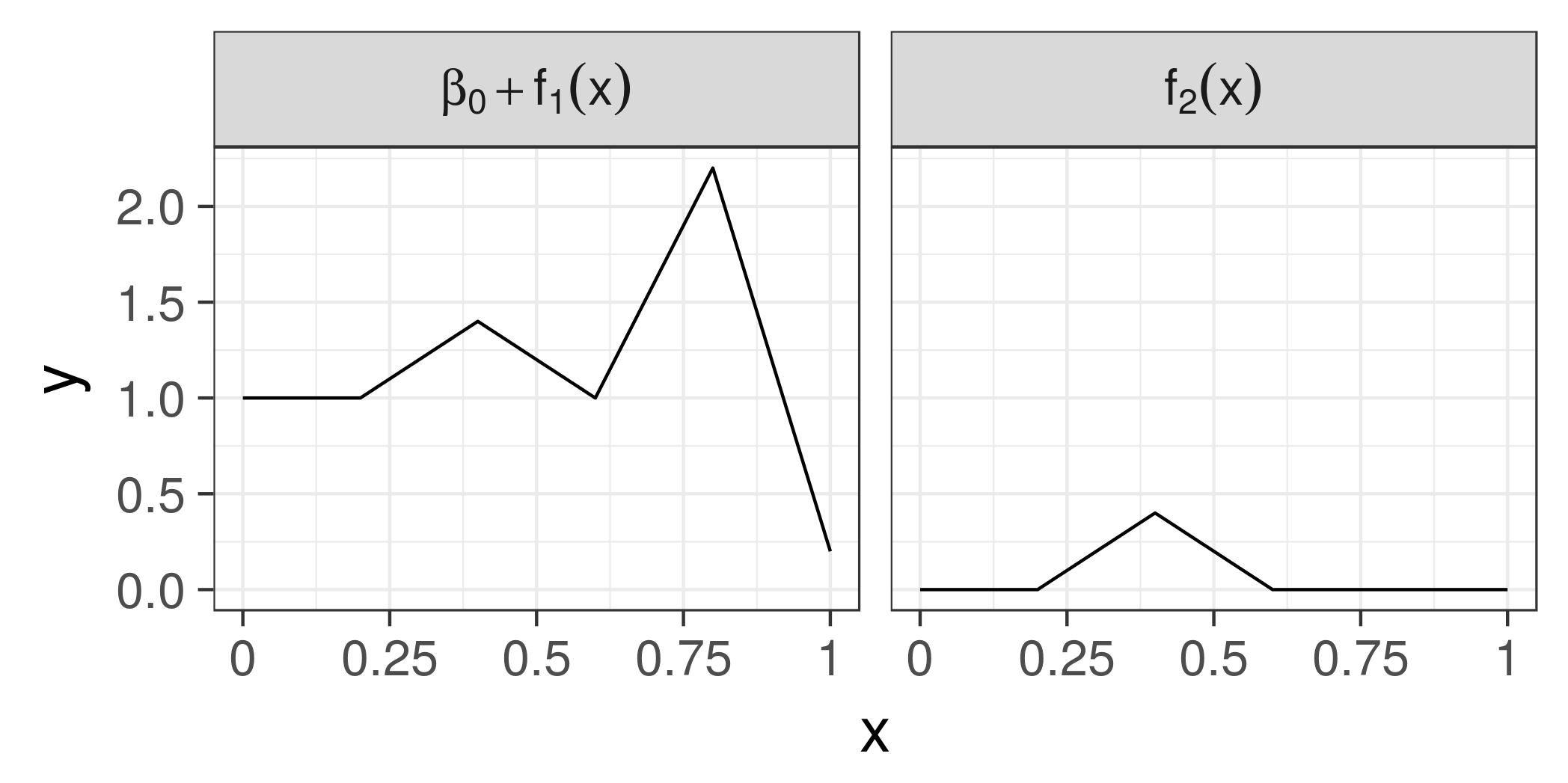}
\caption{True means}
\label{nonDiff_2groups_truth}
\end{figure}

\begin{figure}[H]
\centering
\includegraphics[scale = 0.48]{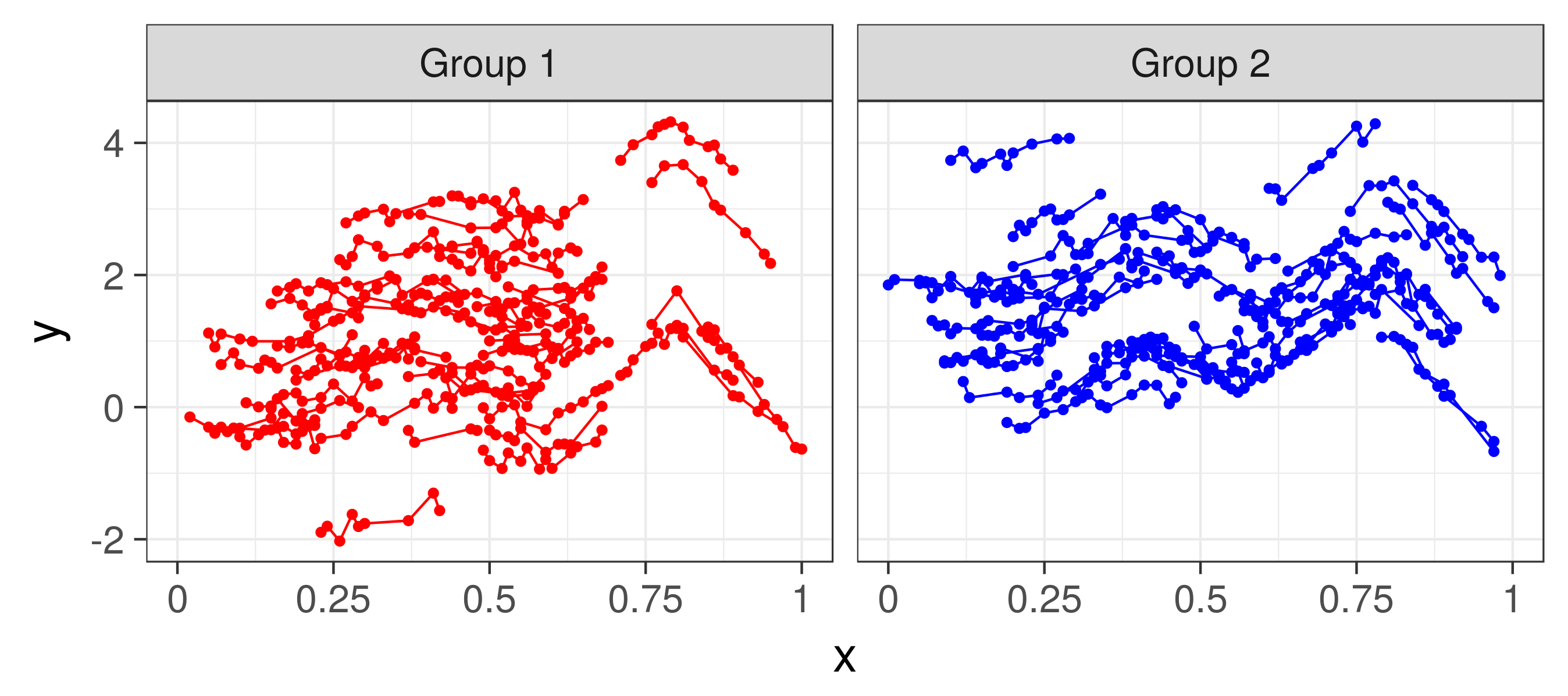}
\caption{Simulated data}
\label{nonDiff_2groups_facet}
\end{figure}

We fit a varying-coefficient model with $J=2$ smooths to the data. In particular, we used ADMM and 5-fold CV to minimize
\begin{align}
\underset{\beta_0 \in \mathbb{R}, \bm{\beta}_1 \in \mathbb{R}^{p-1}, \bm{\beta}_2 \in \mathbb{R}^{p}, \bm{b} \in \mathbb{R}^{N}}{\text{minimize}} & \frac{1}{2} \| \bm{y} - \beta_0 \bm{1} - F_1 \bm{\beta}_1 - F_2 \bm{\beta}_2 - Z \bm{b} \|_2^2 \nonumber \\
& + \lambda_1 \|D^{(2)} \bm{\beta}_1 \|_1 + \lambda_2 \|D^{(2)} \bm{\beta}_2 \|_1 + \tau \bm{b}^T \bm{b}. \label{sim_obj_2}
\end{align}
where $F_1, F_2 \in \mathbb{R}^{n \times p}$ were formed with second order (first degree) B-splines and $p=21$ basis functions, $F_2 = \text{diag}(\bm{u}) F_1$ where $u_i = \mathbbm{1}[\text{subject } i \text{ in Group 2}]$, and $Z_{il} = 1$ if observation $i$ belongs to subject $l$ and zero otherwise. We also fit an equivalent model with an $\ell_2$ penalty using the \verb|mgcv| package \citep{wood2006generalized}, i.e. with $(\lambda_j / 2) \|D^{(2)} \bm{\beta}_j \|_2^2$ in place of $\lambda_j \|D^{(2)} \bm{\beta}_j \|_1$ in (\ref{sim_obj_2}), $j = 1, 2$.

The estimated curves are shown in Figure \ref{l1_2groups_group} for the $\ell_1$ penalized model and in Figure \ref{l2_2groups_group} for the $\ell_2$ penalized model. We used 5-fold CV to estimate the smoothing parameters $\lambda_1, \lambda_2$ and $\tau$ in the $\ell_1$ penalized model, and LME updates to estimate $\sigma^2_b$ and $\bm{b}$ in the final model. As seen in Figures \ref{l1_2groups_group} and \ref{l2_2groups_group}, the fits are similar, but the results with the $\ell_1$ penalized model are slightly closer to the truth.

\begin{figure}[H]
\centering
  \begin{subfigure}{0.48\linewidth}
    \includegraphics[width=\textwidth]{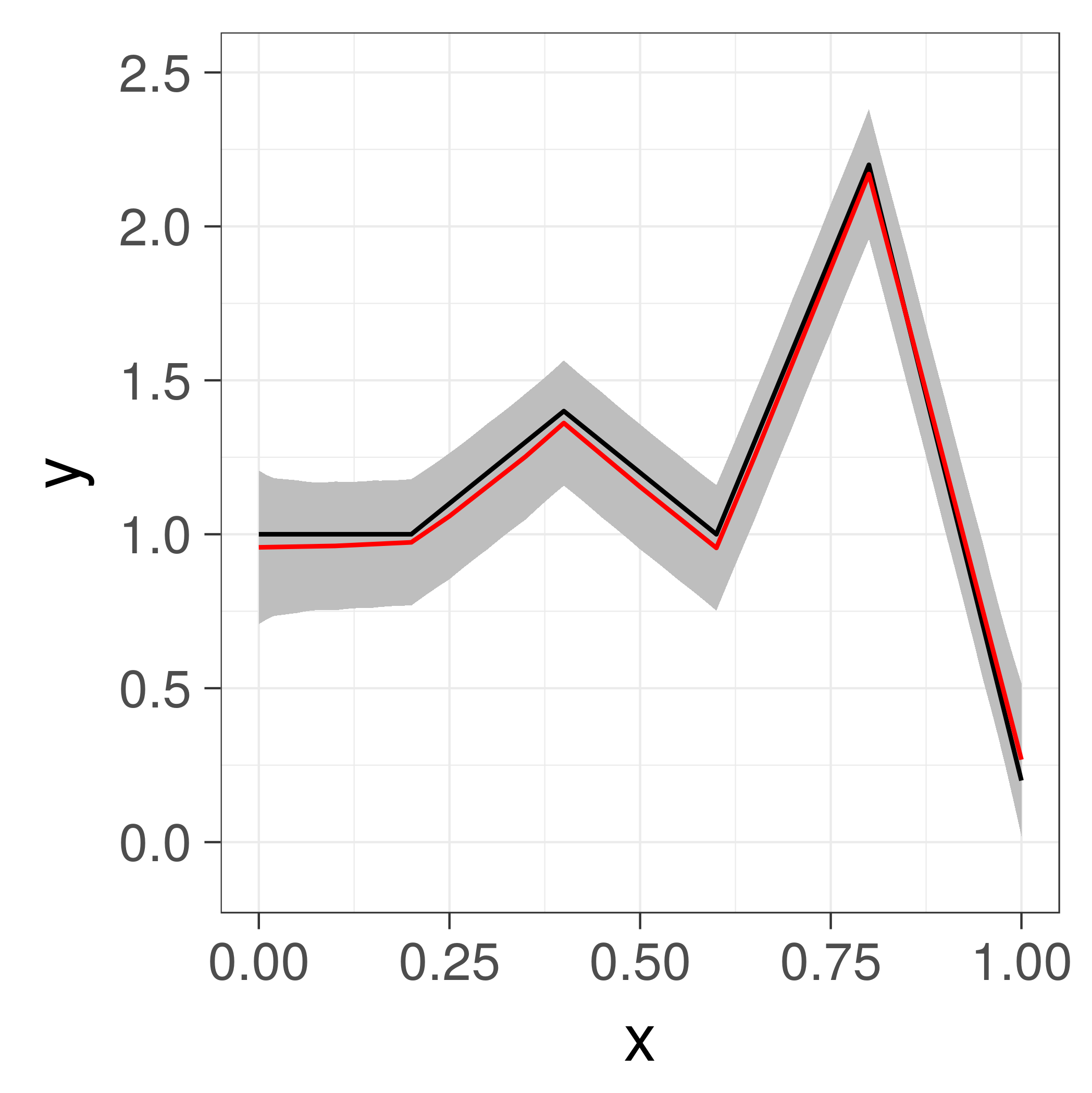}
    \caption{$\hat{\beta}_0 + \hat{f}_1(x)$}
  \end{subfigure}
  \begin{subfigure}{0.48\linewidth}
    \includegraphics[width=\textwidth]{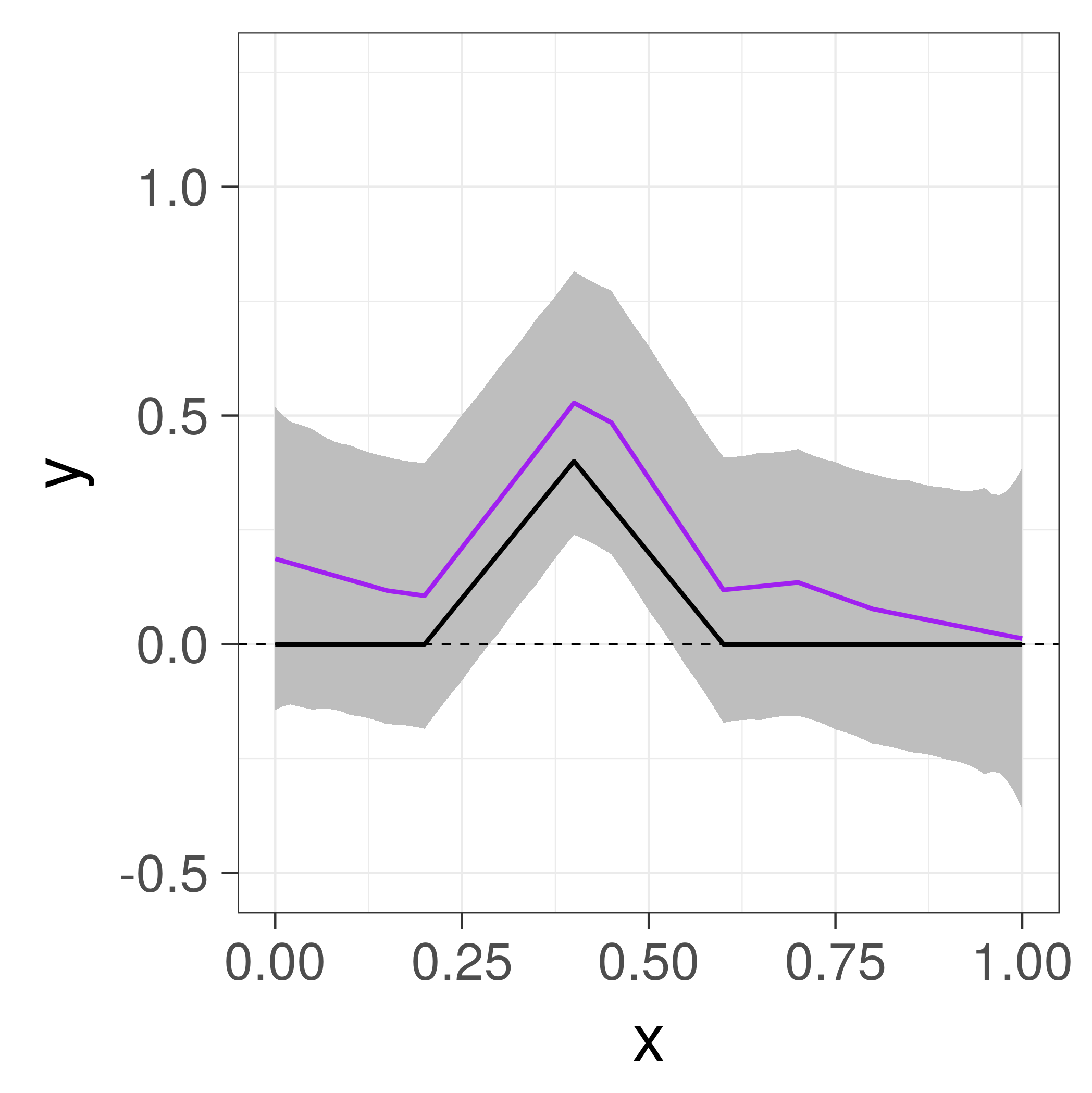}
    \caption{$\hat{f}_2(x)$}
  \end{subfigure}
  \caption{Marginal mean and 95\% credible intervals from $\ell_1$ penalized model fit with ADMM and CV: black is true marginal mean, red is estimated marginal mean}
  \label{l1_2groups_group}
\end{figure}

\begin{figure}[H]
\centering
  \begin{subfigure}{0.48\linewidth}
    \includegraphics[width=\textwidth]{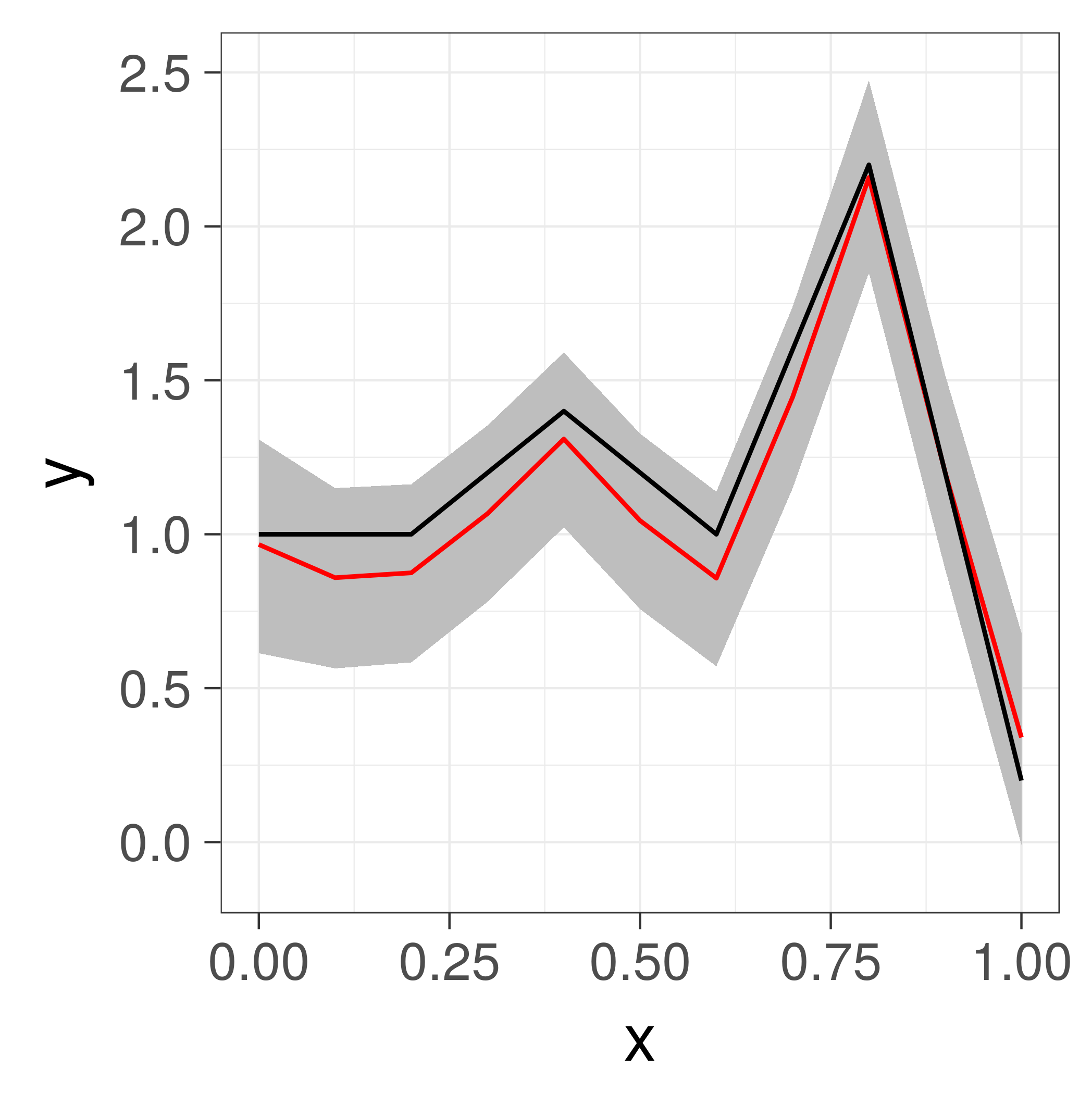}
    \caption{$\hat{\beta}_0 + \hat{f}_1(x)$}
  \end{subfigure}
  \begin{subfigure}{0.48\linewidth}
    \includegraphics[width=\textwidth]{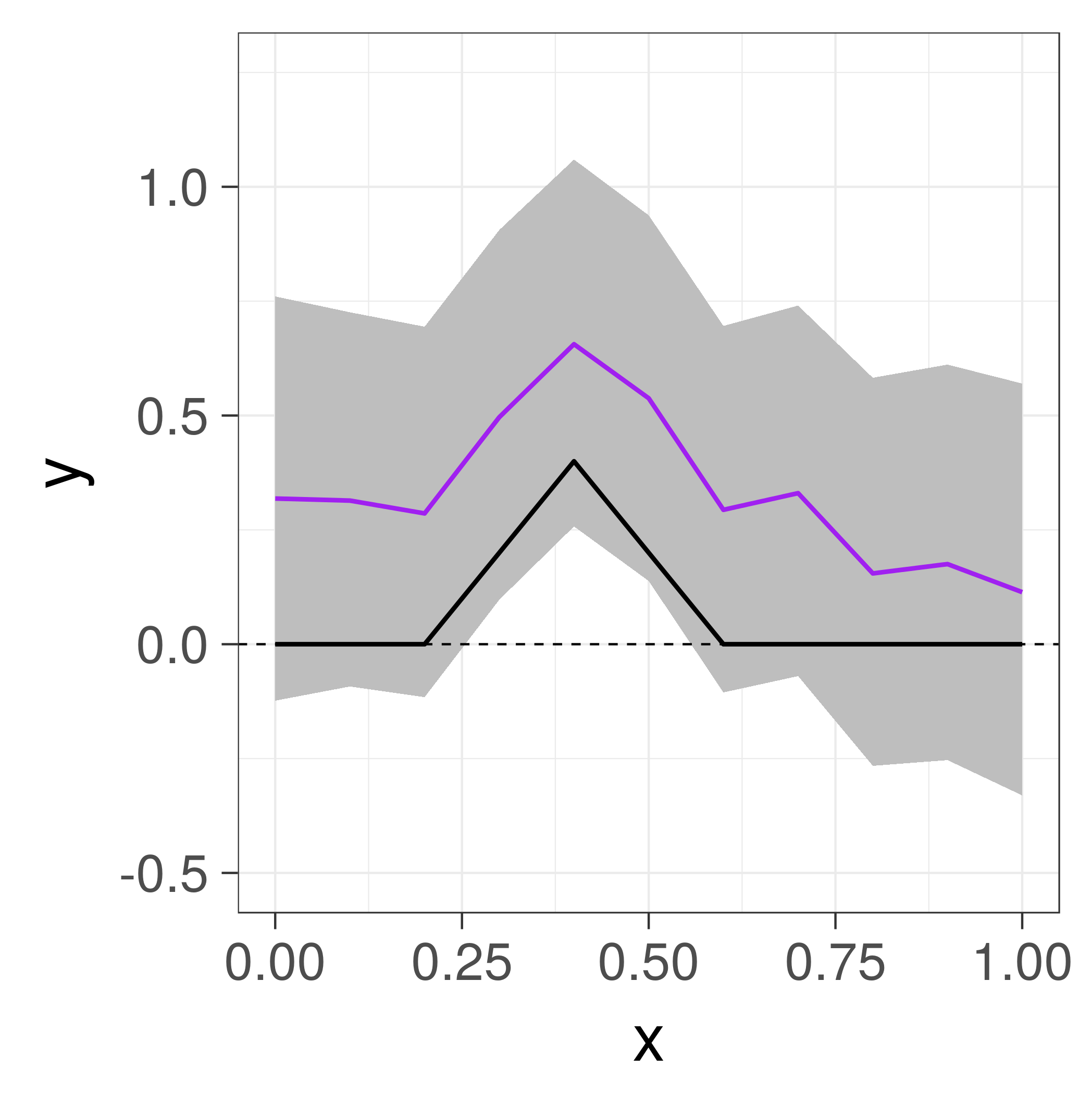}
    \caption{$\hat{f}_2(x)$}
  \end{subfigure}
  \caption{Marginal mean and 95\% credible intervals from $\ell_2$ penalized model fit with mgcv \citep{wood2006generalized}: black is true marginal mean, red is estimated marginal mean}
  \label{l2_2groups_group}
\end{figure}

Table \ref{2group_table} shows the degrees of freedom and variance estimates with the $\ell_1$ penalized and $\ell_2$ penalized models. As seen in Table \ref{2group_table}, variance estimates from both the $\ell_1$ and $\ell_2$ penalized models are very near the true values.

\begin{table}[H]
\centering
  \caption{Degrees of freedom and variance in $\ell_1$ and $\ell_2$ penalized models}
  \begin{tabular}{cccccc}
    \hline \hline
    & \multicolumn{4}{c}{Penalty} \\
    \cline{2-5}
    & \multicolumn{2}{c}{$\ell_1$} & \multicolumn{2}{c}{$\ell_2$} & \\
    \cline{2-3} \cline{4-5}
    & $j = 1$ & $j = 2$ & $j = 1$ & $j = 2$ & Truth\\
    \hline
    df (ridge) & 17.7 & 17.8 & 19.3 & 13.8 & --\\
    df (Stein) & 12 & 9 & -- & -- & -- \\
    $\hat{\sigma}^2_\epsilon$ & \multicolumn{2}{c}{0.0090} & \multicolumn{2}{c}{0.010} & 0.01\\
    $\hat{\sigma}^2_b$ & \multicolumn{2}{c}{1.04} & \multicolumn{2}{c}{1.02} & 1
  \end{tabular}
  \label{2group_table}
\end{table}

\section{Details for $\lambda_j^{\max}$ \label{lambdaMaxDetails}}

Letting $\bm{r}_j = \bm{y} - \beta_0 \bm{1} - \sum_{\ell \ne j} F_{\ell} \bm{\beta}_{\ell}  - Z \bm{b}$ be the $j^{th}$ partial residuals, we can write the terms in (\ref{l1PsplineAMM}) that involve $\bm{\beta}_j$ as $(1/2) \|\bm{r}_j - F_j \bm{\beta}_j\|_2^2 + \lambda_j \|D_j \bm{\beta}_j \|_1$. Then taking the sub-differential of (\ref{l1PsplineAMM}) with respect to $\bm{\beta}_j$, we have
\begin{align}
\bm{0} &= -F_j^T(\bm{r}_j - F_j \bm{\hat{\beta}}_j) + D_j^T \lambda_j \bm{s}_j
\label{subDiff}
\end{align}
for some $\bm{s}_j = (s_{j, 1},\ldots,s_{j, p_j - k_j - 1})^T$ where
\begin{equation*}
s_{j,\ell} \in 
\begin{cases}
\{1\} & \text{ if } (D \bm{\hat{\beta}}_j)_\ell > 1 \\
\{-1\} & \text{ if } (D \bm{\hat{\beta}}_j)_\ell < 1 \\
[-1, 1] & \text{ if } (D \bm{\hat{\beta}}_j)_\ell = 0.
\end{cases}
\end{equation*}
Solving (\ref{subDiff}) for $\bm{\hat{\beta}}_j$, we have $\bm{\hat{\beta}}_j = (F_j^T F_j)^{-1} F_j^T \bm{r}_j - D_j^T \lambda_j \bm{s}_j$. Multiplying through by $D_j$ and noting that $D_j D_j^T$ is full rank and thus invertible, we have
\begin{equation}
(D_j D_j^T)^{-1} D_j \bm{\hat{\beta}}_j = (D_j D_j^T)^{-1} D_j (F_j^T F_j)^{-1} F_j^T \bm{r}_j - \lambda_j \bm{s}_j.
\label{subDiffSolved}
\end{equation}
Setting $D_j \bm{\hat{\beta}}_j = \bm{0}$ in (\ref{subDiffSolved}), we get that $(D_j D_j^T)^{-1} D_j (F_j^T F_j)^{-1} F_j^T \bm{r}_j = \lambda_j \bm{s}_j$ where $s_{j, \ell} \in [-1, 1]$ for all $\ell$. This can only hold if $\lambda_j = \|(D_j D_j^T)^{-1} D_j (F_j^T F_j)^{-1} F_j^T \bm{r}_j\|_\infty$, which gives us $\lambda_j^{\max}$.

\section{Controlling total variation with the $\ell_1$ penalty \label{controlSmooth}}

Let $f(x) = \sum_{j=1}^p \beta_j \phi_j^M(x)$. Suppose the knots are equally spaced, and let $h_{M-k-1} = (M-k-1) / (t_{j+M-k-1} - t_{j})$ for all $j$ and $0 \le k < M - 1$. Then on the interval $[t_M = x_{\min}, t_{p+1} = x_{\max}]$, from \cite[][p. 117]{de1978practical} we have
\begin{align}
\frac{d^{k+1}}{dx^{k+1}}f(x) &= h_{M-1} \cdots h_{M-k-1} \sum_{j=k+2}^{p} \nabla^{k+1} \beta_j \phi_j^{M-k-1}(x)
\label{deriv}
\end{align}
where $\nabla^{k+1}$ is the $(k+1)^{th}$ order backwards difference.

Let $a^M_{k+1} = \max_{j \in \{k + 2,\ldots p \} } \int_{x_{\min}}^{x_{\max}} \phi_j^{M-k-1}(x) dx$. We note that $a^M_{k+1}$ is finite and positive for all $0 \le k < M-1$. Then from (\ref{deriv}), we have
\begin{align}
\frac{1}{h_{M - 1} \cdots h_{M-k-1}} & \int_{x_{\min}}^{x_{\max}} \left| \frac{d^{k+1}}{dx^{k+1}} f(x) \right| dx \nonumber \\
&=  \int_{x_{\min}}^{x_{\max}} \left| \sum_{j=k+2}^p \nabla^{k+1} \beta_j \phi_j^{M-k-1}(x) \right|  dx \nonumber \\
&= \int_{x_{\min}}^{x_{\max}} \left| \sum_{j=k+2}^p (D^{(k+1)} \bm{\beta})_{j-k-1} \phi_j^{M-k-1}(x) \right| dx \nonumber \\
&\le \int_{x_{\min}}^{x_{\max}} \sum_{j=k+2}^p \left| (D^{(k+1)} \bm{\beta})_{j-k-1} \phi_j^{M-k-1}(x) \right| dx \nonumber \\
&= \sum_{j=k+2}^p \int_{x_{\min}}^{x_{\max}} \left| (D^{(k+1)} \bm{\beta})_{j-k-1} \phi_j^{M-k-1}(x) \right| dx \nonumber \\
&= \sum_{j=k+2}^p \left| (D^{(k+1)} \bm{\beta})_{j-k-1} \right| \int_{x_{\min}}^{x_{\max}} \phi_j^{M-k-1}(x) dx \label{aline} \\
&\le a^M_{k+1} \sum_{j=k+2}^p \left| (D^{(k+1)} \bm{\beta})_{j-k-1} \right| \nonumber \\
&= a^M_{k+1} \| D^{(k+1)} \bm{\beta} \|_1 \label{result}
\end{align}
where (\ref{aline}) follows because $\phi_j^{M-k-1}(x) \ge 0 \; \forall x \in \mathbb{R}$.

Rewriting (\ref{result}), for $0 \le k < M-1$ we have
\begin{equation*}
\int_{x_{\min}}^{x_{\max}} \left| \frac{d^{k+1}}{dx^{k+1}} f(x) \right| dx \le C_{M,k+1} \|D^{(k+1)} \bm{\beta} \|_1
\end{equation*}
where $C_{M, k+1} = a^M_{k+1} h_{M-1} \cdots h_{M-k-1}$ is a constant. This shows that controlling the $\ell_1$ norm of the $(k+1)^{th}$ order finite differences in coefficients also controls the total variation of the $k^{th}$ derivative of the function.
\end{appendices}

\bibliographystyle{apalike}
\bibliography{refsBib}

\begin{thebibliography}{}

\bibitem[Bollaerts et~al., 2006]{bollaerts2006quantile}
Bollaerts, K., Eilers, P. H.~C., and Aerts, M. (2006).
\newblock Quantile regression with monotonicity restrictions using {P-splines}
  and the {L1}-norm.
\newblock {\em Statistical Modelling}, 6(3):189--207.

\bibitem[Boyd et~al., 2011]{boyd2011distributed}
Boyd, S., Parikh, N., Chu, E., Peleato, B., and Eckstein, J. (2011).
\newblock Distributed optimization and statistical learning via the alternating
  direction method of multipliers.
\newblock {\em Foundations and Trends$^{\textregistered}$ in Machine Learning},
  3(1):1--122.

\bibitem[Chen and Wang, 2011]{chen2011penalized}
Chen, H. and Wang, Y. (2011).
\newblock A penalized spline approach to functional mixed effects model
  analysis.
\newblock {\em Biometrics}, 67(3):861--870.

\bibitem[De~Boor, 2001]{de1978practical}
De~Boor, C. (2001).
\newblock {\em A practical guide to splines}.
\newblock Springer, New York, NY, revised edition.

\bibitem[Donoho and Johnstone, 1988]{donoho1998}
Donoho, D.~L. and Johnstone, I.~M. (1988).
\newblock Minimax estimation via wavelet shrinkage.
\newblock {\em The Annals of Statistics}, 26(3):879 -- 921.

\bibitem[Efron, 1986]{efron1986}
Efron, B. (1986).
\newblock How biased is the apparent error rate of a prediction rule.
\newblock {\em Journal of the American Statistical Association},
  81(394):461--470.

\bibitem[Eilers, 2000]{eilers2000robust}
Eilers, P. H.~C. (2000).
\newblock Robust and quantile smoothing with {P-splines} and the {L1} norm.
\newblock In {\em Proceedings of the 15th International Workshop on Statistical
  Modelling, Bilbao}.

\bibitem[Eilers and Marx, 1996]{eilers1996flexible}
Eilers, P. H.~C. and Marx, B.~D. (1996).
\newblock Flexible smoothing with {B-splines} and penalties.
\newblock {\em Statistical science}, 11(2):89--121.

\bibitem[Eilers et~al., 2015]{eilers2015}
Eilers, P. H.~C., Marx, B.~D., and Durb{\'a}n, M. (2015).
\newblock Twenty years of p-splines.
\newblock {\em SORT: statistics and operations research transactions},
  39(2):149--186.

\bibitem[Fitzmaurice et~al., 2008]{fitzmaurice2008longitudinal}
Fitzmaurice, G., Davidian, M., Verbeke, G., and Molenberghs, G. (2008).
\newblock {\em Longitudinal data analysis}.
\newblock Chapman and Hall/CRC, Boca Raton, FL.

\bibitem[Gelman et~al., 2014]{gelman2014bayesian}
Gelman, A., Carlin, J.~B., Stern, H.~S., Dunson, D.~B., Vehtari, A., and Rubin,
  D.~B. (2014).
\newblock {\em Bayesian data analysis}.
\newblock Chapman and Hall/CRC, Boca Raton, FL, 3rd edition.

\bibitem[Gelman et~al., 2008]{gelman2008weakly}
Gelman, A., Jakulin, A., Pittau, M.~G., and Su, Y.-S. (2008).
\newblock A weakly informative default prior distribution for logistic and
  other regression models.
\newblock {\em The Annals of Applied Statistics}, 2(4):1360--1383.

\bibitem[Green, 1987]{green1987}
Green, P.~J. (1987).
\newblock Penalized likelihood for general semi-parametric regression models.
\newblock {\em International Statistical Review}, 55(3):245 -- 259.

\bibitem[Guo, 2002]{guo2002functional}
Guo, W. (2002).
\newblock Functional mixed effects models.
\newblock {\em Biometrics}, 58(1):121--128.

\bibitem[Hastie and Tibshirani, 1986]{hastie1986generalized}
Hastie, T. and Tibshirani, R. (1986).
\newblock Generalized additive models.
\newblock {\em Statistical Science}, 1:297--318.

\bibitem[Hastie and Tibshirani, 1990]{hastie1990GAM}
Hastie, T. and Tibshirani, R. (1990).
\newblock {\em Generalized Additive Models}.
\newblock Monographs on Statistics and Applied Probability. Chapman \& Hall,
  London, 1st edition.

\bibitem[Hastie and Tibshirani, 1993]{hastie1993varying}
Hastie, T. and Tibshirani, R. (1993).
\newblock Varying-coefficient models.
\newblock {\em Journal of the Royal Statistical Society: Series B (Statistical
  Methodology)}, 55(4):757--796.

\bibitem[Ishwaran and Rao, 2005]{ishwaran2005}
Ishwaran, H. and Rao, J.~S. (2005).
\newblock Spike and slab variable selection: Frequentist and bayesian
  strategies.
\newblock {\em The Annals of Statistics}, 33(2):730 -- 773.

\bibitem[Janson et~al., 2015]{janson2015effective}
Janson, L., Fithian, W., and Hastie, T.~J. (2015).
\newblock Effective degrees of freedom: a flawed metaphor.
\newblock {\em Biometrika}, pages 1--8.

\bibitem[Kim et~al., 2009]{kim2009ell1}
Kim, S.-J., Koh, K., Boyd, S., and Gorinevsky, D. (2009).
\newblock $\ell_1$ trend filtering.
\newblock {\em SIAM review}, 51(2):339--360.

\bibitem[Lin et~al., 2006]{lin2006}
Lin, Y., Zhang, H.~H., et~al. (2006).
\newblock Component selection and smoothing in multivariate nonparametric
  regression.
\newblock {\em The Annals of Statistics}, 34(5):2272--2297.

\bibitem[Lou et~al., 2016]{lou2016}
Lou, Y., Bien, J., Caruana, R., and Gehrke, J. (2016).
\newblock Sparse partially linear additive models.
\newblock {\em Journal of Computational and Graphical Statistics}, 25(4).

\bibitem[Mammen et~al., 1997]{mammen1997locally}
Mammen, E., van~de Geer, S., et~al. (1997).
\newblock Locally adaptive regression splines.
\newblock {\em The Annals of Statistics}, 25(1):387--413.

\bibitem[Meier et~al., 2009]{meier2009}
Meier, L., Van~de Geer, S., and B{\"u}hlmann, P. (2009).
\newblock High-dimensional additive modeling.
\newblock {\em The Annals of Statistics}, 37(6B):3779--3821.

\bibitem[Petersen et~al., 2016]{petersen2016fused}
Petersen, A., Witten, D., and Simon, N. (2016).
\newblock Fused lasso additive model.
\newblock {\em Journal of Computational and Graphical Statistics},
  25(4):1005--1025.

\bibitem[{R Core Team}, 2017]{R}
{R Core Team} (2017).
\newblock {\em R: A Language and Environment for Statistical Computing}.
\newblock R Foundation for Statistical Computing, Vienna, Austria.

\bibitem[Ramdas and Tibshirani, 2016]{ramdas2016fast}
Ramdas, A. and Tibshirani, R.~J. (2016).
\newblock Fast and flexible {ADMM} algorithms for trend filtering.
\newblock {\em Journal of Computational and Graphical Statistics},
  25(3):839--858.

\bibitem[Ravikumar et~al., 2009]{ravikumar2009sparseadd}
Ravikumar, P., Lafferty, J.~D., Liu, H., and Wasserman, L. (2009).
\newblock Sparse additive models.
\newblock {\em Journal of the Royal Statistical Society: Series B (Statistical
  Methodology)}, 71(5).

\bibitem[Rice and Wu, 2001]{rice2001nonparametric}
Rice, J.~A. and Wu, C.~O. (2001).
\newblock Nonparametric mixed effects models for unequally sampled noisy
  curves.
\newblock {\em Biometrics}, 57(1):253--259.

\bibitem[Ruppert, 2002]{ruppert2002}
Ruppert, D. (2002).
\newblock Selecting the number of knots for penalized splines.
\newblock {\em Journal of computational and graphical statistics},
  11(4):735--757.

\bibitem[Ruppert et~al., 2003]{ruppert2003semiparametric}
Ruppert, D., Wand, M.~P., and Carroll, R.~J. (2003).
\newblock {\em Semiparametric regression}.
\newblock Cambridge University Press, New York, NY.

\bibitem[Sadhanala and Tibshirani, 2017]{sadhanala2017}
Sadhanala, V. and Tibshirani, R.~J. (2017).
\newblock Additive models with trend filtering.
\newblock {\em arXiv preprint arXiv:1702.05037}.

\bibitem[Scheipl et~al., 2015]{scheipl2015functional}
Scheipl, F., Staicu, A.-M., and Greven, S. (2015).
\newblock Functional additive mixed models.
\newblock {\em Journal of Computational and Graphical Statistics},
  24(2):447--501.

\bibitem[Speed, 1991]{speed1991blup}
Speed, T. (1991).
\newblock Comment on ``{T}hat {BLUP} is a good thing: The estimation of random
  effects".
\newblock {\em Statistical science}, 6(1):42--44.

\bibitem[{Stan Development Team}, 2016]{rstan}
{Stan Development Team} (2016).
\newblock {RStan}: the \textsf{R} interface to {Stan}.
\newblock \textsf{R} package version 2.14.1.

\bibitem[Stein, 1981]{stein1981}
Stein, C.~M. (1981).
\newblock Estimation of the mean of a multivariate normal distribution.
\newblock {\em The Annals of Statistics}, 9(6):1135 -- 1151.

\bibitem[Tibshirani, 1996]{tibshirani1996}
Tibshirani, R. (1996).
\newblock Regression shrinkage and selection via the lasso.
\newblock {\em Journal of the Royal Statistical Society: Series B
  (Methodological)}, 58(1):267 -- 288.

\bibitem[Tibshirani et~al., 2005]{tibshirani2005sparsity}
Tibshirani, R., Saunders, M., Rosset, S., Zhu, J., and Knight, K. (2005).
\newblock Sparsity and smoothness via the fused lasso.
\newblock {\em Journal of the Royal Statistical Society: Series B (Statistical
  Methodology)}, 67(1):91--108.

\bibitem[Tibshirani, 2014a]{tibshirani2014adaptive}
Tibshirani, R.~J. (2014a).
\newblock Adaptive piecewise polynomial estimation via trend filtering.
\newblock {\em The Annals of Statistics}, 42(1):285--323.

\bibitem[Tibshirani, 2014b]{tibshirani2014adaptiveSupp}
Tibshirani, R.~J. (2014b).
\newblock Supplement to ``adaptive piecewise polynomial estimation via trend
  filtering".

\bibitem[Tibshirani and Taylor, 2012]{tibshirani2012degrees}
Tibshirani, R.~J. and Taylor, J. (2012).
\newblock Degrees of freedom in lasso problems.
\newblock {\em The Annals of Statistics}, (2):1198--1232.

\bibitem[Wahba, 1990]{wahba1990spline}
Wahba, G. (1990).
\newblock {\em Spline models for observational data}.
\newblock Society for industrial and applied mathematics, Philadelphia, PA.

\bibitem[Wang, 1998]{wang1998mixed}
Wang, Y. (1998).
\newblock Mixed effects smoothing spline analysis of variance.
\newblock {\em Journal of the Royal Statistical Society: Series B (Statistical
  Methodology)}, 60(1):159--174.

\bibitem[Wang et~al., 2014]{wang2014}
Wang, Y.-X., Smola, A., and Tibshirani, R. (2014).
\newblock The falling factorial basis and its statistical applications.
\newblock In {\em International Conference on Machine Learning}, pages
  730--738.

\bibitem[Wood, 2004]{wood2004stable}
Wood, S.~N. (2004).
\newblock Stable and efficient multiple smoothing parameter estimation for
  generalized additive models.
\newblock {\em Journal of the American Statistical Association},
  99(467):673--686.

\bibitem[Wood, 2006]{wood2006generalized}
Wood, S.~N. (2006).
\newblock {\em Generalized additive models: an introduction with R}.
\newblock Chapman and Hall/CRC, Boca Raton, FL.

\bibitem[Wood, 2011]{wood2011}
Wood, S.~N. (2011).
\newblock Fast stable restricted maximum likelihood and marginal likelihood
  estimation of semiparametric generalized linear models.
\newblock {\em Journal of the Royal Statistical Society: Series B (Statistical
  Methodology)}, 73(1):3--36.

\bibitem[Wood et~al., 2015]{wood2015generalized}
Wood, S.~N., Goude, Y., and Shaw, S. (2015).
\newblock Generalized additive models for large data sets.
\newblock {\em Journal of the Royal Statistical Society: Series C (Applied
  Statistics)}, 64(1):139--155.

\bibitem[Wood et~al., 2016]{wood2016smooth}
Wood, S.~N., Pya, N., and S{\"a}fken, B. (2016).
\newblock Smoothing parameter and model selection for general smooth models.
\newblock {\em Journal of the American Statistical Association}, 111(516):1548
  -- 1575.

\bibitem[Yuan and Lin, 2006]{yuan2006}
Yuan, M. and Lin, Y. (2006).
\newblock Model selection and estimation in regression with grouped variables.
\newblock {\em Journal of the Royal Statistical Society: Series B (Statistical
  Methodology)}, 68(1):49 -- 67.

\bibitem[Zhang et~al., 1998]{zhang1998}
Zhang, D., Lin, X., Raz, J., and Sowers, M. (1998).
\newblock Semiparametric stochastic mixed models for longitudinal data.
\newblock {\em Journal of the American Statistical Association}, 93(442):710 --
  719.

\bibitem[Zhao et~al., 2009]{zhao2009grouplasso}
Zhao, P., Rocha, G., and Yu, B. (2009).
\newblock The composite absolute penalties family for grouped and hierarchical
  variable selection.
\newblock {\em The Annals of Statistics}, 37(6A):3468 -- 3497.

\bibitem[Zou and Hastie, 2005]{zou2005}
Zou, H. and Hastie, T. (2005).
\newblock Regularization and variable selection via the elastic net.
\newblock {\em Journal of the Royal Statistical Society: Series B (Statistical
  Methodology)}, 67(2):301 -- 320.

\end{thebibliography}

\end{document}